\def\thefootnote{\fnsymbol{footnote}}
\newlength{\minitwocolumn}
\long\def\@makefntext#1{
\protect\noindent \hbox to 3.2pt {\hskip-.9pt  
$^{{\eightrm\@thefnmark}}$\hfil}#1\hfill}               %CAN BE USED 
\def\thefootnote{\fnsymbol{footnote}}
\def\@makefnmark{\hbox to 0pt{$^{\@thefnmark}$\hss}}    %ORIGINAL 
\def\ps@myheadings{\let\@mkboth\@gobbletwo
\def\@oddhead{\hbox{}
\rightmark\hfil\eightrm\thepage}   
\def\@oddfoot{}\def\@evenhead{\eightrm\thepage\hfil
\leftmark\hbox{}}\def\@evenfoot{}
\def\sectionmark##1{}\def\subsectionmark##1{}}
\font\eightrm=cmr8
\newtheorem{theorem}{Theorem}[section]
\newtheorem{definition}[theorem]{Definition}
\newtheorem{lemma}[theorem]{Lemma}
\newtheorem{proposition}[theorem]{Proposition}
\newenvironment{proof}[1][Proof]{\begin{trivlist}
\item[\hskip \labelsep {\bfseries #1}]}{\end{trivlist}}
\newcommand{\qed}{\nobreak \ifvmode \relax \else
      \ifdim\lastskip<1.5em \hskip-\lastskip
      \hskip1.5em plus0em minus0.5em \fi \nobreak
      \vrule height0.75em width0.5em depth0.25em\fi}
\def\delzero{\delta_0}
\newcommand\hQ{\mbox{\boldmath $Q$}}
\newcommand\gh{{\rm gh}}
\newcommand{\rd}{\overleftarrow{\partial}} 
\newcommand{\ld}{\overrightarrow{\partial}} 
\newcommand{\sbv}[2]{{\{{{#1},{#2}}\}}}
\newcommand{\ssbv}[2]{{\{{{#1},{#2}}\}}}
\newcommand{\cfun}[1]{\langle #1 \rangle}
\newcommand{\bracket}[2]{\langle #1\,,#2\rangle}
\newcommand{\floor}[1]{{\lfloor #1 \rfloor}}
\def\ba{\mbox{\boldmath $A$}}
\def\bq{\mbox{\boldmath $q$}}
\def\bp{\mbox{\boldmath $p$}}
\def\bbx{\mbox{\boldmath $x$}}
\def\bbxi{\mbox{\boldmath $\xi$}}
\def\bbq{\mbox{\boldmath $q$}}
\def\bbp{\mbox{\boldmath $p$}}
\def\bbe{\mbox{\boldmath $e$}}
\def\bbd{\mbox{\boldmath $d$}}
\def\bbF{\mbox{\boldmath $F$}}
\def\bomega{\mbox{\boldmath $\omega$}}
\def\brho{\mbox{\boldmath $\rho$}}
\def\bR{\mbox{\boldmath $R$}}
\def\bC{\mbox{\boldmath $C$}}
\def\bZ{\mbox{\boldmath $Z$}}
\def\bPhi{\mbox{\boldmath $\Phi$}}
\def\bphi{\mbox{\boldmath $\phi$}}
\def\bvarphi{\mbox{\boldmath $\varphi$}}
\newcommand{\calB}{{\cal B}}
\newcommand{\calD}{{\cal D}}
\newcommand{\calL}{{\cal L}}
\newcommand{\calM}{{\cal M}}
\newcommand{\calO}{{\cal O}}
\newcommand{\calX}{{\cal X}}
\newcommand{\calMC}{{\cal MC}}
\newcommand{\Hom}{{\rm Hom}}
\newcommand{\Q}{{\kern.24em\vrule width.04em height1.4ex%
                 depth-.05ex\kern-.26em\mathsf Q}}
\newcommand{\C}{{\kern.24em\vrule width.04em height1.4ex%
                 depth-.05ex\kern-.26em\mathsf C}}
\newcommand{\Map}{{\rm Map}}
\newcommand{\ev}{{\rm ev}}
\def\bTheta{\mbox{$\widehat{\Theta}$}}
\def\bvartheta{\mbox{$\widehat{\vartheta}$}}
\newcommand{\beq}{\begin{equation}}
\newcommand{\eeq}{\end{equation}}
\newcommand{\bea}{\begin{eqnarray*}}
\newcommand{\eea}{\end{eqnarray*}}
\newcommand{\beqa}{\begin{eqnarray}}
\newcommand{\eeqa}{\end{eqnarray}}
\newcommand{\be}{\boldsymbol{e}}
\newcommand\eeqref{Eq.~\eqref}
\newcommand\qone{{\eta}}
\def\bqone{\mbox{\boldmath $\eta$}}
\newcommand\proj{{pr}}
\def\rc#1{\textcolor{red}{#1}}
\newcommand{\citenum}{\cite}
\begin{document}
\renewcommand{\thefootnote}{\alph{footnote}}

%%%%%%%%%%%%%%%%%%%%%%%%%%%%%%%%%%%%%%%%%%%%%%%%%%%%%%%%%%%%%%%%%%
%%%%%%%%%%%%%%%%%%%%%%%% Title %%%%%%%%%%%%%%%%%%%%%%%%%%%%%%%%%%%
%%%%%%%%%%%%%%%%%%%%%%%%%%%%%%%%%%%%%%%%%%%%%%%%%%%%%%%%%%%%%%%%%%

\baselineskip 0.7cm

\begin{titlepage}
\begin{flushright}
%MISC-2011-
\end{flushright}

\vskip 1.35cm
\begin{center}
{\Large \bf
Lectures on AKSZ Sigma Models for Physicists 
}
\vskip 1.2cm
Noriaki IKEDA%$^1$%
\footnote{E-mail:\ 
nikeda@se.ritsumei.ac.jp
%ikeda@yukawa.kyoto-u.ac.jp
}
\vskip 0.4cm
{\it 
Department of Mathematical Sciences,
Ritsumeikan University \\
Kusatsu, Shiga 525-8577, Japan 
}

\vskip 0.4cm

\today

\vskip 1.5cm

\begin{abstract}
This is an introductory review of topological field theories (TFTs)
called AKSZ 
%Alexandrov-Kontsevich-Schwarz-Zaboronsky (AKSZ)
sigma models.
The AKSZ construction is a mathematical formulation for the construction and 
analysis of a large class of TFTs, inspired by the Batalin-Vilkovisky formalism of gauge theories.
We begin by considering a simple two-dimensional topological field theory and explain the ideas of the AKSZ sigma models.
This construction is then generalized and leads to a mathematical formulation of a general topological sigma model.
We review the mathematical objects, such as algebroids and supergeometry, that are used in the analysis of general gauge structures.
The quantization of the Poisson sigma model is presented 
as an example of a quantization of an AKSZ sigma model.
\end{abstract}
\end{center}
\end{titlepage}

\tableofcontents

\setcounter{page}{2}

%%%%%%%%%%%%%%%%%%%%%%%%%%%%%%%%%%%%%%%%%%%%%%%%%%%%%%%%%%%%%%%%%%
%%%%%%%%%%%%%%%%%%%%%%%% Article %%%%%%%%%%%%%%%%%%%%%%%%%%%%%%%%%
%%%%%%%%%%%%%%%%%%%%%%%%%%%%%%%%%%%%%%%%%%%%%%%%%%%%%%%%%%%%%%%%%%

\rm

%%%%%%%%%%%%%%%%%%%%%%%%%%%%%%%%%%%%%%%%%%%%%%%%%%%%%%%%%%%%%%%%%%%%%
%%%%%%%%%%%%%%%%%%%%%%%%%%%%%%   SEC  1    %%%%%%%%%%%%%%%%%%%%%%%%%%
%%%%%%%%%%%%%%%%%%%%%%%%%%%%%%%%%%%%%%%%%%%%%%%%%%%%%%%%%%%%%%%%%%%%%
\section{Introduction}
\noindent
This lecture note will present basics of so-called \textit{AKSZ 
(Alexandrov-Kontsevich-Schwarz-Zaboronsky)
sigma models}.
Though there are several reviews 
which present mathematical aspects of AKSZ construction and AKSZ sigma models
\cite{Kotov:2010wr, Cattaneo:2010re, Qiu:2011qr},
in this lecture, we will introduce these theories by 
the physics language and explain mathematical foundations
gently.
%this lecture, we discuss the physical construction and
%analysis of AKSZ sigma models.
Thus, mathematical rigor will sometimes be sacrificed.

An AKSZ sigma model is a type of topological field theory (TFT).
TFT was proposed by Witten
\cite{Witten:1988ze, Witten:1988xj} as a special version of a 
quantum field theory.
After that, a mathematical definition has been provided
\cite{Atiyah:1989vu}.
Apart from it, this theory has been formulated by the 
(BRST and)
%Becchi-Rouet-Stora-Tyutin  (BRST) and 
Batalin-Vilkovisky (BV) formalism 
\cite{Batalin:jr, Schwarz:1992nx, Schwarz:1992gs} 
of gauge theories.
The \textsl{AKSZ construction} 
\cite{Alexandrov:1995kv, Cattaneo:2001ys}
is a reformulation of a TFT in this direction.
%mathematical construction of a TFT 
Purpose of the latter formulation is
to analyze classical and quantum aspects
of topological field theories by 
the action principle and the physical quantization technique,
which is fundamental to the formulation of a gauge theory,
and to apply them to various physical and mathematical problems.

The AKSZ construction is a powerful formulation
since a large class of TFTs are constructed and unified by 
this construction.
These include known TFTs, such as the 
A-model, the B-model \cite{Witten:1991zz},
%A- and B- models, 
BF theory \cite{Horowitz:1989ng},
Chern-Simons theory \cite{Witten:1988ze},
topological Yang-Mills theory \cite{Witten:1988ze},
Rozansky-Witten theory \cite{Rozansky:1996bq},
the Poisson sigma model \cite{Ikeda:1993aj, Ikeda:1993fh, Schaller:1994uj},
the Courant sigma model \cite{Ikeda:2002wh, Hofman:2002rv, Roytenberg:2006qz},
and Schwarz-type TFTs \cite{Schwarz:1992nx, Schwarz:1992gs}.
Moreover, we find that the AKSZ sigma models contain more 
TFTs, which, for instance, have the structure of Lie algebroids, 
Courant algebroids, homotopy Lie algebras, or their higher generalizations.

We start this lecture note by explaining the simplest example
to introduce idea of the AKSZ construction,
which is the two-dimensional abelian BF theory,
First, we express this theory using the BV formalism.
Next, deformation theory is used to find
the most general consistent interaction term that
satisfies physical properties.
As a result, we obtain the Poisson sigma model, an important nontrivial two-dimensional topological sigma model of AKSZ type.
As another example, we also consider the BV formalism of an abelian BF theory 
in higher dimensions.
From the analysis of these models,
we identify the mathematical components of the AKSZ construction,
a QP-manifold.
% and develop a mathematical framework for the formulation of TFTs.

%From this analysis, three mathematical ingredients of the AKSZ 
%construction are identified.
%The mathematical basis for the AKSZ construction 
In the next section, we explain the basic mathematical notion, 
a \textsl{QP-manifold}, a differential graded symplectic manifold.
It is a triple consisting of a graded manifold, a graded Poisson 
structure, and a coboundary operator called homological vector field.
%Typical nontrivial QP-manifolds are 
%Lie algebroids on a cotangent bundle $T^*M$ and their variations.

%these Lie algebroids and variations.

%Lie algebroids on a cotangent bundle $T^*M$ and their variations 
%are naturally introduced as 
%typical nontrivial QP-manifolds; this is done
%in order to discuss the gauge symmetry structures of AKSZ sigma models.

Based on the QP-manifold structure, we construct
a sigma model as a map between two graded manifolds, from $\calX$ to $\calM$,
which is the AKSZ construction.
We discuss that structures of the target space and gauge symmetries
of AKSZ sigma models are derived from this QP-manifold.
We analyze the gauge symmetries of general forms of AKSZ sigma models,
which are deformations of abelian BF theories, and 
we will find that
the infinitesimal gauge symmetry algebras of these models are 
not Lie algebras.
This analysis leads us to the introduction of Lie algebroids 
and their generalizations as gauge symmetries of AKSZ sigma models.
The finite versions 
of these gauge symmetries
%algebras 
corresponding to Lie groups
are groupoids.
These mathematical objects which are not so familiar to
physicists are explained by using local coordinate expressions.

In the last part of this lecture note, two important applications 
of AKSZ sigma models are discussed.
One is the derivation of topological strings. 
The A-model and the B-model are derived by gauge-fixing 
%The A- and B-models are derived by the gauge fixing of 
AKSZ sigma models in two dimensions
\cite{Alexandrov:1995kv}.
The other application 
%that we consider 
is the deformation quantization on a Poisson manifold.
The quantization of the Poisson sigma model on a disc provides
a star product formula on the target space
\cite{Kontsevich:1997vb, Cattaneo:1999fm}.
The second application is also 
a prototype of the quantization of AKSZ sigma models; although
such quantizations
have been successfully carried out in only a few cases,
this example is one such case.

This lecture note is organized as follows. 
In Section 2, the BV formalism of 
an abelian BF theory in two dimensions is considered 
and an interaction term is determined by deformation theory.
This theory is reconstructed by the superfield formalism.
In Section 3, an abelian BF theory in higher dimensions 
is constructed by the BV formalism and reformulated by 
the superfield formalism.
In Section 4, a QP-manifold, which is the mathematical object for the AKSZ 
construction is defined.
In Section 5,
%\ref{examplessection} 
important examples are listed.
In Section 6, the AKSZ construction is defined and explained.
In Section 7, we use deformation theory to obtain general consistent 
interaction terms for general AKSZ sigma models.
In Section 8, we express an AKSZ sigma model in local coordinates.
In Section 9, we provide some examples of AKSZ sigma models.
In Section 10, we analyze AKSZ sigma models on an open manifold.
In Sections 11 and 12, we discuss two important applications, review the derivation of the A-model and the B-model and present a deformation 
quantization on a Poisson manifold 
from the quantization of the Poisson sigma model.
Section 13 is devoted to discussing related works and areas of future investigation.

%%%%%%%%%%%%%%%%%%%%%%%%%%%%%%%%%%%%%%%%%%%%%%%%%%%%%%%%%%%%%%%%%%%%%
%%%%%%%%%%%%%%%%%%%%%%%%%%%%%%%  SEC. 2 %%%%%%%%%%%%%%%%%%%%%%%%%%%%%%
%%%%%%%%%%%%%%%%%%%%%%%%%%%%%%%%%%%%%%%%%%%%%%%%%%%%%%%%%%%%%%%%%%%%
\section{Topological Field Theory in Two Dimensions}\label{sectoin2PSM}
\noindent
We begin by explaining the concept of the AKSZ construction by providing a simple example.
We consider an abelian BF theory in two dimensions 
and discuss its Batalin-Vilkovisky formalism.
A consistent interaction term is introduced by using deformation theory.
%and we use 
%in the BV formalism.
Finally, we present a mathematical construction of 
its interacting theory by using the AKSZ construction.

%%%%%%%%%%%%%%%%%%%%%%%%%%%%%%%%%%%%%%%%%%%%%%%%%%%%%%%%%%%%%%%%%%%%%%
\subsection{Two-Dimensional Abelian BF Theory}
\noindent
The simplest topological field theory is a
two-dimensional abelian BF theory.
Let $\Sigma$ be a manifold in two dimensions
with a local coordinate $\sigma^{\mu}$ ($\mu = 0, 1$) and 
suppose that $\Sigma$ has no boundary.
Here, we will take the Euclidean signature.

Let $A_{\mu i}(\sigma)$ be a gauge field and let
$\phi^i(\sigma)$ be a scalar field,
where $i = 1, 2, \cdots, d$ is an index
%are indices 
on $d$-dimensional target space.
The action is as follows:
\begin{eqnarray}
S_A
= - \frac{1}{2} \int_{\Sigma} d^2 \sigma \ \epsilon^{\mu\nu} 
F_{0 \mu\nu i} \phi^i
= \int_{\Sigma} d^2 \sigma \ \epsilon^{\mu\nu} A_{\mu i} 
\partial_{\nu} \phi^i,
\nonumber
\end{eqnarray}
where $F_{0 \mu\nu i} = \partial_{\mu} A_{\nu i} - 
\partial_{\nu} A_{\mu i}$ is the field strength.
Note that the boundary integral vanishes.
The gauge symmetry of this theory is $U(1)$:
\begin{eqnarray}
\delta_0 A_{\mu i} &=& \partial_{\mu} \epsilon_i,
\qquad
\delta_0 \phi^i = 0,
\nonumber
\end{eqnarray}
where $\epsilon_i(\sigma)$ is a gauge parameter.

Let us consider the following problem.
We add terms to $S_A$ 
%by  an action $S_0$ 
and deform the gauge symmetry $\delta_0$ as follows:
\begin{eqnarray}
&& S = S_A + S_I, 
\nonumber \\
&& \delta = \delta_0 + \delta_1.
\nonumber
\end{eqnarray}
We search for the consistent $S$ and $\delta$.
The new action $S$ and
the new modified gauge symmetry $\delta$ 
must satisfy the following 
two consistency conditions:
The action is gauge invariant, that is, $\delta S=0$;
and the gauge symmetry algebra is closed, at least under 
the equations of motion,
$[\delta_{\epsilon}, \delta_{\epsilon^{\prime}} ]
\approx
\delta_{[{\epsilon}, \epsilon^{\prime}]}
% + \mbox{(equatins of motion)}
$.
Note that $\delta S=0$ must be satisfied without the equations of motion, 
but it is sufficient to satisfy the closedness condition for the gauge algebra,
$[\delta_{\epsilon}, \delta_{\epsilon^{\prime}} ]=\delta_
{[{\epsilon}, \epsilon^{\prime}]}
% + \mbox{(equations of motion)}
$
along the orbit 
of the equations of motion.
%The second condition is also expressed
%$\{\delta_{\epsilon}, \delta_{\epsilon^{\prime}} \} \approx \delta_
%{[{\epsilon}, \epsilon^{\prime}]} 
%+ \mbox{(equations of motion)}
%$, where $\approx$ means that the equality is satisfied 
%up to the equations of motion.

In order to construct a consistent field theory, 
physical conditions are imposed on $S$: It
is required to be 
diffeomorphism invariant, local and unitary.
Two actions are equivalent if they become
classically the same action when there is local replacement of the fundamental fields.
That is,
if two actions coincide, $\tilde{S}(\tilde{\Phi}) = S(\Phi)$,
under a local redefinition of the fields, $\tilde{\Phi} = f(\Phi)$,
then they are equivalent.
Moreover, we regard 
two theories as equivalent if they have the same gauge symmetry, 
i.e. ${\delta}_1 = 0$.
As required by a local field theory, 
we have a Lagrangian $\calL$ such that
$S = \int_{\Sigma} d^2 \sigma \calL$,
where $\calL$ is a function of the local fields.
We assume that $\calL$ is at most a polynomial 
with respect to a gauge field $A_{\mu i}$.
%Renormalizability is not assumed.
%
%$S_0$ is a kinetic term of a topological field theory of
%BF type.

The problem is to determine the most general $S_I$ under the 
assumptions discussed above.
In order to unify the conditions 
$\delta S=0$ and 
$[\delta_{\epsilon}, \delta_{\epsilon^{\prime}} ]=\delta_
{[{\epsilon}, \epsilon^{\prime}]} + \mbox{(equations of motion)}$,
we use the BV formalism to formulate the theory.
This is the most general method for obtaining a consistent gauge theory.

Let us apply the BV formalism 
to this abelian BF theory
\cite{Henneaux:1992ig, Gomis:1994he}.
First, a gauge parameter $\epsilon_i$ is replaced 
by the Faddeev-Popov (FP) ghost $c_i$,
which is a Grassmann-odd scalar field.
\footnote{This is the Faddeev-Popov method
of the quantization of a gauge theory.}
The \textsl{ghost numbers} of the fields
$\Phi \in \{A_{\mu i}, \phi^i, c_i \}$,
$\gh \, \Phi$, are defined 
as $\gh \, A_{\mu i}=\gh \, \phi^i=0$ and $\gh \, c_i=1$.
The gauge transformation $\delta_0$ is changed to
a BRST transformation such that $\delta_0^2=0$
by replacement of the gauge parameter with the FP ghost. 
This condition imposes
$\delta_0 c_i=0$.

For each of the fields $\Phi$,
% = (A_{\mu i}, \phi^i, c_i)$
we introduce an antifield $\Phi^* \in \{A^{*\mu i}, \phi^*_i, c^{*i}\}$.
Compared to the corresponding field, the antifield has the opposite Grassmann properties but the same spin.
The ghost numbers of the antifields are defined by the equation
$\gh \, \Phi + \gh \, \Phi^* = -1$. For ghost number $-1$,
$A^{*\mu i}$ is a vector and $\phi_i^*$ is a scalar field.
$c^{*i}$ is a scalar field of ghost number $-2$.

%\vskip6mm

%%%%%%%%%%%%%%%%%%%%%%%%%%%%%%%%%%%%%%%%%%%%%%%%%%%%%%%
%%%%%%%%%%%%%%%%%%%%%%%%%%%%%%%%%%%%%%%%%%%%%%%%%%%%%%%

\begin{table}[h]%{c}%{0.55\linewidth}
\begin{center}

         		\caption{Ghost number and form degree of fields
and antifields}
		\label{tab:load}
         		\begin{tabular}{|l|rrrr|}
			\hline
			form degree \textbackslash ghost number 
& $-2$ & $-1$ & $0$ & $1$ \\
			\hline
			$0$	& $c^{*i}$ & $\phi_{i}^{*}$ 
& $\phi^i$ & $c_i$	\\
			$1$	&  & $A^{*\mu i}$ & $A_{\mu i}$ & 	\\
			\hline
		\end{tabular}
\end{center}
         \end{table}

Next, 
an odd Poisson bracket, 
called the \textsl{antibracket}, is introduced
as 
$\sbv{\Phi(\sigma)}{\Phi^*(\sigma^{\prime})} 
= - \sbv{\Phi^*(\sigma^{\prime})}{\Phi(\sigma)}
= \delta^2(\sigma-\sigma^{\prime})
$. 
It is written as
\begin{equation}%{eqnarray}
\sbv{F}{G}  \equiv 
%\left\langle 
\sum_{\Phi} \int_{\Sigma} d^2 \sigma 
\left(F \frac{\rd}{\partial \Phi(\sigma)}
\frac{\ld }{\partial \Phi^*(\sigma^{\prime})}  G
%\right\rangle 
- 
%\left\langle 
F \frac{\rd}{\partial \Phi^*(\sigma)}
\frac{\ld}{\partial \Phi(\sigma^{\prime})} G
\right)
\delta^2(\sigma-\sigma^{\prime}),
%\right\rangle.
%\nonumber \\
%\sbvike{F}{G} 
%& \equiv &
%\left\langle 
%\sum_{\Phi} F \frac{\rd}{\partial \Phi}
%\frac{\ld }{\partial \Phi^*}  G
%\right\rangle 
%- 
%\left\langle 
%F \frac{\rd}{\partial \Phi^*}
%\frac{\ld}{\partial \Phi} G
%\right\rangle.
%\nonumber
\label{2dantibracket}
\end{equation}%{eqnarray}
%}
where the differentiation is the functional differentiation, 
%where the derivative denotes functional derivative, 
%that is,
and 
$F \frac{\rd}{\partial \Phi(\sigma)} = 
(-1)^{(\gh F - \gh \Phi)(\gh \Phi)} 
\frac{\partial F}{\partial \Phi(\sigma)}$
 denotes right derivative and 
$\frac{\ld }{\partial \Phi^*(\sigma^{\prime})} F = 
\frac{\partial F}{\partial \Phi^*(\sigma^{\prime})}$
denotes left derivative.
The antibracket is graded symmetric and it satisfies the
graded Leibniz rule
and the graded Jacobi identity:
\begin{eqnarray}
&& \sbv{F}{G} = -(-1)^{({\gh F} + 1)({\gh G} + 1)} \sbv{G}{F},
\nonumber \\
&& \sbv{F}{G  H} = \sbv{F}{G} H
+ (-1)^{({\gh F} + 1){\gh G}} G \sbv{F}{H},
\nonumber \\
&& \sbv{F  G}{H} = F \sbv{G}{H}
+ (-1)^{{\gh G}({\gh H} + 1)} \sbv{F}{H} G,
\nonumber \\
&& (-1)^{({\gh F} + 1)({\gh H} + 1)} \sbv{F}{\sbv{G}{H}}
+ {\rm cyclic \ permutations} 
%\nonumber \\
%&&
= 0,
\nonumber
\end{eqnarray}
where $F, G$, and $H$ are functions of $\Phi$ and $\Phi^*$.

Finally, the BV action $S^{(0)}$ is constructed as follows:
\begin{eqnarray}
S^{(0)} = S_{A} + (-1)^{\gh \Phi} \int_{\Sigma} \Phi^* \delta_0 \Phi
+ O(\Phi^{*2}),
\nonumber
%\label{deformexpansion}
%\nonumber
\end{eqnarray}
where $O(\Phi^{*2})$ is determined order by order 
to satisfy $\sbv{S^{(0)}}{S^{(0)}}=0$, which
is called the classical master equation.
In the abelian BF theory, the BV action is defined by adding ghost terms
as follows:
\begin{eqnarray}
S^{(0)} = \int_{\Sigma} d^2 \sigma \epsilon^{\mu\nu} A_{\mu i} 
\partial_{\nu} \phi^i
+ \int_{\Sigma} d^2 \sigma A^{* \nu i} 
\partial_{\nu} c_{i}, 
%+ \int_{\Sigma} d^2 \sigma \epsilon^{\mu\nu} A_{\mu}^{* i} 
%\partial_{\nu} c_{i} 
\nonumber
\end{eqnarray}
and $O(\Phi^{*2})=0$.
It is easily confirmed that $S^{(0)}$ satisfies
the classical master equation.

The BRST transformation in the BV formalism 
is
\begin{eqnarray}
\delta_0 F[\Phi, \Phi^*] = \sbv{S^{(0)}}{F[\Phi, \Phi^*]},
\nonumber
\end{eqnarray}
which coincides with the gauge transformation on fields $\Phi$.
The explicit BRST transformations are 
\begin{eqnarray}
&& \delta_0 A_{\mu i} = \partial_{\mu} c_i,
%\nonumber \\
\qquad 
\delta_0 A^{* \mu i } =\epsilon^{\mu\nu} \partial_{\nu} \phi^i,
\nonumber \\
&& \delta_0 \phi^{*}{}_i = \epsilon^{\mu\nu} \partial_{\mu} A_{\nu i},
%\nonumber \\
\qquad \delta_0 c^{* i} = - \partial_{\mu} A^{* \mu i},
\label{2brst}
\end{eqnarray}
and zero for all other fields.
The classical master equation, $\sbv{S^{(0)}}{S^{(0)}}=0$, guarantees two 
consistency conditions:
gauge invariance of the action and 
closure of the gauge algebra.
Gauge invariance of the action is proved as 
$\delta_0 S^{(0)}=\sbv{S^{(0)}}{S^{(0)}}=0$.
Closure of the gauge symmetry algebra is proved as
$\delta_0^2 F =\sbv{S^{(0)}}{\sbv{S^{(0)}}{F}}  
= \frac{1}{2} \sbv{\sbv{S^{(0)}}{S^{(0)}}}{F} = 0$
by using the Jacobi identity.

%%%%%%%%%%%%%%%%%%%%%%%%%%%%%%%%%%%%%%%%%%%%%%%%%%%%%%%%%%%%%%%%%%%%%%%%%%%%%%
\subsection{Deformation of Two-Dimensional Abelian BF Theory }
\noindent
The deformation theory of a gauge theory is a systematic method
for obtaining a new gauge theory from a known one 
\cite{Barnich:1993vg, Barnich:1994db, Henneaux:1997bm}.
Deformation theory within the BV formalism locally determines 
all possible nontrivial consistent interaction terms $S_I$.

We consider the deformation of 
a BV action $S^{(0)}$ to $S$ as follows:
\begin{eqnarray}
S = S^{(0)} + g S^{(1)} + g^2 S^{(2)} + \cdots, 
\label{deformexpansion}
\end{eqnarray}
under the fixed antibracket $\sbv{-}{-}$, where 
$g$ is a deformation parameter.
Consistency requires the classical master equation, $\sbv{S}{S}=0$, on 
the resulting action $S$.
Moreover, we require an equivalence relation, that is,
$S^{\prime}$ is equivalent to $S$
if and only if
%$S^{\prime}$ is different from $S$ in only an exact term, 
$S^{\prime} = S + \sbv{S}{T}$,
where $T$ is the integral of a local term in the fields and antifields.
This condition corresponds to the physical equivalence
discussed in the previous subsection.
$S^{(n)}$ $(n=1, 2, \cdots)$ is determined order by order
by solving the expansions of the classical master equation 
with respect to $g^n$.
Invariance, locality and unitarity (the physical conditions discussed in the previous subsection) are required in order for the resulting action
to be physically consistent.
From these requirements, $S$ is diffeomorphism invariant on $\Sigma$,
it is the integral of a local function (Lagrangian) 
$\calL$ on $\Sigma$,
and it has ghost number $0$.

We substitute equation (\ref{deformexpansion}) into
the classical master equation $\sbv{S}{S}=0$.
At order $g^0$, we obtain $\sbv{S^{(0)}}{S^{(0)}}=0$.
This equation is already satisfied,
since it is the classical master equation 
of the abelian BF theory.

At order $g^1$, we obtain 
\begin{eqnarray}
\sbv{S^{(0)}}{S^{(1)}}=\delta_0 S^{(1)}=0.
\label{1storderofg}
\end{eqnarray}
From the assumption of locality, $S^{(1)}$ is an integral of 
a $2$-form $\calL^{(1)}$ such that
%of the fields 
%and their derivatives
$S^{(1)} = \int_{\Sigma} \calL^{(1)}$.
Thus, equation (\ref{1storderofg}) requires that 
$\delta_0 \calL^{(1)}$ be a total derivative.
Then, the following equations 
are obtained by 
repeating the same arguments for the descent terms:
\begin{eqnarray}
&& \delta_0 {\calL}^{(1)} + d a_1 = 0,
\nonumber \\
&& \delta_0 a_1 + d a_0 = 0,
\nonumber \\
&& \delta_0 a_0 = 0,
\nonumber 
\end{eqnarray}
where $a_1$ is a $1$-form of ghost number $1$,
$a_0$ is a $0$-form of ghost number $2$. 
$a_0$ can be determined as
\begin{eqnarray}
a_0 = - \frac{1}{2} f^{ij}(\phi) c_i c_j,
\nonumber
\end{eqnarray}
up to $\delta_0$ exact terms.
Here, $f^{ij}(\phi)$ is an arbitrary function of $\phi$ such that
$f^{ij}(\phi)=-f^{ji}(\phi)$.
Note that terms including the metric on $\Sigma$
and terms including differentials $\partial_{\mu}$
can be dropped,
since those terms are $\delta_0$ exact up to total derivatives.
If we solve the descent equation, then
\begin{eqnarray}
a_1 = f^{ij} A_i c_j
- \frac{1}{2} \frac{\partial f^{ij}}{\partial \phi^k}
A^{+k} c_i c_j,
\nonumber
\end{eqnarray}
up to BRST exact terms, and finally
${\cal L}^{(1)}$ is uniquely determined as
\begin{eqnarray}
{\cal L}^{(1)} 
&=&  
\frac{1}{2} f^{ij}
(A_i A_j - 2 \phi^{+}_i c_j)
+ \frac{\partial f^{ij}}{\partial \phi^k}
\left(\frac{1}{2} c^{+k} c_i c_j + A^{+k} A_i c_j  \right)
\nonumber \\
&& - \frac{1}{4} \frac{\partial^2 f^{ij}}{\partial \phi^k \partial \phi^l}
A^{+k} A^{+l} c_i c_j
%\nonumber
\label{BV1Poisson}
\end{eqnarray}
up to BRST exact terms \cite{Izawa:1999ib}.
Here,
$A_i \equiv d \sigma^{\mu} A_{\mu i}$, 
$A^{+i} \equiv d \sigma^{\mu} \epsilon_{\mu\nu} A^{*\nu i}$,
$\phi^{+}_i \equiv * \phi^{*}_{i} $, and
$c^{+i} \equiv * c^{*i}$, where
$*$ is the Hodge star on $\Sigma$.
From the definition of the BRST transformations, we have
\footnote{In the Lorentzian signature, 
the transformations of $A^{+}$ and $c^{+}$ 
have opposite sign.}
\begin{eqnarray}
&& \delta_0 A_{i} = d c_i,
%\nonumber \\
\qquad 
\delta_0 \phi^{+}{}_i = d A_{i},
\nonumber \\
&& \delta_0 A^{+ i } = - d \phi^i,
%\nonumber \\
\qquad \delta_0 c^{+ i} = d A^{+ i}.
\nonumber
%\label{2brst}
\end{eqnarray}

At order $g^2$, the master equation is
$\sbv{S^{(1)}}{S^{(1)}} + 2 \sbv{S^{(0)}}{S^{(2)}}=0$.
From the assumption of locality, 
$S^{(2)}$ is an integral of a local 
function $\calL^{(2)}$ of fields and antifields.
Since $\delta_0 (\Psi) \propto \partial_{\mu}(*)$ 
for all the fields and antifields up to $\delta_0$ exact terms, 
$\sbv{S^{(0)}}{S^{(2)}}= \int d \calL^{(2)} =0$
if there is no boundary term.
%Since we assume that $\Sigma$ does not have a boundary, 
%this equation is satisfied.
The condition $\sbv{S^{(0)}}{S^{(2)}} =0$ for $S^{(2)}$ is the same
as the condition for $S^{(1)}$.
This means that if $S^{(1)}$ is redefined as 
$S^{(1)\prime} = S^{(1)}+ g S^{(2)}$,
$S^{(2)}$ can be absorbed into $S^{(1)}$.
\footnote{This is because $S_0$ is the action of 
the abelian BF theory.
This equation will not be satisfied for a different $S_0$.}

Continuing this procedure order by order, 
we obtain all the consistency conditions:
\begin{eqnarray}
&& \sbv{S^{(1)}}{S^{(1)}} =0,
\nonumber \\
&& 
S^{(n)}=0, \qquad  \ (n=2, 3, \cdots).
\end{eqnarray}
Substituting equation (\ref{BV1Poisson}) into
$\sbv{S^{(1)}}{S^{(1)}} =0$, 
we obtain the following condition on $f^{ij}(\phi)$:
\begin{eqnarray}
\frac{\partial f^{ij}}{\partial \phi^m}(\phi) f^{mk}(\phi)
+ \frac{\partial f^{jk}}{\partial \phi^m}(\phi) f^{mi}(\phi)
+ \frac{\partial f^{ki}}{\partial \phi^m}(\phi) f^{mj}(\phi) =0.
\label{Jacobi}
%\nonumber
\end{eqnarray}

We have found the general solution for the deformation of 
the two-dimensional abelian BF theory \cite{Izawa:1999ib}.
The complete BV action is as follows:
\begin{eqnarray}
S &=& S^{(0)} + g S^{(1)}
%&=& \int  {\cal L}_1
\nonumber \\
% {\cal L}_1 
\!\!\! &=& \!\!\! \int_{\Sigma}\Bigg( A_i d \phi^i
+ A^{+ i} d c_{i}
+ g \biggl(
\frac{1}{2} f^{ij}
(A_i A_j - 2 \phi^{+}_i c_j)
\nonumber \\
&& 
+ \frac{\partial f^{ij}}{\partial \phi^k}
\left(\frac{1}{2} c^{+k} c_i c_j + A^{+k} A_i c_j \right)
%\quad
- \frac{1}{4} \frac{\partial^2 f^{ij}}{\partial \phi^k \partial \phi^l}
A^{+k} A^{+l} c_i c_j
\biggr)\Bigg).
%\nonumber
\label{BVPSM}
\end{eqnarray}
Here, $f^{ij}(\phi)$ satisfies identity
(\ref{Jacobi}).
If we set $\Phi^*=0$, 
we have the following non-BV action:
\begin{eqnarray}
S 
&=& 
%= 
\int_{\Sigma} d^2 \sigma \left(
\epsilon^{\mu\nu} A_{\mu i} 
\partial_{\nu} \phi^i
+ 
\frac{1}{2} \epsilon^{\mu\nu} f^{ij}(\phi) A_{\mu i} A_{\nu j}
\right)
\nonumber \\
%\end{eqnarray}
%
%{\color{red} 
%\begin{eqnarray}
&=& \int_{\Sigma}\left( A_{i} d \phi^i
+  \frac{1}{2} f^{ij}(\phi) A_i A_j\right),
\label{PSM}
\end{eqnarray}
%}
where $g$ is absorbed by redefinition of $f$.
This action is called the Poisson sigma model or nonlinear gauge theory in two dimensions.
\cite{Ikeda:1993aj, Ikeda:1993fh, Schaller:1994es, Schaller:1994uj}

\begin{theorem} 
The deformation of a two-dimensional abelian BF theory
is the Poisson sigma model. \cite{Izawa:1999ib}
\end{theorem}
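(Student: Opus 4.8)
The plan is to organize the deformation-theoretic computation carried out above into a clean order-by-order argument and then to recognize the resulting structure. I would start from the abelian BF theory, whose BV action $S^{(0)}$ already satisfies the classical master equation $\sbv{S^{(0)}}{S^{(0)}}=0$, deform it as in (\ref{deformexpansion}), and enforce $\sbv{S}{S}=0$ together with the physical requirements of locality, diffeomorphism invariance, unitarity, and vanishing ghost number. The heart of the matter is to show that these constraints force the first-order interaction $S^{(1)}$ to be governed by a single antisymmetric tensor $f^{ij}(\phi)$, that all higher-order deformations can be absorbed, and that the residual master equation becomes exactly the Jacobi identity (\ref{Jacobi}).

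First I would expand the master equation in powers of $g$. At order $g^0$ nothing new arises. At order $g^1$ one obtains $\delta_0 S^{(1)}=0$, so by locality $S^{(1)}=\int_\Sigma \calL^{(1)}$ with $\calL^{(1)}$ a BRST-closed $2$-form modulo total derivatives. I would then pass to the descent equations and solve them from the bottom up: the lowest datum $a_0$ must be a $\delta_0$-closed $0$-form of ghost number $2$, which by inspection is $a_0=-\tfrac{1}{2}f^{ij}(\phi)c_ic_j$ with $f^{ij}=-f^{ji}$, and lifting through $a_1$ then pins down $\calL^{(1)}$ as in (\ref{BV1Poisson}).

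Next I would treat the higher orders. At order $g^2$ the master equation reads $\sbv{S^{(1)}}{S^{(1)}}+2\sbv{S^{(0)}}{S^{(2)}}=0$; since $\delta_0$ acts as a total derivative on every field and antifield and $\Sigma$ has no boundary, the term $\sbv{S^{(0)}}{S^{(2)}}$ reduces to a boundary integral and vanishes, so $S^{(2)}$ is pure gauge and is absorbed by the redefinition $S^{(1)\prime}=S^{(1)}+gS^{(2)}$. Iterating gives $S^{(n)}=0$ for all $n\ge 2$ and leaves the single closure condition $\sbv{S^{(1)}}{S^{(1)}}=0$. Substituting (\ref{BV1Poisson}) and evaluating the antibracket yields precisely (\ref{Jacobi}), i.e.\ that $f^{ij}(\phi)$ is a Poisson bivector; setting $\Phi^*=0$ in $S=S^{(0)}+gS^{(1)}$ then collapses it to (\ref{PSM}), the Poisson sigma model, which completes the identification.

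The step I expect to be the main obstacle is the solution of the descent equations together with the uniqueness claim. One must determine the local BRST cohomology of $\delta_0$ in the relevant ghost number and form degree and show that every admissible solution is captured by the single antisymmetric function $f^{ij}(\phi)$ up to $\delta_0$-exact and $d$-exact pieces. In particular one has to verify that all candidate terms carrying a metric on $\Sigma$ or an explicit derivative $\partial_\mu$ are $\delta_0$-exact modulo total derivatives and hence drop out under the equivalence relation, so that the interaction is genuinely topological. This cohomological analysis, rather than the formal algebra of the master equation, is where the real content lies.
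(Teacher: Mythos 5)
Your proposal follows essentially the same route as the paper's Section 2.2: the order-by-order expansion of $\sbv{S}{S}=0$, the descent equations solved from $a_0=-\tfrac{1}{2}f^{ij}(\phi)c_ic_j$ upward, the absorption of $S^{(n)}$ for $n\ge 2$ using the absence of a boundary, and the identification of $\sbv{S^{(1)}}{S^{(1)}}=0$ with the Jacobi identity (\ref{Jacobi}). You also correctly locate the real content in the local BRST cohomology computation underlying the descent analysis, which the paper itself delegates to \cite{Izawa:1999ib}.
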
 
This model is considered to be the simplest nontrivial AKSZ sigma model.

%%%%%%%%%%%%%%%%%%%%%%%%%%%%%%%%%%%%%%%%%%%%%%%%%%%%%%%%%%%%%%%%%%%%%%%%%%%%%%
\subsection{Poisson Sigma Model}
\noindent
In this subsection, we list the properties of
the Poisson sigma model (\ref{PSM}).

In  special cases, the theory reduces to well-known theories.
If $f^{ij}(\phi)=0$, then
the theory reduces to the abelian BF theory:
\begin{eqnarray}
S_A
= \int_{\Sigma} d^2 \sigma \epsilon^{\mu\nu} A_{\mu i} 
\partial_{\nu} \phi^i
= \frac{1}{2} \int_{\Sigma} d^2 \sigma 
\epsilon^{\mu\nu} \phi^i \ F_{0 \mu\nu i}.
\nonumber
\end{eqnarray}
%where $F_{0 \mu\nu i} = \partial_\mu A_{\nu i} - \partial_\nu A_{\mu i}$
%is a field strength.
%
If $f^{ij}(\phi)$ is a linear function,
$f^{ij}(\phi)= f^{ij}{}_k \phi^k$, 
equation (\ref{Jacobi}) is equivalent to the Jacobi identity 
of the structure constants $f^{ij}{}_k$
of a Lie algebra.
The resulting theory is a nonabelian BF theory:
\begin{eqnarray}
S_{NA} = \int_{\Sigma} d^2 \sigma 
\left( \epsilon^{\mu\nu} A_{\mu i} 
\partial_{\nu} \phi^i
+ \frac{1}{2} \epsilon^{\mu\nu} 
f^{ij}{}_k \phi^k A_{\mu i} A_{\nu j} \right)
= \int_{\Sigma} d^2 \sigma \epsilon^{\mu\nu} \phi^i
F_{\mu\nu i},
\nonumber
\end{eqnarray}
where 
$F_{\mu\nu i} = \partial_\mu A_{\nu i} - \partial_\nu A_{\mu i}
+ f^{jk}{}_i A_{\mu j} A_{\nu k}$, 
and this action has the following gauge symmetry:
\begin{eqnarray}
&& \delta \phi^i = - f^{ij}{}_k \phi^k \epsilon_j,
\qquad
\delta A_{\mu i} = \partial_{\mu} \epsilon_i
+ f^{jk}{}_i A_{\mu j} \epsilon_k.
\nonumber
\end{eqnarray}
%Direct calculation shows that $\delta S_{NA}=0 
%\Longleftrightarrow
%f^{ij}{}_m f^{mk}{}_l
%+ f^{jk}{}_k f^{mi}{}_l
%+ f^{ki}{}_k f^{mj}{}_l =0
%$, which is the special case of equation $(\ref{Jacobi})$.

Next, we analyze the symmetry of the Poisson sigma model.
The Poisson sigma model has the following gauge symmetry:
\begin{eqnarray}
&& \delta \phi^i = - f^{ij}(\phi) \epsilon_j,
\nonumber \\
&& \delta A_{\mu i} = \partial_{\mu} \epsilon_i
+ %\frac{1}{2} 
\frac{\partial f^{jk}(\phi)}{\partial \phi^i}
A_{\mu j} \epsilon_k,
\label{PSMgaugesym}
\end{eqnarray}
under the condition given by equation (\ref{Jacobi}). In fact, we can directly prove that
the requirement $\delta S=0$ under the gauge transformation (\ref{PSMgaugesym})
is equivalent to equation (\ref{Jacobi}).
In the Hamiltonian formalism, the constraints are
\begin{eqnarray}
&& G^i = \partial_1 \phi^i + f^{ij}(\phi) A_{1 j},
\nonumber
\end{eqnarray}
%The constraints $G^i$ 
which satisfy the algebra
defined by the following Poisson bracket:
\begin{eqnarray}
&& \{G^i(\sigma), G^j(\sigma^{\prime}) \}_{PB}
= - \frac{\partial f^{ij}}{\partial \phi^k}
G^k(\sigma) \delta(\sigma - \sigma^{\prime}).
\nonumber
\end{eqnarray}
We can also derive the gauge transformation \eqref{PSMgaugesym}
generated by the charge constructed from the constraints $G^i(\sigma)$.
%\begin{eqnarray}
%\frac{\partial f^{ij}}{\partial \phi^m}(\phi) f^{mk}(\phi)
%+ \frac{\partial f^{jk}}{\partial \phi^m}(\phi) f^{mi}(\phi)
%+ \frac{\partial f^{kl}}{\partial \phi^m}(\phi) f^{mj}(\phi) =0.
%%\label{Jacobi}
%\end{eqnarray}
The gauge algebra has the following form:
\begin{eqnarray}
&& [\delta(\epsilon_1), \delta(\epsilon_2)]
\phi^i = \delta(\epsilon_3) \phi^i,
\nonumber \\
&& [\delta(\epsilon_1), \delta(\epsilon_2)]
A_{\mu i} = \delta(\epsilon_3) A_{\mu i} 
+ \epsilon_{1j} \epsilon_{2k} 
\frac{\partial f^{jk}}{\partial \phi^i \partial \phi^l}(\phi)
\epsilon_{\mu\nu} \frac{\delta S}{\delta A_{\nu l}},
\label{gaugealgebra}
\end{eqnarray}
where $\epsilon_1$ and $\epsilon_2$ are gauge parameters,
and $\epsilon_{3i} = 
\frac{\partial f^{jk}}{\partial \phi^i}(\phi)
\epsilon_{1j} \epsilon_{2k}$.
Equation (\ref{gaugealgebra})
for $A_{\mu i}$ shows that the gauge algebra is open.
Therefore, this theory cannot be quantized by 
the BRST formalism and it requires the BV formalism.

This model is a sigma model from a two-dimensional manifold 
$\Sigma$ to a target space $M$, based on a map
$\phi:\Sigma \longrightarrow M$.
If equation (\ref{Jacobi}) is satisfied on $f^{ij}(\phi)$, then
$\{ F(\phi), G(\phi) \}_{PB} \equiv 
f^{ij}(\phi) \frac{\partial F}{\partial \phi^i} 
\frac{\partial G}{\partial \phi^j}$ defines 
a Poisson bracket on a target space $M$, since
equation (\ref{Jacobi}) is 
the Jacobi identity of this Poisson bracket.
\footnote{In the notation used in this paper, $\sbv{-}{-}$ is the BV antibracket,
and $\sbv{-}{-}_{PB}$ is the usual Poisson bracket.}

Conversely, assume that the Poisson bracket on $M$ is given by
$\{ F(\phi), G(\phi) \}_{PB}
=
f^{ij}(\phi) \frac{\partial F}{\partial \phi^i} 
\frac{\partial G}{\partial \phi^j}$. Then, equation (\ref{Jacobi}) is derived from 
the Jacobi identity and
the action given in equation (\ref{PSM}), which is constructed by this $f^{ij}(\phi)$,
is gauge invariant.
%$f^{ij}(\phi)$ consistently , 
%which defines the Poisson sigma model.
From this property, the action $S$ is 
called the \textsl{Poisson} sigma model.

The algebraic structure of the gauge algebra is 
not a Lie algebra but a 
\textsl{Lie algebroid} over the cotangent bundle 
$T^*M$. \cite{Levin:2000fk}

\begin{definition}\label{LieAlgebroid}
%[Mackenzie '87]
A \textsl{Lie algebroid} over a manifold $M$ is a vector bundle 
$E \rightarrow M$ with a Lie algebra structure on the 
space of the sections $\Gamma(E)$ defined by the 
bracket $[e_1, e_2]$, for $e_1, e_2 \in \Gamma(E)$
and a bundle map (the anchor)
$\rho: E \rightarrow TM$ satisfying the following properties:
\begin{eqnarray}
&& 1, \ [\rho(e_1), \rho(e_2)] = \rho([e_1, e_2]),
%\ {\rm for} 
%\
%e_1, e_2 \in \Gamma(E),
\label{liealgdef1}
\\
&& 2, \
%\nonumber \\ && \qquad 
[e_1, F e_2] = F [e_1, e_2] + (\rho(e_1) F) e_2,
%\ {\rm for} \ e_1, e_2 \in \Gamma(E), 
%F \in C^{\infty}(M), 
\label{liealgdef2}
\end{eqnarray}
where 
$e_1, e_2 \in \Gamma(E)$, 
$F \in C^{\infty}(M)$
and
the bracket $[-,-]$ on
the r.h.s. of equation (\ref{liealgdef1}) is the 
Lie bracket on the vector fields.
\end{definition}

Let us consider the expressions of a Lie algebroid in local coordinates. 
Let $x^i$ be a local coordinate 
on a base manifold $M$, and let $e_a$ be a local basis on the fiber of $E$.
The two operations of a Lie algebroid
%in the definition %\ref{LieAlgebroid}
are expressed as
\begin{eqnarray}
&& \rho(e_a) F(x) = \rho{}^{i}{}_a(x) \frac{\partial F(x)}{\partial x^i}, 
%\nonumber\\
%&& 
\qquad
[e_a, e_b] = f{}^c{}_{ab}(x) e_c,
\nonumber 
\end{eqnarray}
where 
$i, j, \cdots$ are indices on $M$, 
$a, b, \cdots$ are indices of the fiber of the vector bundle $E$, and
$\rho{}^{i}{}_a(x)$ and $f{}^c{}_{ab}(x)$ are local functions.
%Note that $f_1$ and $f_2$ depend on a point $x$ on $M$.
Then, equations (\ref{liealgdef1}) and (\ref{liealgdef2})
% in the definition \ref{LieAlgebroid}
are written as
\begin{eqnarray}
&&
\rho{}^{m}{}_a \frac{\partial \rho{}^{i}{}_b}{\partial \phi^m} 
- \rho{}^{m}{}_b \frac{\partial \rho{}^{i}{}_a}{\partial \phi^m} 
+ \rho{}^{i}{}_c f{}^c{}_{ab} = 0,
\label{Liealgebroididentity1}
\\
&&
\rho{}^{m}{}_{[a} \frac{\partial f{}^d{}_{bc]}}{\partial \phi^m} 
+ f{}^d{}_{e[a} f{}^e{}_{bc]} = 0.
%\nonumber
\label{Liealgebroididentity2}
\end{eqnarray}
Here, we use the notation
$f{}^d{}_{e[a} f{}^e{}_{bc]} 
= f{}^d{}_{ea} f{}^e{}_{bc} 
+ f{}^d{}_{eb} f{}^e{}_{ca} 
+ f{}^d{}_{ec} f{}^e{}_{ab} 
$.
For the cotangent bundle $E = T^*M$,
the indices on the fiber $a, b, \cdots$ 
run over the same range as the indices $i, j, \cdots$.
We can take $\rho{}^{ij}(\phi) = f^{ij}(\phi)$ and
$f{}_i{}^{jk}(\phi) = \frac{\partial f^{jk}}{\partial \phi^i}(\phi)$.
Substituting these equations into equation (\ref{Liealgebroididentity2}),
we obtain the Jacobi identity $(\ref{Jacobi})$.
This special Lie algebroid is called the Poisson Lie algebroid.

The action given by equation (\ref{PSM}) is unitary, and the fields have no physical degrees of freedom,
which can be shown by analyzing it using the constraints 
in the Hamiltonian analysis or by
counting the gauge symmetries in the Lagrangian analysis.
%
%\footnote{Strictly speaking, these are proved in the quantum field theory
%on two-dimensional spacetime with the Lorentzian signature.}.
The partition function does not depend on the metrics on $\Sigma$ and $M$.
That is, the Poisson sigma model is a topological field theory.

In the remaining part of this subsection, we list 
known applications of the Poisson sigma model.

\paragraph{1.}We consider two-dimensional gravity theory as a nontrivial example of a Poisson sigma model \cite{Ikeda:1993aj, Ikeda:1993fh, Schaller:1994uj}.
Consider a target manifold $M$ in three dimensions.
Let the target space indices be $i=0,1,2$ and $\bar{i}=0,1$.
Let us denote $A_{\mu i} = (e_{\mu \bar{i}}, \omega_{\mu})$
and $\phi^i = (\phi^{\bar{i}}, \varphi)$.
We can take $f^{ij}(\phi)$ as
\begin{eqnarray}
f{}^{\bar{i}\bar{j}}(\phi^i) 
= - \epsilon^{\bar{i}\bar{j}} V(\varphi),  \quad
f{}^{2\bar{i}}(\phi^i) = - f{}^{\bar{i}2} = 
\epsilon^{\bar{i}\bar{j}} \phi_{\bar{j}}, \quad
f{}^{22}(\phi^i) = 0.
\label{2dgravity}
\end{eqnarray}
Equation (\ref{2dgravity}) satisfies equation (\ref{Jacobi}), and
the action given by equation $(\ref{PSM})$ reduces to 
\begin{eqnarray}
S &=& \int_{\Sigma} \sqrt{-g} \left(\frac{1}{2} \varphi R
- V(\varphi)\right) - \phi_{\bar{i}} T^{\bar{i}},
\nonumber
\end{eqnarray}
where $g$ is the determinant of the metric $g_{\mu\nu}
= \eta^{\bar{i}\bar{j}} e_{\mu \bar{i}}e_{\mu \bar{j}}$
on $\Sigma$, $R$ is the 
scalar curvature, and $T^{\bar{i}}$ is the torsion.
Here, $e_{\mu \bar{i}}$ is identified with the zweibein, and 
$\omega_{\mu}^{\bar{i}\bar{j}} 
= \omega_{\mu} \epsilon^{\bar{i}\bar{j}}$ is the spin connection.
This action is the 
gauge theoretic formalism of a gravitational theory
with a dilaton scalar field $\varphi$.
%This is invariant under the diffeomorphism 
%and a scale transformation on $\Sigma$.

\paragraph{2.} 
Let $G$ be a Lie group.
The Poisson sigma model on the target space $T^*G$ 
reduces to the $G/G$ gauged Wess-Zumino-Witten (WZW) model,
when $A_{\mu i}$ is properly gauge fixed. \cite{Alekseev:1995py}

\paragraph{3.}If $f^{ij}$ is invertible as an antisymmetric matrix, then 
$f^{-1}_{ij}$ defines a symplectic form on $M$.
%because (\ref{Jacobi}) derives the closed
Then, $A_{\mu i}$ can be integrated out, and
the action \eqref{PSM} becomes the so-called A-model,
\begin{eqnarray}
S = \frac{1}{2} \int_{\Sigma} d^2 \sigma \epsilon^{\mu\nu} 
f^{-1}{}_{ij}(\phi) \partial_{\mu} \phi^i
\partial_{\nu} \phi^j,
\nonumber
\end{eqnarray}
in which the integrand is the pullback of the symplectic structure on 
$M$.
If $M$ is a complex manifold, the 
B-model can also be derived from the Poisson sigma model. 
\cite{Alexandrov:1995kv}

\paragraph{4.}A Poisson structure can be constructed from a classical r-matrix.
A sigma model in two dimensions with a classical r-matrix 
can be constructed as a special case of the Poisson sigma model
\cite{Falceto:2001eh, Calvo:2003kv}
which has a Poisson-Lie structure.

\paragraph{5.}The Poisson sigma model is generalized by introducing the 
Wess-Zumino term
$\int_{X_3} \frac{1}{3!} H_{ijk}(\phi) 
d \phi^i \wedge d \phi^j \wedge d \phi^k$:
\begin{eqnarray}
S 
&=& \int_{\Sigma} A_{i} d \phi^i
+  \frac{1}{2} f^{ij}(\phi) A_i A_j +
\int_{X_3} \frac{1}{3!} H_{ijk}(\phi) 
d \phi^i \wedge d \phi^j \wedge d \phi^k,
\label{WZPSM}
\end{eqnarray}
where $X_3$ is a manifold in three dimensions such that 
$\partial X_3 = \Sigma$, and $H(\phi)
= \frac{1}{3!} H_{ijk}(\phi) 
d \phi^i \wedge d \phi^j \wedge d \phi^k$ is the
pullback of a closed $3$-form on $M$.
This action is called the WZ-Poisson sigma model 
or the twisted Poisson sigma model. \cite{Klimcik:2001vg}

%\newpage

\paragraph{6.}Quantization of the Poisson sigma model derives
a deformation quantization on a target Poisson manifold.
%Let $\Sigma$ be a disc in two dimensions.
The open string tree amplitudes 
of the boundary observables of the Poisson sigma model
on a disc 
coincide with the deformation quantization formulas
on the Poisson manifold $M$ obtained by Kontsevich.
\cite{Cattaneo:1999fm}
This corresponds to the large B-field limit 
in open string theory.
\cite{Seiberg:1999vs}
%This formula is explained in section 9.

%%%%%%%%%%%%%%%%%%%%%%%%%%%%%%%%%%%%%%%%%%%%%%%%%%%%%%%%%%%%%%%%%%%%%%%%
\subsection{Superfield Formalism}\label{superfieldformalismPSM}
\noindent
From this point onward, we set $g=1$ or equivalently, we
absorb $g$ into $f^{ij}(\phi)$.
The BV action of the Poisson sigma model (\ref{BVPSM})
is simplified by introducing supercoordinates.
\cite{Cattaneo:1999fm}
Let us introduce a Grassmann-odd 
supercoordinate $\theta^{\mu} \ (\mu =0, 1)$.
It is not a spinor but a vector
%$\theta^{\mu}$ 
and carries a ghost number 
%with the value 
of $1$.
%Ghost number of $\theta$ 
%A total degree $|\Psi| = \gh \Psi + \deg \Psi$

Superfields are introduced by combining
fields and antifields with $\theta^{\mu}$, as follows:
\begin{eqnarray}
&& \bphi^i (\sigma, \theta)
\equiv \phi^i + \theta^{\mu} A_{\mu}^{+i} 
+ \frac{1}{2} \theta^{\mu} \theta^{\nu} c_{\mu\nu}^{+i}
= \phi^i + A^{+i} + c^{+i},
\nonumber \\
&& \ba_i (\sigma, \theta) 
\equiv 
- c_i + \theta^{\mu} A_{\mu i} 
+ \frac{1}{2} \theta^{\mu} \theta^{\nu} \phi_{\mu\nu i}^{+}
= - c_i + A_{i} 
+ \phi_{i}^{+},
\label{superfield2d}
\end{eqnarray}
where each term in the superfield has the same ghost number
\footnote{Note that $d \sigma^{\mu}$ is commutative with
a Grassmann-odd component field in the nonsuperfield BV
formalism, whereas 
%$A_{\mu}^{+i}$
$\theta^{\mu}$ is anticommutative with
a Grassmann-odd component field in the superfield formalism.
%$A_{\mu}^{+i}$
}. 
Note that in this subsection, the component superfields are assigned the same notation as in the nonsuperfield formalism and
$d \sigma^{\mu}$ in the differential form expression of 
each field is replaced by $\theta^{\mu}$ in 
equation (\ref{superfield2d}).
%We assign the \textit{form degree} of $\sigma$ and $\theta$ by 0 and 1.
%The form degree plus the ghost number 
The ghost number is called the \textsl{degree}, $|\bPhi|$, in the AKSZ formalism
\footnote{Precisely, the notation $|\Phi|$ represents the total degree, 
the sum of the ghost number plus the super form degree of $\Phi$, 
if it is a graded differential form on a graded manifold.
See Appendix.}.
The degree of $\bphi$ is zero, and that of $\ba$ is one.
The original fields $\phi^i$ and $A_{\mu i}$ appear
in $|\bphi|$-th order of $\theta$ and 
$|\ba|$-th order of $\theta$ components in the superfields, respectively.

With this notation, the BV action of equation $(\ref{BVPSM})$ is summarized 
as the superintegral 
of superfields as
\begin{eqnarray}
S &=&  \int_{T[1]\Sigma}
%\!\!\!\! 
d^2 \sigma d^2 \theta \ \left(
\ba_i \bbd \bphi^i +  \frac{1}{2} f^{ij}(\bphi) \ba_i \ba_j
\right),
\label{superBVaction2} 
%\\
%&=& \!\!\! \int_{T[1]\Sigma} 
%d^2 \sigma d^2 \theta \ 
%\biggl(
%A_i d \phi^i
%- A^{+ i} d c_{i}
%+ 
%\frac{1}{2} f^{ij}
%(A_i A_j - 2 \phi^{+}_i c_j)
%+ \frac{\partial f^{ij}}{\partial \phi^k}
%\left(\frac{1}{2} c^{+k} c_i c_j - A^{+k} A_i c_j \right)
%\nonumber \\
%&& \quad
%+ \frac{1}{4} \frac{\partial^2 f^{ij}}{\partial \phi^k \partial \phi^l}
%A^{+k} A^{+l} c_i c_j,
%\biggr),
%\nonumber
\end{eqnarray}
where $\bbd \equiv \theta^{\mu} \partial_{\mu}$ is 
the superderivative and 
%$|\bphi|=0$, $|\bb|=1$.
$T[1]\Sigma$ is a supermanifold, which has
local coordinates $(\sigma^{\mu}, \theta^{\mu})$.
The degree of $S$ is zero, $|S|=0$.
If we integrate by $d^2 \theta$,
then equation $(\ref{superBVaction2})$ reduces to equation
$(\ref{BVPSM})$.

The antibrackets of component fields given in (\ref{2dantibracket})
are combined into a compact form by using the superantibracket as
%\begin{eqnarray}
%\sbv{\bphi^i(\sigma, \theta)}{\ba_j(\sigma^{\prime}, \theta^{\prime})} 
%= - \sbv{\ba_j(\sigma, \theta)}{\bphi^i(\sigma^{\prime}, \theta^{\prime})} 
%= \delta^i_j 
%\delta^2(\sigma-\sigma^{\prime})
%\delta^2(\theta-\theta^{\prime}),
%\nonumber
%\end{eqnarray}
%and zero for other combinations.
\begin{eqnarray}
\sbv{F}{G} \equiv \int_{T[1]\Sigma} d^2 \sigma d^2 \theta
\left( F \frac{\rd}{\partial \bphi^i}
\frac{\ld }{\partial \ba_i}  G
- 
F \frac{\rd}{\partial \ba_i}
\frac{\ld}{\partial \bphi^i} G
\right)
\delta^2(\sigma-\sigma^{\prime})
\delta^2(\theta-\theta^{\prime}),
\nonumber
\end{eqnarray}
where $F$ and $G$ are functionals of superfields.
%This is a super Poisson bracket.
The classical master equations can be replaced by the super-classical 
master equation, $\sbv{S}{S}=0$, where the bracket is the super-antibracket.
The BRST transformation on a superfield
$\bPhi = 
\Phi^{(0)}
+ \theta^{\mu} \Phi_{\mu}^{(1)}
+ \frac{1}{2} \theta^{\mu} \theta^{\nu} \Phi_{\mu\nu}^{(2)}$
is 
\begin{eqnarray}
&& \delta \bPhi = \sbv{S}{\bPhi} = 
\delta \Phi^{(0)}
- \theta^{\mu} \delta \Phi_{\mu}^{(1)}
+ \frac{1}{2} \theta^{\mu} \theta^{\nu} \delta \Phi_{\mu\nu}^{(2)},
\nonumber
\end{eqnarray}
and the BRST transformation $\delta$ has degree 1.
The explicit form of the BRST transformation of each superfield is
\begin{eqnarray}
&& \delta \bphi^i = \sbv{S}{\bphi^i} = \bbd \bphi^i + f^{ij}(\bphi) \ba_j, 
\nonumber \\
&& \delta \ba_i = \sbv{S}{\ba_i} 
= \bbd \ba_i + \frac{1}{2} \frac{\partial f^{jk}}{\partial \bphi^i}
(\bphi) \ba_j \ba_k.
\nonumber
\end{eqnarray}
The (pullback on the) Poisson bracket on a target space is constructed by 
the double bracket of the super-antibracket:
$$
\{F(\phi), G(\phi)\}_{PB}
= f^{ij}(\bphi) \frac{\partial F(\bphi)}{\partial \bphi^i}
\frac{\partial G(\bphi)}{\partial \bphi^j}\Big|_{\bphi=\phi}
= - \sbv{\sbv{F(\bphi)}{S}}{G(\bphi)}\Big|_{\bphi=\phi}.
$$
This double bracket is called 
a \textsl{derived bracket} \cite{Kosmann-Schwarzbach:2003en}.

This superfield description
leads to the AKSZ construction
%, which is a mathematical construction of a BV action 
of a topological field theory.
In the AKSZ construction, objects in the BV formalism 
are interpreted 
%The physical objects are 
%reinterpreted as mathematical objects, 
as follows:
%a BV formalism is an AKSZ construction;
a superfield is a graded manifold;
a BV antibracket is a graded symplectic form; and
a BV action and the classical master equation 
are a coboundary operator (homological vector field) $Q$ 
with $Q^2=0$ and its realization by a Hamiltonian function, respectively.

%%%%%%%%%%%%%%%%%%%%%%%%%%%%%%%%%%%%%%%%%%%%%%%%%%%%%%%%%%%%%%%%%%%%%
%%%%%%%%%%%%%%%%%%%%%%%%%%%%%%%  SEC. 3 %%%%%%%%%%%%%%%%%%%%%%%%%%%%%%
%%%%%%%%%%%%%%%%%%%%%%%%%%%%%%%%%%%%%%%%%%%%%%%%%%%%%%%%%%%%%%%%%%%%%%
\section{Abelian BF Theories for $i$-Form Gauge Fields in Higher Dimensions}
\label{ABFiform}
%%%%%%%%%%%%%%%%%%%%%%%%%%%%%%%%%%%%%%%%%%%%%%%%%%%%%%%%%%%%%%%%%%%%%%
\subsection{Action}
\noindent
The superfield constructions discussed in the previous section 
can be applied to a wide class of TFTs.
An abelian BF theory in $n+1$ dimensions is considered
as a simple example
to show the formulation of the AKSZ construction.

Let us take an $n+1$-dimensional manifold $X_{n+1}$,
and let the local coordinates on $X_{n+1}$ be $\sigma^{\mu}$.
We consider $i$-form gauge fields with internal 
index $a_i$,
\begin{eqnarray}
e^{a(i)} \equiv e^{(i) a_i} = \frac{1}{i!}
d \sigma^{\mu_1} \wedge \cdots \wedge d \sigma^{\mu_i}
e^{a(i)}_{\mu_1\cdots \mu_i}(\sigma),
\end{eqnarray}
for $0 \leq i \leq n$,
where we choose the abbreviated notation $e^{a(i)}$.
$a(i)$ denotes
%We will denote 
an internal index for an $i$-form gauge field.
% by $a(i)$.
For convenience, we divide the $e^{a(i)}$'s into two types:
$(q^{a(i)}, p_{a(n-i)})$, 
where $q^{a(i)} = e^{a(i)}$ if $0 \leq i \leq \floor{n/2}$;
and
$p_{a(n-i)} = e^{a(i)}$ if $\floor{(n+1)/2} \leq i \leq n$;
where $\floor{m}$ is the floor function, 
which takes the value of the largest integer less than 
or equal to $m$.
If $n$ is even, 
$q^{a(\floor{n/2})}$ and $p_{a(n-\floor{(n+1)/2})}= p_{a(n/2)}$
are both $n/2$-form gauge fields. 
Therefore, we introduce a metric $k_{a(n/2)b(n/2)}$ on 
the internal space of ${n/2}$-forms, and 
we can take $p_{a(n/2)} = k_{a(n/2)b(n/2)} q^{b({n/2})}$.
%where $k_{a(n/2)b(n/2)}$ is a metric 
%on the space of $\frac{n}{2}$-forms.
%
We denote a $0$-form by $x^{a(0)} (= q^{a(0)} = e^{a(0)})$ and 
an $n$-form by $\xi_{a(0)} (= p_{a(0)} = e^{a(n)})$.

The action $S_A$ of an abelian BF theory 
is the integral of a Lagrangian as $e \wedge d e^{\prime}$.
The integral is nonzero only for 
($n+1$)-form terms of $e \wedge d e^{\prime}$,
since $X_{n+1}$ is in $n+1$ dimensions.
Therefore, the action has the following form.
If $n = 2m +1$ 
%for $m = $
is odd, 
\begin{align}
S_A
%&=&
%\sum_{i, a(i), j, b(j)} 
%\int_{X_{n+1}} 
%\frac{1}{2} e^{a(i)} 
%\omega_{a(i)b(j)}
%d e^{b(j)}
%\nonumber \\
&=
\int_{X_{n+1}} 
%\mu 
\sum_{0 \leq i \leq (n-1)/2, a(i)}
(-1)^{n+1-i} p_{a(i)} d q^{a(i)}
%= 
%\sum_{i, a(i)} \int_{\calX} 
%\mu \ ( xi_{a(0)} d x_{a(0)}
%+ (-1)^i p_{a(i)} d q_{a(i)}),
%\nonumber
%\label{oddSzero}
\nonumber \\
&=
\int_{X_{n+1}} 
\Bigg( (-1)^{n+1} \xi_{a(0)} d x^{a(0)}
%\nonumber \\ && 
+ 
\sum_{1 \leq i \leq (n-1)/2, a(i)}
(-1)^{n+1-i}  p_{a(i)} d q^{a(i)}
\Bigg),
%\nonumber
\end{align}
and if $n$ is even,
\begin{align}
S_A
%&=&
%\sum_{i, a(i), j, b(j)} 
%\int_{X_{n+1}} 
%\frac{1}{2} e^{a(i)} 
%\omega_{a(i)b(j)}
%d e^{b(i)}
%\nonumber \\
&=
\int_{X_{n+1}} 
\Bigg( 
%xi_{a(0)} d x_{a(0)}
%+ 
\sum_{0 \leq i \leq (n-2)/2, a(i)} 
(-1)^{n+1-i} p_{a(i)} d q^{a(i)}
%\nonumber \\ && 
+ (-1)^{\frac{n+1}{2}} k_{a(n/2)b(n/2)} q^{a(n/2)} d q^{b(n/2)}
\Bigg)
\nonumber \\
&=
\int_{X_{n+1}} 
\Bigg( 
(-1)^{n+1} \xi_{a(0)} d x^{a(0)}
+ 
\sum_{1 \leq i \leq (n-2)/2, a(i)} 
(-1)^{n+1-i} p_{a(i)} d q^{a(i)}
\nonumber \\ &
%& 
\qquad 
+ (-1)^{\frac{n+1}{2}}
k_{a(n/2)b(n/2)} q^{a(n/2)} d q^{b(n/2)}\Bigg).
%\nonumber 
%\label{evenSzero}
\end{align}
The sign factors are introduced for later convenience.
If we define
$p_{a(n/2)} = k_{a(n/2)b(n/2)} q^{a(n/2)}$, 
then $S_A$ has the same expression for $n$ even or odd:
\begin{eqnarray*}
S_A
%&=&
%\sum_{i, a(i), j, b(j)} \int_{X_{n+1}} 
%\frac{1}{2} e^{a(i)} 
%\omega_{a(i)b(j)}
%d e^{b(i)}
%\nonumber \\
&=&
\sum_{0 \leq i \leq \floor{n/2}, a(i)} 
\int_{X_{n+1}} 
(-1)^{n+1-i} p_{a(i)} d q^{a(i)}.
\label{habelianBF}
%\label{evenSzero}
\end{eqnarray*}
This action has the following abelian gauge symmetries:
\begin{eqnarray}
\delta q^{a(i)} 
= d q^{(i-1), a(i)},
\qquad
\delta p_{a(i)}
= d p^{(n-i-1)}{}_{a(i)},
\label{abrs}
\end{eqnarray}
where
$q^{(i-1), a(i)}$ is an ($i-1$)-form gauge parameter, and
$p^{(n-i-1)}{}_{a(i)}$ is an ($n-i-1$)-form gauge parameter.
These equations are summarized as
$\delta e^{a(i)} = d e^{(i-1), a(i)}$,
where $e^{(i-1), a(i)}
= (q^{(i-1), a(i)}, p^{(i-1)}{}_{a(n-i)})$ 
is an ($i-1$)-form gauge parameter.
%$\delta q^{(i)}_{a(i), \mu(0)\cdots \mu(i-1)} 
%= \partial_{[\mu(0)}  c^{(n-i)}_{a(i), \mu(1)\cdots \mu(i-1)]}$.

If the $i$-forms are expanded by local fields as 
$e^{a(i)}(\sigma) 
= \sum_{k, \mu_k} \frac{1}{k!}
d \sigma^{\mu_1} \wedge \cdots \wedge d \sigma^{\mu_k}
e^{a(k)}{}_{\mu_1\cdots \mu_k}(\sigma)
$, 
the action becomes
\begin{eqnarray*}
S_A
&=&
\sum_{\substack{0 \leq i \leq \floor{n/2} \\ a(i), \mu_i}} 
\!\!\!\! \pm \frac{1}{i!(n-i)!}
\int_{\calX} \!\!
d^{n+1} \sigma 
\ 
%xi_{a(0)} d x_{a(0)}
%+ 
(-1)^{n+1-i} 
\epsilon^{\mu_0\cdots\mu_n}
p_{a(i) \mu_{i+1}\cdots \mu_n}
\partial_{\mu_i}
q^{a(i)}{}_{\mu_0\cdots \mu_{i-1}}.
\label{abelianBF1}
\end{eqnarray*}
%This theory is an abelian BF theory for a general $i$-forms, 

%%%%%%%%%%%%%%%%%%%%%%%%%%%%%%%%%%%%%%%%%%%%%%%%%%%%%%%%%%%%%%%%%%%%%%
\subsection{BV Formalism }
\noindent
In the BV formalism,
the ghosts,
ghosts for ghosts, and antifields are
introduced
for each $i$-form gauge field $e^{a(i)}$.
First, the gauge parameter $e^{(i-1), a(i)}$ is regarded as 
the FP ghost of ghost number $1$.
Moreover, we need the following towers of ghosts for ghosts,
because the gauge symmetry is reducible:
\begin{eqnarray}
&& \delzero e^{a(i)} = d e^{(i-1), a(i)}, 
%\nonumber \\ && 
\nonumber \\ 
&& \delzero e^{(i-1), a(i)} = d e^{(i-2), a(i)},
\nonumber \\ 
&& \vdots  \nonumber \\
%&& \delzero c_{2}^{(p)a_p} = d c_1^{(p)a_p}, \nonumber \\ 
&& \delzero e^{(1), a(i)} = d e^{(0), a(i)}, 
%\nonumber \\ && 
\nonumber \\
&& \delzero e^{(0), a(i)} = 0, 
%\nonumber \\
%\end{eqnarray}
%\begin{eqnarray}
%&& \vdots \nonumber \\
%&& 
\label{agauge}
\end{eqnarray}
where $e^{(k), a(i)}$ is a $k$-form ghost for ghosts,
%defined from a $k$-form gauge parameter,
$(k = 0, \cdots, i-1)$, of
%
%$e^{(k), a(i)}$ has 
ghost number $i-k$.
As usual, 
these fields are Grassmann-odd (even)
if the ghost number is odd (even).
We denote the original field by $e^{(i), a(i)} = e^{a(i)}$.

Next, antifields $e^{*(k)}{}_{a(i)}$ are introduced
for all fields and ghosts $e^{(k), a(i)}$ above.
An antifield $e^{*(k)}{}_{a(i)}$ has the same $k$-form as that of
the corresponding field $e^{(k), a(i)}$.
Note that ${\rm gh}(\Phi) + {\rm gh}(\Phi^*) = -1$ requires
that the antifield has ghost number $k-i-1$.
%Let $\Phi^+$ denote the antifields for the field $\Phi$.
%where ${\rm deg}(\Phi)$ and ${\rm deg}(\Phi^+)$ are the form degrees 
%of the fields $\Phi$ and $\Phi^+$
%and ${\rm gh}(\Phi)$ and ${\rm gh}(\Phi^+)$ are the ghost numbers of
%them.
%
It is convenient to introduce 
the Hodge dual of an antifield,
$e^{+(n+1-k)}{}_{a(i)} = * e^{(k)}{}_{a(i)}$,
which is an $(n+1-k)$-form of ghost number $k-i-1$.
%For Next antifields $e^{(n+1-k)}_{a(i)}$ is introduced for 
%$e^{(k), a(i)}$, which is a $k$-form with ghost number $k-i-1$.
The \textsl{antibracket} is defined as
\footnote{Here, we use simple notation for the functional superderivative, but it will be defined later with more mathematical rigor.}
%{\small \color{red}
\begin{eqnarray}
&& \sbv{F}{G} \equiv 
%\int_{X_{n+1}} \!\!\!\! 
%d^{n+1} \sigma 
%\left(
%F \frac{\rd}{\partial e^{(k), a(i)}(\sigma)}
%\omega^{a(i)b(j)}
%\frac{\ld }{\partial e^{(n+1-k), b(j)}(\sigma^{\prime})}  G
%\right)
%\delta^{n+1}(\sigma-\sigma^{\prime})
%\nonumber \\
%&=& 
\sum_{i, k} 
\int_{X_{n+1}} \!\!\!\! 
d^{n+1} \sigma 
\left(F \frac{\rd}{\partial e^{(k), a(i)}(\sigma)}
\frac{\ld }{\partial e^{+(n+1-k)}{}_{a(i)}(\sigma^{\prime})}  G
\right. 
\nonumber \\
%\right\rangle 
&& 
%\qquad 
- (-1)^{i(n+1-i)}
%\left\langle 
\left. 
F \frac{\rd}{\partial e^{+(n+1-k)}{}_{a(i)}(\sigma)}
\frac{\ld}{\partial e^{(k), a(i)}(\sigma^{\prime})} G
\right)
\delta^{n+1}(\sigma-\sigma^{\prime}).
\label{nantibracket}
%\label{Poisson2ike}
\end{eqnarray}
%and lower indices are introduced 
%to subtract a metric $\omega^{a(i)b(j)}$
%in the second term.

%The sign factor $(-1)^{i(n+1-k)}$ is because of 
%the Hodge dual.
%$e^{(n+1-k), a(i)} = * e^{(k), a(i)}$,
%$e^{(n+1-k) a(i)} = * \omega^{a(i)b(j)} e^{+}{}^{(k)}_{b(j)}$,

%In order to simplify %and make clear
%notations and calculations, we rewrite notations by the
%superfield formalism.
%We introduce the Grassmann odd supercoordinate $\theta^\mu$ with the
%ghost number $1$, and combine fields, ghost fields and antifields
%to the following superfields
%We define 
%
%We combine the field, its antifield and their gauge descendant fields
%as superfield components. 
%
%$\ba_p{}^a$ with the ghost number $p$ and  
%$\bb_{n-p-1\ a}$ with the ghost number $n-p-1$:
The BV action is as follows:
\begin{eqnarray}
S^{(0)} &=& S_A + 
\sum_{\Phi} (-1)^{\gh \Phi} \int_{X_{n+1}} 
d^{n+1}\sigma~ \Phi^* \delta_0 \Phi
\nonumber \\
&=&
\sum_{0 \leq i \leq \floor{n/2}, 1 \leq k \leq i} 
\int_{X_{n+1}} 
\left( 
(-1)^{n+1-i} p_{a(i)} d q^{a(i)}
%\right.
%\nonumber \\ &&~~~~~~~~~~~~~~~~~~~~
%\left.
+ (-1)^{i-k} e^{+(n+1-k)}{}_{a(i)} d e^{(k-1), a(i)}
\right)
\nonumber \\
&=& 
\sum_{\substack{0 \leq i \leq \floor{n/2} \\ 1 \leq k \leq i}}
\int_{X_{n+1}} 
\left( 
(-1)^{n+1-i} p_{a(i)} d q^{a(i)}
+ 
(-1)^{i-k} q^{+(n+1-k)}{}_{a(i)} d q^{(k-1), a(i)}
\right.
\nonumber \\
&&~~~~~~~~~~~~~~~~~~~~
\left.
+ 
(-1)^{i+k-n} p^{+(k+1), a(i)} d p^{(n-k-1)}{}_{a(i)}
\right).
%\nonumber \\
\label{iformBV}
\end{eqnarray}

%%%%%%%%%%%%%%%%%%%%%%%%%%%%%%%%%%%%%%%%%%%%%%%%%%%%%%%%%%%%%%%%%%%%%%
\subsection{Superfield Formalism}
\noindent
Let us introduce a supercoordinate $\theta^{\mu}$
of ghost number $1$, i.e. of degree $1$.
%Again, the ghost number is  the \textsl{degree}
%in the superfield formalism.
The base $d \sigma^{\mu}$ 
%forms in 
%fields, ghosts and antifields,
%$e^{(k), a(i)}$ and $e^{+(n+1-k)}{}_{a(i)}$,
is replaced by the supercoordinates $\theta^{\mu}$, thus
$e^{(k), a(i)}$ and $e^{+(n+1-k)}{}_{a(i)}$ are replaced by the 
superfield monomials,
\begin{eqnarray}
\be^{(k), a(i)} &=& (\pm) \frac{1}{k!}
\theta^{\mu_1} \cdots \theta^{\mu_k}
e^{(k), a(i)}_{\mu_1\cdots \mu_k}(\sigma),
\\
\be^{+(n+1-k)}{}_{a(i)} &=& (\pm) \frac{1}{k!}
\theta^{\mu_1} \cdots \theta^{\mu_{n+1-k}}
e^{+(n+1-k)}_{a(i), \mu_1\cdots \mu_{n+1-k}}(\sigma)
\end{eqnarray}
%$\be^{(k), a(i)}$ is 
of degree $i$
and 
%$\be^{+(n+1-k)}{}_{a(i)}$ 
of degree $n-i$, respectively.
%\footnote{Note that 
%$\be^{(n+1-k)}_{a(i)} = \pm \frac{1}{(n+1-k)!}
%\theta^{\mu(1)} \cdots \theta^{\mu(n+1-k)}
%e^{(n+1-k)}_{a(i)}{}_{\mu(1)\cdots \mu(n+1-k)}(\sigma)$
%has ghost number $n-i$.}.
Although 
%complicated 
sign factors appear 
in the equations relating the original ghosts and antifields 
with the superfield components,
we do not write them explicitly.
% sign factors.
Since the relation is one-to-one, 
we can identify the original fields and ghosts
by ghost number and form degree.
%For each field,
% $e^{(k), a(i)}$,
We define a superfield of degree $i$,
$\be^{a(i)}$,
where 
%Note that the 
fields and ghosts for an $i$-form gauge field
and the antifields for an ($n-i$)-form gauge field
are combined
\cite{Cattaneo:2000rt, Cattaneo:2000mc}.
By combining $\be^{(k), a(i)}$ and $\be^{+(n+1-k)}{}_{a(i)}$ 
of degree $i$, 
%as follows:
we obtain
\begin{eqnarray}
\be^{a(i)} &=& 
\be^{(0), a(i)}
+ \be^{(1), a(i)}
+ \cdots 
+ \be^{(i), a(i)}
%\nonumber \\ && 
%\qquad
+ \be^{+(i+1), a(n-i)}
+ \be^{+(i+2), a(n-i)}
%\nonumber \\ && 
+ \cdots
+ \be^{+(n), a(n-i)}
%\nonumber \\
%\end{eqnarray}
%
%\begin{eqnarray}
\nonumber \\ 
&=& 
\sum_{k=0}^i \be^{(k), a(i)}
+ \sum_{k=i+1}^n \be^{+(k), a(n-i)},
\label{component}
\end{eqnarray}
where $0 \leq i \leq n$.
Note that the
%Further, 
internal indices $a(i)$ and $a(n-i)$ are equivalent, 
since we are considering a BF theory.

Let us denote the super-antibracket conjugate pair by
$(\be^{a(i)}, \be^{a(n-i)}) = (\bbq^{a(i)}, \bbp_{a(i)})$.
Then, the superfields can be written as follows:
\begin{eqnarray}
\bbq^{a(i)} &=& 
\sum_{k=0}^i \bbq^{(k), a(i)}
+ \sum_{k=i+1}^n \bbp^{(k), a(n-i)},
\nonumber \\
\bbp_{a(i)} &=& 
\sum_{k=0}^i \bbp^{(k)}{}_{a(i)}
+ \sum_{k=i+1}^n \bbq^{(k)}{}_{a(n-i)}.
\end{eqnarray}
If $n$ is even, the $n/2$-form part has a special relation,
 $\bbp_{a(n/2)} = k_{a(n/2)b(n/2)} \bbq^{b({n/2})}$.
Therefore, $\bbq^{a({n/2})}$ contains 
both ghosts and antifields for 
an (${n}/{2}$)-form gauge field $\bbq^{(n/2), a(n/2)}$:
\begin{eqnarray}
\bbq^{a(n/2)} &=& 
\sum_{k=0}^{n/2} \bbq^{(k), a(n/2)}
+ \sum_{k=n/2+1}^n k^{a(n/2)b(n/2)} \bbq_{(k), b(n/2)}.
\end{eqnarray}

If we use superfields, the antibrackets 
and the BV action are simplified.
The antibracket (\ref{nantibracket})
can be rewritten 
using superfields (\ref{component}) as follows:
% on two functions of superfields $F$ and $G$ 
\begin{eqnarray}
&& \sbv{F}{G} \equiv 
%\sum_{i, k} 
\int_{X_{n+1}} \!\!\!\! 
d^{n+1} \sigma d^{n+1} \theta
\left(F \frac{\rd}{\partial \bbq^{a(i)}(\sigma, \theta)}
\frac{\ld }{\partial \bbp_{a(i)}(\sigma^{\prime}, 
\theta^{\prime})} G 
\right.
\nonumber \\
&&
\left.
\qquad\qquad - (-1)^{i(n-i)}
F \frac{\rd}{\partial \bbp_{a(i)}(\sigma, \theta)}
\frac{\ld }{\partial \bbq^{a(i)}(\sigma^{\prime}, 
\theta^{\prime})} G
\right)
\delta^{n+1}(\sigma-\sigma^{\prime})
\delta^{n+1}(\theta-\theta^{\prime})
\nonumber
\\
&&= 
%\sum_{i, k} 
\int_{X_{n+1}} \!\!\!\! 
d^{n+1} \sigma d^{n+1} \theta
\left(F \frac{\rd}{\partial \be^{a(i)}(\sigma)}
\bomega^{a(i)b(j)}
\frac{\ld }{\partial \be^{b(j)}(\sigma^{\prime})} G
\right)
%\nonumber \\ && 
%\times 
\delta^{n+1}(\sigma-\sigma^{\prime})
\delta^{n+1}(\theta-\theta^{\prime}).
\label{abelianBFnDantibracket}
\end{eqnarray}
Note that $\bomega^{a(i)b(j)}$ is the inverse of the 
graded symplectic structure 
on superfields.
%Equation (\ref{abelianBFnDantibracket})
%satisfies the conditions for an antibracket;
%it 
%is graded symmetric and satisfies the graded Leibniz
%rule and the graded Jacobi identity.
%The graded antibracket decreases degree by $n$.
The complicated BV action (\ref{iformBV})
can be simplified as the BV superaction as follows:
\begin{eqnarray}
S^{(0)} 
&=&
\sum_{0 \leq i \leq \floor{n/2}} 
\int
%_{T[1]X_{n+1}} 
d^{n+1} \sigma d^{n+1} \theta \
(-1)^{n+1-i} 
\bbp_{a(i)}
\bbd \bbq^{a(i)}
\nonumber \\
&=&
%\sum_{i, a(i), b(j)} 
\sum_{0 \leq i \leq n} 
\int
%_{T[1]X_{n+1}} 
\mu \
%(-1)^{n+1-i} 
\frac{1}{2} 
\be^{a(i)}
\bomega_{a(i)b(j)}
d \be^{b(j)},
\nonumber
\end{eqnarray}
where $\mu$ is the Berezin measure on the supermanifold.
% $T[1]X_{n+1}$.

As in the previous section, we apply deformation theory to the BV action $S^{(0)}$ and obtain all 
possible consistent terms of the BV action 
$S_I$ in BF theory.
Deformation theory in the superfield formalism 
yields the same result as in the nonsuperfield BV formalism,
in the case of a topological field theory.
\cite{Ikeda:2000yq, Ikeda:2001fq}
Therefore, below we will compute only in the superfield formalism.
%A superfield formalism makes results of the deformation theory clear
%The action is reformulate by the superfields.

The topological field theories constructed in Sections 
2 and 3 have the same structures:
superfields, antibrackets and BV actions.  These are
formulated in a unified way by
%mathematically formulated 
\textsl{QP-manifolds} and the structure becomes more transparent.
%%%%%%%%%%%%%%%%%%%%%%%%%%%%%%%%%%%%%%%%%%%%%%%%%%%%%%%%%%%%%%%%%%%%%%
%%%%%%%%%%%%%%%%%%%%%%%%%%%%%% SEC 4 %%%%%%%%%%%%%%%%%%%%%%%%%%%%%%%%%
%%%%%%%%%%%%%%%%%%%%%%%%%%%%%%%%%%%%%%%%%%%%%%%%%%%%%%%%%%%%%%%%%%%%%%
\section{QP-manifolds}
\subsection{Definition}
\noindent
A QP-manifold, which is also called 
a differential graded symplectic manifold,
is a key structure
for the AKSZ construction of a topological field theory.
This section and the next are 
devoted to providing the fundamentals of the formulation.
For further reading, we refer to 
%The reader is also referred to
%references
%the following articles 
%and reviews 
Refs.~\citenum{Cattaneo:2001ys, Roytenberg:2006qz, Qiu:2011qr, Cattaneo:2010re}.

A \textsl{graded manifold} is the
mathematical counterpart to a superfield,
which is defined as
%Mathematically, a graded manifold $\cal M$ is defined as 
a ringed space with a structure sheaf 
of a graded commutative 
algebra over an ordinary smooth manifold $M$. 
It is defined locally using even and odd coordinates.
This grading is compatible with supermanifold grading,
that is, a variable of even degree is commutative, and 
one of odd degree is anticommutative.
The grading is called the \textsl{degree}.
$\calM$ is locally isomorphic to 
$C^{\infty}(U) \otimes S^{\cdot}(V)$,
where 
$U$ is a local chart on $M$, 
$V$ is a graded vector space, and $S^{\cdot}(V)$ is a free 
graded commutative algebra on $V$.
We refer to Refs.~\citenum{Carmeli, Manin, Varadarajan}
for a rigorous definition and a discussion of the properties
of a supermanifold.
The formulas for the graded differential calculus
are summarized in Appendix A.

The grading is assumed to be nonnegative in this lecture\footnote{
Though we do not consider a grading with negative 
degree in this article,
there exist sigma models on  
target graded manifolds with negative degree.
\cite{Ikeda:2004gp, Zucchini:2007ie}}
and a graded manifold with a nonnegative grading is called 
an \textit{N-manifold}.

The mathematical structure corresponding to the antibracket is 
a \textsl{P-structure}.
Thus, an N-manifold equipped with 
a graded symplectic structure 
$\omega$ of degree $n$ is called a P-manifold of degree $n$,
 $({\cal M},\omega)$, and
$\omega$ is a P-structure.
The graded Poisson bracket on $C^\infty ({\cal M})$ is defined 
from the graded symplectic structure $\omega$ on ${\cal M}$ as 
$$    
\{f,g\} 
= (-1)^{|f|+n} \iota_{X_f} \delta g
= (-1)^{|f|+n+1} \iota_{X_f} \iota_{X_g} \omega,
$$
%$    \{f,g\} = (-1)^{|f|+1} \iota_{X_f} \iota_{X_g}\omega
for $f, g \in C^\infty({\cal M})$,
where the Hamiltonian vector field $X_f$ is defined by the equation 
$\iota_{X_f} \omega = - \delta f$. 
%$i_{X_f} \omega = (-1)^{n+1} d f$, 
%This is similar to a usual Poisson bracket from the symplectic structure.

Finally, a \textsl{Q-structure}
corresponding to a BV action is introduced.
%called a \textsl{Q-structure}.
Let $({\cal M}, \omega)$ be a $P$-manifold of degree $n$.
We require that there is a differential $Q$ of degree $+1$ with $Q^2=0$ on $\calM$.
This $Q$ is called a Q-structure.
\begin{definition}
The triple $(\calM, \omega, Q)$ is called
a \textsl{QP-manifold} of degree $n$, 
and its structure is called a \textsl{QP-structure},
if $\omega$ and $Q$ are compatible,
that is, $L_Q \omega =0$. {\rm \cite{Schwarz:1992nx, Schwarz:1992gs}}
\end{definition}
$Q$ is also called a homological vector field. 
In fact, $Q$ is a Grassmann-odd vector field on $\calM$.
We take a generator $\Theta\in C^{\infty}(\calM)$ 
of $Q$ with respect to the graded Poisson bracket, $\{-,-\}$,
satisfying
\beq
Q=\{\Theta,-\}.
\eeq
$\Theta$ has degree $n+1$ and is called homological function, or 
Q-structure function.
$\Theta$ is also called Hamiltonian.\footnote{In fact, if the degree of a QP-manifold is \textsl{positive},
there always exists a generator $\Theta$ for the Q-structure 
differential $Q$ \cite{Roytenberg:2006qz}.
}
The differential condition, $Q^2=0$, implies that
$\Theta$ is a solution of the \textsl{classical master equation},
\begin{equation}
\{\Theta,\Theta\}=0.
\label{cmaseq}
\end{equation}

%%%%%%%%%%%%%%%%%%%%%%%%%%%%%%%%%%%%%%%%%%%%%%%%%%%%%%%%%%%%%%%%%%%%%%
\subsection{Notation}
\noindent 
We will now introduce the notation for graded manifolds.
Let $V$ be an ordinary vector space. Then $V[n]$ is
a vector space in which the degree is shifted by $n$.
%assigning degree $n$ for basis $q^a$.
More generally, 
if $V_m$ is a graded vector space of degree $m$,
the elements of $V_m[n]$ are of degree $m+n$
(this is also denoted by $V_{m+n} = V_{m}[n]$).
If $V$ has degree $n$,
the dual space $V^*$ has degree $-n$.
The product of $u \in V_m$ and $v \in V_n$ is graded commutative,
$uv = (-1)^{mn} vu$.

Let $M$ be an ordinary smooth manifold. 
Given a vector bundle $E \longrightarrow M$, 
$E[n]$ is a graded manifold
assigning 
%degree $0$ for variables of the base manifold
%$M$ and
degree $n$ to the fiber variables,
i.e., a base variable has degree $0$, and
a fiber variable has degree $n$.
%
%If $M$ is a graded manifold,
%$E \longrightarrow M$ is a graded vector bundle.
%Then $E[n]$ is a graded manifold
%assigning degree $n$ for the sum of the base variables
%and fiber variables.
%Take a local coordinate $(q^{i}, p^a)$ on $E$, 
%where $q^i$ is a local coordinate on $M$ 
%and $p^a$ is one of the fiber of $E$.
%Then $|q| + |p| =n$.
%
If the degree of the fiber is shifted by $n$, graded tangent and cotangent bundles are denoted by $T[n]M$  and $T^*[n]M$, respectively.

This notation is generalized to the case that 
both a smooth manifold $M$ and its fiber are graded.
$E[n]$ means that the degree of the fiber is shifted by $n$.
Note that 
$TM[1]$ is a tangent bundle 
for which the base and fiber degrees are $1$ and $1$, which 
is denoted by $(1,1)$.
Considering the duality of $V$ and $V^*$, we then have that
$T^*M[1]$ is a cotangent bundle 
for which the base and fiber degrees are $(1,-1)$.
Therefore, $T^*[n]M[1]$ is a cotangent bundle of degrees $(1, n-1)$.

Let us consider a typical example: a double vector bundle $T^*E$, which
is the cotangent bundle of a vector bundle.
We take local coordinates on $E$, $(x^i, q^a)$,
where $x^i$ is a coordinate on $M$, and $q^a$ is a coordinate on the fiber.
We also take dual coordinates $(\xi_i, p_a)$ on the cotangent space.
If we consider the graded bundle $T^*[n]E[1]$,
the coordinates $(x^i, q^a)$ have degrees  $(0,1)$ and
$(\xi_i, p_a)$ have degrees $(0+n,-1+n) = (n, n-1)$.
\footnote{
For notation $[n]$, we consider degree by $\bZ$-grading.
%of degrees is denoted by $[n]$.
On the other hand, we can regard a graded manifold as a supermanifold
%if the degree is 
by considering the degree modulo $2$.
In this case, the shifting of odd and even degrees is denoted by $\Pi$.
For example, $\Pi TM$ is a tangent bundle in which the 
degree of the fiber is odd. There is a natural isomorphism,
$\Pi TM \simeq T[1]M$.
%if we consider the degree divided mod 2.
}
%%%%%%%%%%%%

We can see that $C^{\infty}(E[1])$, the space of functions on $E[1]$,
is equivalent to the space of sections of the exterior algebra, 
$\wedge^{\bullet} E$,
$C^{\infty}(E[1]) = \Gamma(\wedge^{\bullet} E)$,
if we identify the local coordinates of degree 1
with the basis of the exterior algebra.
Let $e^a$ be a local basis of the sections of $E$.
Then, a function 
\beqa
\frac{1}{s!} f_{a_1 \cdots a_s}(x) q^{a_1} \cdots q^{a_s} 
\in C^{\infty}(E[1])
\eeqa
can be identified with 
\beqa
\frac{1}{s!} f_{a_1 \cdots a_s}(x) e^{a_1} \wedge \cdots \wedge e^{a_s} 
\in \Gamma(\wedge^{\bullet} E).
\eeqa
%\medskip\\
%\indent
%%%%%%%%%%%%%%%%%%%%%%%%%%%%%%%%%%%%%%%%%%%%%%%%%%%%%%%%%%%%%%%%%%%%%%
\section{Examples of QP-Manifolds}\label{examplessection}
\noindent 
Typical examples of QP-manifolds are listed below.
\subsection{Lie Algebra and Lie Algebroid as QP-manifold of degree $n$}
%%%%%%%%%%%%%%%%%%%%%%%%%%%%%%%%%%%%%%%%%%%%%%%%%%%%%%%%%%%%%%%%%%%%%%
\subsubsection{Lie Algebra}\label{LieAlgebra}
%\begin{example}[Lie Algebra.]\label{LieAlgebra}
%{\rm 
Let $n \geq 1$.
For an arbitrary $n$,
a Lie algebra becomes a QP-manifold of degree $n$
on a point $M=\{pt \}$.

Let $\mathfrak{g}$ be a Lie algebra with a Lie bracket $[-,-]$.
Then, $T^*[n]\mathfrak{g}[1]
\simeq \mathfrak{g}[1] \oplus \mathfrak{g^*}[n-1]$ 
is a P-manifold of degree $n$ 
with graded symplectic structure 
induced by a canonical symplectic structure on
$T^*\mathfrak{g}$.
We take local coordinates as follows: 
$q^a \in \mathfrak{g}[1]$ of degree $1$, 
and $p_a \in \mathfrak{g}^*[n-1]$ of degree $n-1$.
A P-structure $\omega = (-1)^{n|q|} \delta q^a \wedge \delta p_a$ 
is of degree $n$, and it is induced by 
the canonical symplectic structure 
on $T^*\mathfrak{g} \simeq \mathfrak{g} \oplus \mathfrak{g^*}$
by shifting the degree of the coordinates.
Taking a Cartan form $\Theta = \frac{1}{2} \langle p, [q, q] \rangle
= \frac{1}{2} f^a{}_{bc} p_a q^b q^c$,
where $\langle -, - \rangle$ is the canonical 
pairing of $\mathfrak{g}$ and $\mathfrak{g}^*$,
$f^a{}_{bc}$ is the structure constant, then,
$\Theta$ defines a Q-structure,
since it satisfies $\sbv{\Theta}{\Theta}=0$ 
due to the Lie algebra structure.
%If we take a structure constant $f^A{}_{BC}$ of Lie algebra,
%$\Theta = \frac{1}{2} f^A{}_{BC} \bp_A \bq^B \bq^C$.
%}
%\end{example}

%%%%%%%%%%%%%%%%%%%%%%%%%%%%%%%%%%%%%%%%%%%%%%%%%%%%%%%%%%%%%%%%%%%%%%
\subsubsection{Lie Algebroid}\label{LalgebroidQP}
A Lie algebroid has been defined in Definition \ref{LieAlgebroid}.
A Lie algebroid has a realization by a QP-manifold 
of degree $n$ for every $n$.

Let $n\ge 2$.
Let $E$ be a vector bundle over $M$, and let $\calM = T^{*}[n]E[1]$
be a graded manifold of degree $n$.
%$E$ becomes a Lie algebroid and
We take local coordinates $(x^i, q^a, p_a, \xi_i)$
of degrees $(0, 1, n-1, n)$.
%A QP-structure is naturally defined on $\calM = T^{*}[n]E[1]$, and
The P-structure $\omega$ is a graded differential form of degree $n$
and is locally written as
\begin{eqnarray}
\omega = \delta x^i \wedge \delta \xi_i 
+ 
(-1)^{n|q|}
\delta q^a \wedge \delta p_a.
\end{eqnarray}
The Q-structure function is of degree $n+1$, and we have
\begin{eqnarray}\label{deftheta2}
\Theta = f{}_1{}^i{}_{a} (x) \xi_i q^a 
+\frac{1}{2} f_2{}^a{}_{bc}(x) p_a q^b q^c,
%+\frac{1}{(n+1)!} f_3{}_{a_1 \cdots a_{n+1}}(x) 
%q^{a_1} q^{a_2} \cdots q^{a_{n+1}},
\end{eqnarray}
where the $f_i$'s are functions of $x$.
The Q-structure condition $\sbv{\Theta}{\Theta}=0$
imposes the following relations:
%\footnote{In a completely antisymmetization notation, 
%Equation (\ref{fcn3}) is
%$f{}_1{}^k{}_{[a_1} 
%\frac{\partial f_3{}_{a_2\cdots a_{n+2}]}}{\partial x^k} 
%+ \frac{2}{n+1} f_2{}^f{}_{[a_1 a_2} f_3{}_{a_3 \cdots a_{n+2}]f} =0$.
%}
\begin{eqnarray}
&&
f{}_1{}^k{}_{b} \frac{\partial f{}_1{}^i{}_{a}}{\partial x^k} 
- f{}_1{}^k{}_{a} \frac{\partial f{}_1{}^i{}_{b}}{\partial x^k} 
+ f{}_1{}^i{}_{c} f_2{}^c{}_{ab} = 0,
\label{lafcn1} \\
&& 
f{}_1{}^k{}_{[d} \frac{\partial f_2{}^a{}_{bc]}}{\partial x^k} 
- f_2{}^a{}_{e[b} f_2{}^e{}_{cd]} = 0.
\label{lafcn2}
%\\
%\label{lafcn2}
%&& 
%f{}_1{}^k{}_{[a_1} \frac{\partial f_3{}_{a_2\cdots a_{n+2}]}}{\partial x^k} 
%+ f_2{}^f{}_{[a_1 a_2} f_3{}_{a_3 \cdots a_{n+2}]f} =0.
%\label{lafcn3}
\end{eqnarray}
\eqref{lafcn1} and \eqref{lafcn2} are the same conditions as
for a Lie algebroid, 
\eqref{Liealgebroididentity1} and \eqref{Liealgebroididentity2},
where $f_1{}^i{}_a = \rho{}^i{}_a$ and $f_2{}^a{}_{bc} = - f^a{}_{bc}$.

For $n=1$, we need a slightly different realization,
which appeared in Ref.~\citenum{Bonechi:2005sj}.

%%%%%%%%%%%%%%%%%%%%%%%%%%%%%%%%%%%%%%%%%%%%%%%%%%%%%%%%%%%%%%%%%%%%%%
\subsection{
%QP-manifold of 
$n=1$}
\noindent
In general, a QP-manifold of $n=1$ defines a Poisson structure.
We can also realize a complex structure using $n=1$.
Here, we give their constructions.
\subsubsection{Poisson Structure}\label{pmanifold}
A P-manifold $\calM$ of $n=1$ has the two degrees $(0, 1)$, and
it is canonically isomorphic to the cotangent bundle
$
\calM=T^{*}[1]M,
$
over the smooth manifold $M$.

On $T^*[1]M$, we take local coordinates $(x^i, \xi_i)$
of degrees $(0, 1)$; here, 
$x^i$ is a coordinate 
of the base manifold $M$, and $\xi_i$ is a coordinate of the fiber.
Note that $\xi_i$ is an odd element: $\xi_i \xi_j = - \xi_j \xi_i$.
The P-structure is $\omega = \delta x^i \wedge \delta \xi_i$. 
For $n=1$, the graded Poisson bracket $\sbv{-}{-}$ 
%induced from $\omega$
is isomorphic to the Schouten-Nijenhuis bracket.
Since the Q-structure function $\Theta$ has 
degree two, the general form is
$\Theta = \frac{1}{2} f^{ij}(x) \xi_i \xi_j$,
where $f^{ij}(x)$ is an arbitrary function of $x$.
The classical master equation, $\sbv{\Theta}{\Theta}=0$,
imposes the following condition on $f^{ij}(x)$:
\beqa
\frac{\partial f^{ij}(x)}{\partial x^{l}}
f^{lk}(x) + (ijk \ \mbox{cyclic})=0.
\label{PoissonJacobi}
\eeqa
The Q-structure $\Theta$ with Equation \eqref{PoissonJacobi}
is called a Poisson bivector field.

If $f^{ij}$ satisfies equation (\ref{PoissonJacobi}),
then the derived bracket defines a Poisson bracket on $M$:
\beqa
\sbv{F}{G}_{PB} = f^{ij}(x) 
\frac{\partial F}{\partial x^i} \frac{\partial G}{\partial x^j}
= - \sbv{\sbv{F}{\Theta}}{G}.
\label{Poissonb}
\eeqa
Equation (\ref{PoissonJacobi}) corresponds to 
%is derived from 
the Jacobi identity of this Poisson bracket.

Conversely, assume 
a Poisson bracket $\sbv{F}{G}_{PB}$ on $M$.
The Poisson bracket can be
locally written as
$f^{ij}(x) 
\frac{\partial F}{\partial x^i} \frac{\partial G}{\partial x^j}$.
Then, 
%$\Theta$ defined by 
$\Theta = \frac{1}{2} f^{ij}(x) \xi_i \xi_j$
satisfies the classical master equation and 
is a Q-structure.

Thus, a QP-manifold of degree 1, $T^*[1]M$, 
defines a Poisson structure on $M$.
This QP-manifold of degree 1 is also regarded as a Lie algebroid on $T^*M$,
according to Definition \ref{LieAlgebroid}.

%$C^{\infty}(T^{*}[1]M) = \Gamma \wedge^{\bullet} TM$.

%%%%%%%%%%%%%%%%%%%%%%%%%%%%%%%%%%%%%%%%%%%%%%%%%%%%%%%%%%%%%%%%%%%%%%
\subsubsection{Complex Structure}\label{ComplexStructure}
Let $M$ be a complex manifold of real dimension $d$.
A linear transformation $J: TM \longrightarrow TM$ is called 
a \textsl{complex structure} if the following two conditions
are satisfied:
\\
1) $J^2 = -1$
\\
2) For $X, Y \in TM$, 
$\proj_{\mp} [\proj_{\pm}X,  \proj_{\pm}Y] = 0$,
\qquad (integrability condition)
\\
where $\proj_{\pm}$ is the projection onto the $\pm \sqrt{-1}$ eigenbundles
%of $J$ 
in $TM$, and $[-, -]$ is the Lie bracket of vector fields.
We take a local coordinate expression of $J$,
$J^i{}_j(x)$, which is a rank $(1, 1)$ tensor.

In order to formulate a complex structure as a QP-manifold
we take the graded manifold
$\calM = T^*[1] T[1]M$.
This double vector bundle is locally 
isomorphic to 
$U \times \bR^{d}[1] \times \bR^d[1] \times \bR^{d}[0]$,
%if $M$ is in $d$ dimensions, 
where $U$ is a local chart on $M$.
Let us take local coordinates on the local chart as $(x^i, \xi_i, q^i, p_i)$
of degree $(0, 1, 1, 0)$.
%Let $x^i$ be a coordinate on $M$, 
%$\xi_i$ a coordinate of the fiber of $T^* [1] M$,
%$q^i$ a coordinate of the fiber of $T [1] M$,
%and 
%$p_i$ a coordinate of the fiber of $T^* [0] M$.
%
%
The P-structure is defined as
\begin{eqnarray*}
\omega &=& 
\delta x^{i} \wedge \delta \xi_{i} 
+
\delta p_{i} \wedge \delta q^{i}.
\end{eqnarray*}
If we take the Q-structure as
\begin{eqnarray}
\Theta &=& 
J^i{}_j(x) \xi_{i} q^j
+ \frac{\partial J^i{}_k}{\partial x^j}(x) 
p_{i} q^j q^k
\nonumber \\
&=& 
\left(
\begin{matrix}
\xi_{i} \ q^i& \cr
\end{matrix}
\!\!\right)
\left(
\begin{matrix}
0 & \frac{1}{2}J^i{}_j(x)& \cr
                - \frac{1}{2}J^j{}_i(x) &
 \frac{\partial J^i{}_k}{\partial x^j}(x) 
p_{i}  & \cr
\end{matrix}
\!\!\right)
 \left(
\begin{matrix}
\xi_{j} & \cr
                q^j & \cr
\end{matrix}
\!\!\right),
\nonumber
%\label{bmodel}
\end{eqnarray}
then $\sbv{\Theta}{\Theta} =0$ is equivalent to condition 
$2)$ 
%which ensures integrability, 
in the definition of the complex structure $J$.
%%%%%%%%%%%%%%%%%%%%%%%%%%%%%%%%%%%%%%%%%%%%%%%%%%%%%%%%%%%%%%%%%%%%%%

%%%%%%%%%%%%%%%%%%%%%%%%%%%%%%%%%%%%%%%%%%%%%%%%%%%%%%%%%%%%%%%%%%%%%%
\subsection{
%QP-manifold of 
$n=2$}
\noindent
The following theorem is well known.
%an $n=2$ QP-manifold is equivalent to
%a Courant algebroid 
\cite{Roy01, Roy02}
\begin{theorem}\label{CalgeQP22}
%[Roytenberg '99]
A QP-structure of degree $2$ 
%on
%$\calM =  T^*[2]E[1] $
is equivalent to the Courant algebroid on a vector bundle $E$ over a smooth 
manifold $M$.
\end{theorem}
We explain this in detail.
%A concrete example is the Courant algebroid.
%have the following example.
\subsubsection{Courant Algebroid}\label{Calgebroid}
For $n=2$, the P-structure 
$\omega$ is an even form of degree $2$. 
The Q-structure function $\Theta$ has degree $3$.
$Q^2=0$ defines a Courant algebroid \cite{Courant, LWX}
structure on a vector bundle $E$.
%A QP-manifold of degree $2$ is a Courant algebroid

First, let us introduce the most general form of the QP-manifold of degree $2$,
%a P-manifold,
$(\calM, \omega, \Theta)$.
We denote the local coordinates of $\calM$ as $(x^i, \qone^a, \xi_i)$
of degrees $(0, 1, 2)$.
%We identify $p_a = k_{ab} q^b$
%
The P-structure $\omega$ of degree $2$ 
can be locally written as
\begin{eqnarray}
\omega =  \delta x^i \wedge \delta \xi_i 
+ \frac{k_{ab}}{2} \delta \qone^a \wedge \delta \qone^b,
\end{eqnarray}
where we have introduced a metric $k_{ab}$
on the degree one subspace.
The general form of the Q-structure function of degree $3$ is
\begin{eqnarray}
\Theta = f_1{}^i{}_{a} (x) \xi_{i} \qone{}^a 
+ \frac{1}{3!} f_{2abc} (x) \qone{}^a \qone{}^b \qone{}^c,
\label{courantQ}
\end{eqnarray}
where $f_1{}^i{}_{a} (x)$ and $f_{2abc} (x)$ are
local functions of $x$.
The Q-structure condition $\sbv{\Theta}{\Theta}=0$
imposes the following relations on these functions:
\begin{eqnarray}
&& k^{ab} f_{1}{}^i{}_{a} f_{1}{}^j{}_{b} = 0, \nonumber \\ 
&& \frac{\partial f_{1}{}^i{}_{b}}{\partial x^j} f_{1}{}^j{}_{c}
- \frac{\partial f_{1}{}^i{}_{c}}{\partial x^j} f_{1}{}^j{}_{b}
+ k^{ef}  f_{1}{}^i{}_{e}  f_{2fbc} = 0, \nonumber \\
&& \left( f_{1}{}^i{}_{d} \frac{\partial f_{2abc}}{\partial x^i}
- f_{1}{}^i{}_{c} \frac{\partial f_{2dab}}{\partial x^i}
+ f_{1}{}^i{}_{b} \frac{\partial f_{2cda}}{\partial x^i}
- f_{1}{}^i{}_{a} \frac{\partial f_{2bcd}}{\partial x^i} 
\right) 
\nonumber \\
&& \qquad
+ k^{ef} (f_{2eab}  f_{2cdf} 
+ f_{2eac}  f_{2dbf} 
+ f_{2ead}  f_{2bcf})
= 0.
\label{courantrelation}
\end{eqnarray}
We can prove that 
these identities (\ref{courantrelation}) are the same as the local coordinate 
expressions of the Courant algebroid conditions
on a vector bundle $E$.
The Courant algebroid is defined as:
\begin{definition}\label{courantdefinition}
A Courant algebroid is a vector bundle $E \longrightarrow M$,
and it has a nondegenerate symmetric bilinear form
$\bracket{\cdot}{\cdot}$ 
on the bundle, a bilinear operation $\circ$ on $\Gamma (E)$,
%(the space of sections on $E$)
and a bundle map called an anchor map,
$\rho: E \longrightarrow TM$, satisfying the following properties:
%\cite{Roy01}:
%
\begin{eqnarray}
&& 1, \quad e_1 \circ (e_2 \circ e_3) = (e_1 \circ e_2) \circ e_3 
+ e_2 \circ (e_1 \circ e_3), 
  \label{courantdef1}
\\
&& 2, \quad \rho(e_1 \circ e_2) = [\rho(e_1), \rho(e_2)], 
  \label{courantdef2}
\\
&& 3, \quad e_1 \circ F e_2 = F (e_1 \circ e_2)
+ (\rho(e_1)F)e_2, 
  \label{courantdef3}
 \\
&& 4, \quad e_1 \circ e_2 = \frac{1}{2} {\cal D} \bracket{e_1}{e_2},
  \label{courantdef4}
\\ 
&& 5, \quad \rho(e_1) \bracket{e_2}{e_3}
= \bracket{e_1 \circ e_2}{e_3} + \bracket{e_2}{e_1 \circ e_3},
  \label{courantdef5}
%\label{CAdefinitions1}
%\nonumber
\end{eqnarray}
where 
$e_1, e_2$, and $e_3$ are sections of $E$, $F$ is a function on
$M$ and 
${\cal D}$ is a map from the space of functions on $M$ to $\Gamma (E)$, 
defined as 
$\bracket{{\cal D}F}{e} = \rho(e) F$.
\end{definition}
\if0
The relation between a Courant algebroid and a QP-manifold is found 
in Ref.~\citenum{Roy01}.
\begin{theorem}\label{CalgeQP22}
%[Roytenberg '99]
A QP-structure of degree $2$ 
%on
%$\calM =  T^*[2]E[1] $
is equivalent to the Courant algebroid on $E$.
\end{theorem}
Connection of equation (\ref{courantQ})
to a Courant algebroid 
is a direct result of this theorem.
\fi
Let $x^i$ be a local coordinate on $M$, and let 
$e^a$ be a local coordinate on the fiber of $E$.
We can write each operation on the local basis $x^i, e^a$,
as follows:
\begin{eqnarray}
&& e^a \circ e^b = k^{ad} k^{be} f_{2dec}(x) e^c,
\nonumber \\
&& \bracket{e^a}{e^b} = k^{ab},
\nonumber \\
&& \rho(e^a) F(x) = - k^{ab} f_{1}{}^i{}_{b} (x)
\frac{\partial F}{\partial x^i}(x),
\nonumber
\end{eqnarray}
where  $f_1$ and $f_2$ are local functions of $x$.
Substituting these expressions into
the relations given in equations \eqref{courantdef1}--\eqref{courantdef5}, we obtain
the identities (\ref{courantrelation}).

These operations can be constructed directly from a QP-manifold $\calM$,
without introducing local coordinates.
For this, we identify a section $e$ with an odd element $\qone$
in supergeometry computations using the shift functor $[1]$.
Then, the operations of the Courant algebroid 
can be represented as
\begin{eqnarray}
&& e_1 \circ e_2 \equiv - \sbv{\sbv{e_1}{\Theta}}{e_2},
\nonumber \\
&& \bracket{e_1}{e_2} \equiv  \sbv{e_1}{e_2},
\nonumber \\
&& \rho(e) F \equiv  \sbv{e}{\sbv{\Theta}{F}},
\nonumber \\
&& {\cal D}(*) \equiv \sbv{\Theta}{*},
\label{QPCourant}
\end{eqnarray}
where $F(x)$ is a function of degree $0$ and $e = \qone$ is a function of degree $1$.
We can also prove that $\sbv{\Theta}{\Theta}=0$ gives
the Courant algebroid structure \eqref{courantdef1}--\eqref{courantdef5}
 without using local coordinates.
Finally, a vector bundle $E$ is constructed from a graded manifold $\calM$
by a natural filtration of degree 
$\calM \longrightarrow E[1] \longrightarrow M$.

An important example of a Courant algebroid
is the direct sum of the tangent and cotangent bundles,
$E= TM \oplus T^*M$.
The bilinear operation is defined as
\begin{eqnarray}
(X+\alpha) \circ (Y+\beta)=[X+\alpha, Y+\beta]_{D}
=[X, Y] + L_X \beta - \iota_Y d \alpha.
\label{DorfmanTMTM}
\end{eqnarray}
Here, $X, Y \in TM$ are vector fields, $\alpha, \beta \in T^*M$ are $1$-forms,
$[-,-]$ is the ordinary Lie bracket on a vector field,
$L_X$ is the Lie derivative,
and $\iota_X$ is the interior product, respectively.
The bracket \eqref{DorfmanTMTM} is called the \textit{Dorfman bracket}, 
and generally it is not antisymmetric.
The Dorfman bracket is the most general bilinear form on $TM \oplus T^*M$ without background flux,
which satisfies the Leibniz identity.\footnote{
Note that $\circ$ is not necessarily assumed to be antisymmetric.
For a nonantisymmetric bracket, 
equation   \eqref{courantdef1}
is called \textsl{Leibniz identity}
instead of Jacobi identity.}
The antisymmetrization of the Dorfman bracket is called 
the Courant bracket.
The Courant bracket is antisymmetric, but it does not satisfy 
the Jacobi identity.
The symmetric form is 
$\bracket{X+\alpha}{Y+\beta} = \iota_X \beta + \iota_Y \alpha$
and the anchor map $\rho$ is the natural projection
to $TM$:
\begin{eqnarray}
\rho(X+\alpha) = X.
\end{eqnarray}

The corresponding QP-manifold is $\calM = T^*[2]T^*[1]M$.
The local Darboux coordinates are $(x^i, q^i, p_i, \xi_i)$, which have degrees $(0, 1, 1, 2)$\footnote{We can compare this formulation 
with the most general form 
of a QP-manifold by taking $\qone^a = (q^i, p_i)$.}.
Here, $q^i$
is a fiber coordinate of $T[1]M$,
$p_i$ a fiber coordinate of $T^*[1]M$, and
$\xi_i$ a fiber coordinate of $T^*[2]M$, respectively.
%In a local coordinate, 
%indices $a, b, \cdots$ run the same dimensions with $i, j, \cdots$.
With degree shifting, $TM \oplus T^*M$ is naturally embedded into $T^*[2]T^*[1]M$
as $(x^i, d x^i, \frac{\partial}{\partial x^i}, 0) 
\mapsto (x^i, q^i, p_i, \xi_i)$.
The Courant algebroid structure on $TM \oplus T^*M$ is 
constructed from equation (\ref{QPCourant}).
The Dorfman bracket 
can be found via a derived bracket as $[-,-]_D = \sbv{\sbv{-}{\Theta}}{-}$ 
with $\Theta = \xi_i q^i$.
It means that $f_{1}{}^i{}_{j} = \delta^i{}_j$ and $f_{2ijk} = 0$.
This Courant algebroid is also called the \textsl{standard Courant algebroid}.

There is a freedom to introduce a closed $3$-form $H(x)$ as an extra datum.
If the Dorfman bracket is modified 
by $H(x)$ as
$(X+\alpha) \circ (Y+\beta)=[X+\alpha, Y+\beta]_{D}
=[X, Y] + L_X \beta - i_Y d \alpha + i_X i_Y H$,
the Courant algebroid structure is preserved.
This is
called the Dorfman bracket with a $3$-form $H$.
The P-structure remains the same, but
$\Theta$ is modified as 
$\Theta = \xi_i q^i + \frac{1}{3!} H_{ijk}(x) q^i q^j q^k$,
where $H(x)= \frac{1}{3!} H_{ijk}(x) dx^i \wedge dx^j \wedge dx^k$.
$\sbv{\Theta}{\Theta}=0$ is equivalent to $dH=0$.
This is called the \textsl{standard Courant algebroid with H-flux}.

There is an equivalent definition of the Courant algebroid
\cite{Kos2}, and it is closer to the construction from a QP-manifold.
\begin{definition}\label{CourantKS}
Let $E$ be a vector bundle over $M$ that is
equipped with a pseudo-Euclidean metric $(-,-)$,
a bundle map $\rho:E \longrightarrow TM$,
and a binary bracket $[-,-]_{D}$ on $\Gamma (E)$.
The bundle is called the Courant algebroid if
the following three conditions are satisfied:
\begin{eqnarray}\label{defcou1}
\label{cou1}
[e_1 ,[e_2 , e_3]_{D}]_{D}
&=&[[e_1 ,e_2]_{D},  e_3]_{D}+[e_2 ,[e_1, e_3]_{D}]_{D},
%\nonumber 
\\
\label{cou2} \rho(e_1)(e_2,e_3)&=&([e_1,e_2]_{D},e_3)+(e_2,[e_1,e_3]_{D}),
%\nonumber 
\\
\label{cou3} \rho(e_1)(e_2,e_3)&=&(e_1,[e_2,e_3]_{D}+[e_3,e_2]_{D}),
%\nonumber
\end{eqnarray}
where $e_1 ,e_2 , e_3 \in\Gamma (E)$.
\end{definition}
We can prove that Definitions \ref{courantdefinition} and
\ref{CourantKS} are equivalent if the operations are
identified as 
$e_1 \circ e_2 = [e_1,e_2]_{D}$, $\bracket{e_1}{e_2} = (e_1,e_2)$,
with the same bundle map $\rho$.

\paragraph{Dirac structure}
A Dirac structure can be formulated in QP-manifold
language.
%\begin{example}[Dirac Structure.]
%{\rm
A \textit{Dirac structure} is a Lie algebroid, which is a substructure of a Courant algebroid, defined by: 
\begin{definition}
A \textsl{Dirac structure} $\calL$ is a 
maximally isotropic subbundle of a Courant algebroid $E$, 
whose sections are closed under the Dorfman bracket. That is,
\begin{eqnarray}
&& \bracket{e_1}{e_2}=0 \ \ (\mbox{isotropic}),
\label{Diracstr1}
\\
&& [e_1, e_2]_{C} \in \Gamma(\calL) \ \ (\mbox{closed}),
\label{Diracstr2}
\end{eqnarray}
for $e_1, e_2 \in \Gamma(\calL)$,
where $[e_1, e_2]_C = [e_1, e_2]_{D} - [e_2, e_1]_{D}$
is the Courant bracket.
\end{definition}
%The symmetric scalar product $\langle -, - \rangle$ 
%is equivalent to the Q-structure $\sbv{-}{-}$ of a QP-manifold construction
%of the Courant algebroid.
In QP-manifold language, the sections $\Gamma(\wedge^{\bullet} E)$ are identified with functions on the QP-manifold $C^{\infty}(\calM)$.
Then, the sections of the Dirac structure $\Gamma (\calL)$ 
are the functions with the conditions corresponding to
\eqref{Diracstr1} and \eqref{Diracstr2},
which are commutativity under the P-structure $\sbv{-}{-}$, and closedness under the derived bracket $\sbv{\sbv{-}{\Theta}}{-}$, respectively.
%from equation (\ref{QPCourant}),

The Dirac structure on
the complexification of the Courant algebroid, 
$(TM \oplus T^*M )\otimes \bC$,
defines a generalized complex structure.
\cite{Hitchin:2004ut, Gualtieri:2003dx}
%}
%\end{example}
%%%%%%%%%%%%%%%%%%%%%%%%%%%%%%%%%%%%%%%%%%%%%%%%%%%%%%%%%%%%%%%%%%%%%%
\subsection{
%QP-manifold of 
$n \geq 3$}
\noindent
We now define the algebraic and geometric structures which appear for $n \geq 3$ and give some examples.
An earlier analysis of the unification of algebraic and geometric structures
induced by higher QP-structures has been found in Ref.~\citenum{Severa:2001}.
\begin{definition}
A vector bundle $(E, \rho, [-,-]_L)$ is called an \textsl{algebroid}
if there is a bilinear operation 
$[-,-]_L: \Gamma (E) \times \Gamma (E) \to \Gamma (E)$,
%$[e_{1},e_{2}]$, where $e_{1},e_{2}\in\Gamma (E)$, 
and a bundle map $\rho:E\to TM$
satisfying the following conditions:
\begin{eqnarray}
&& \rho[e_{1},e_{2}]_L=[\rho(e_{1}),\rho(e_{2})], 
\label{algebroid1} \\
%\item[(A1)]
&& [e_{1},Fe_{2}]_L=F[e_{1},e_{2}]_L+\rho(e_{1})(F)e_{2},
\label{algebroid2}
\end{eqnarray}
where $F \in C^{\infty}(M)$ and
$[\rho(e_{1}),\rho(e_{2})]$
is the usual Lie bracket on $\Gamma (TM)$.
Note that 
$[-,-]_L$ need not be antisymmetric, and it need not satisfy the Jacobi identity.
$\rho$ is called \textsl{anchor map}.
\end{definition}
\begin{definition}
An algebroid $(E, \rho, [-,-]_L)$ is called 
a \textsl{Leibniz algebroid} 
%(or a \textsl{Loday algebroid})
if there is a bracket product 
%\begin{equation}\label{braee}
$[e_{1},e_{2}]_L$ satisfying the Leibniz identity:
%:=\{\{\theta_{13},e_{1}\},e_{2}\},
%\end{equation}
\begin{eqnarray}\label{defcou2}
\label{loday1}
[e_1 ,[e_2 , e_3]_L]_L&=&[[e_1, e_2]_L, e_3]_L+[e_2 ,[e_1, e_3]_L]_L,
%\\
%\label{cou2} \rho(x)(y,z)&=&([x,y],z)+(y,[x,z]),\\
%\label{cou3} \rho(x)(y,z)&=&(x,[y,z]+[z,y]),
\label{leibnizidentity}
\end{eqnarray}
where $e_{1},e_{2}, e_3 \in\Gamma (E)$.
\end{definition}
%
%
%Note that $[-, -]_L$ is not necessarily assumed to be antisymmetric.
%For a nonantisymmetric bracket, 
%equation (\ref{leibnizidentity}) is called \textsl{Leibniz identity}
%instead of Jacobi identity.

If the base manifold is a point $M = \{ pt \}$ and $\rho=0$, then 
the Leibniz algebroid reduces to a linear algebra, 
which is called Leibniz algebra \cite{LP, Loday}
\footnote{It is also called a Loday algebra.}.
A Leibniz algebra is a Lie algebra if the Leibniz bracket 
$[-, -]_L$ is antisymmetric.
Lie algebroids and Courant algebroids are also Leibniz algebroids.
The Lie bracket $[-,-]$ of the Lie algebroid and 
the Dorfman bracket $[-,-]_{D}$ 
of the Courant algebroid are identified as
special cases of the Leibniz bracket $[-, -]_L$.
Equation (\ref{cou1}) of the Dorfman bracket is 
equivalent to equation (\ref{leibnizidentity}).

The correspondence of a Leibniz algebroid 
to a homological vector field on a graded manifold is 
discussed in Ref.~\citenum{GKP}.
The following theorem has been
 presented in Ref.~\citenum{Kotov:2010wr}.
\begin{theorem}\label{QPloday}
Let $n > 1$. 
Functions of degree $n-1$ on a QP-manifold can be identified
with sections of a vector bundle.
The QP-structure induces a Leibniz algebroid structure on 
a vector bundle $E$.
\end{theorem}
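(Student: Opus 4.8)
The plan is to produce the Leibniz algebroid structure on $E$ by a derived-bracket construction, directly generalizing the Courant algebroid case of Example~\ref{Calgebroid} (degree $n=2$), where the relevant sections lived in degree $1=n-1$ and the Dorfman bracket was recovered as $\sbv{\sbv{-}{\Theta}}{-}$ in (\ref{QPCourant}). The guiding principle is that everything should be assembled from the single object $\Theta$ using the graded Poisson bracket, and the two structural conditions $\sbv{\Theta}{\Theta}=0$ (the master equation) and the nonnegativity of the grading should do all the work.

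First I would identify the vector bundle $E$. Since $\calM$ is an N-manifold, its structure sheaf is locally $C^{\infty}(U)\otimes S^{\cdot}(V)$ with $V$ concentrated in positive degrees; hence there are only finitely many monomials of any fixed degree, and the degree-$(n-1)$ homogeneous functions form a finitely generated projective module over $C^{\infty}(M)=C^{\infty}(\calM)_0$. By Serre--Swan this module is $\Gamma(E)$ for a vector bundle $E\to M$, so I set $\Gamma(E):=C^{\infty}(\calM)_{n-1}$ and regard a section $e$ as a degree-$(n-1)$ function. Next I would define the operations. Because $\omega$ has degree $n$, the bracket $\sbv{-}{-}$ lowers total degree by $n$, and $\Theta$ has degree $n+1$, so for $e_1,e_2\in C^{\infty}(\calM)_{n-1}$ the element $\sbv{e_1}{\Theta}$ has degree $n$ and $\sbv{\sbv{e_1}{\Theta}}{e_2}$ again has degree $n-1$. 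I would therefore set, mirroring (\ref{QPCourant}),
\[
[e_1,e_2]_L:=-\sbv{\sbv{e_1}{\Theta}}{e_2},\qquad \rho(e)(F):=\sbv{e}{\sbv{\Theta}{F}}\quad(F\in C^{\infty}(M)).
\]
A degree count gives $\rho(e)(F)$ degree $0$, so it is a function on $M$; that $\rho(e)$ is a derivation, hence a vector field, follows from the graded Leibniz rule together with the crucial degree truncation $\sbv{e}{F}=0$ (its degree would be $-1$). The same vanishing lets me pass freely between $\sbv{\sbv{e}{\Theta}}{F}$ and $\sbv{e}{\sbv{\Theta}{F}}$ via the Jacobi identity, so the anchor is consistent with the Courant normalization.

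Then I would verify the three defining axioms. The anchor-compatibility (\ref{algebroid1}) and the Leibniz-rule property (\ref{algebroid2}) both reduce, after expanding by the graded Jacobi and graded Leibniz rules, to applications of $\sbv{\Theta}{\Theta}=0$ and of $\sbv{e}{F}=0$; in particular the leftover term in $[e_1,Fe_2]_L$ is exactly $\rho(e_1)(F)\,e_2$. The Leibniz (Loday) identity (\ref{leibnizidentity}) is the heart of the matter: it is precisely the statement that the derived bracket of the odd differential $Q=\sbv{\Theta}{-}$ is a Loday bracket. Since $Q^2=\tfrac12\sbv{\sbv{\Theta}{\Theta}}{-}=0$, the abstract derived-bracket theorem of Kosmann-Schwarzbach \cite{Kosmann-Schwarzbach:2003en} yields (\ref{leibnizidentity}) on all of $C^{\infty}(\calM)$, and it restricts to $\Gamma(E)$ because the bracket preserves degree $n-1$.

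The hard part will be the sign bookkeeping. The Koszul signs attached to $\sbv{-}{-}$ depend on $n\bmod 2$ (the bracket itself carries degree $-n$), and one must check that the abstract derived-bracket identities survive restriction to the single degree-$(n-1)$ slice with a single consistent convention, so that $[-,-]_L$ and $\rho$ fit together without stray signs. The only genuinely global input beyond these local computations is the local freeness of $C^{\infty}(\calM)_{n-1}$ as a $C^{\infty}(M)$-module, which is where the N-manifold hypothesis (nonnegative grading, giving both $\sbv{e}{F}=0$ and finite rank) is indispensable.
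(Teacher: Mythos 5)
Your proposal is correct and follows essentially the same route as the paper: both define the bracket and anchor as derived brackets, $[e_1,e_2]_L=-\sbv{\sbv{e_1}{\Theta}}{e_2}$ and (up to a sign fixed by the graded Jacobi identity together with the degree-truncation $\sbv{e}{F}=0$) $\rho(e)F=\pm\sbv{\sbv{e}{\Theta}}{F}$, and deduce the Leibniz algebroid axioms from $\sbv{\Theta}{\Theta}=0$ and the nonnegativity of the grading. The paper merely writes down these formulas and cites \cite{Kotov:2010wr}; your Serre--Swan identification of the degree-$(n-1)$ functions with $\Gamma(E)$ and the explicit degree counts supply details it leaves implicit.
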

%
%\begin{proof}
Let $x$ be an element of degree $0$, and let 
$e^{(n-1)}$ be an element of degree $n-1$.
If we define
\begin{eqnarray}
[e_1, e_2]_L
&=& - \sbv{\sbv{e_1^{(n-1)}}{\Theta}}{e_2^{(n-1)}}, \\
\rho(e) F(x)
&=& (-1)^{n} \sbv{\sbv{e^{(n-1)}}{\Theta}}{F(x)},
\end{eqnarray}
then $e^{(n-1)}$ is identified with a section of a vector bundle, and
$[-,-]_L$ and $\rho$ satisfy the conditions in the definition of a Leibniz algebroid given by equations (\ref{algebroid1}), (\ref{algebroid2})
and (\ref{loday1}).

%\end{proof}
%degree $i$ element $\bq^{(i)}$, degree $n-i$ element $\veta^{(n-i)}$
%where $i=$

\subsubsection{$n=3$}\label{Talgebroid}
Let $n=3$. 
Let $(\calM, \omega, \Theta)$ be a QP-manifold of degree $3$.
$\calM$ has a natural filtration of degree 
$\calM \longrightarrow \calM_2 \longrightarrow \calM_1 \longrightarrow M$,
where $\calM_i ~(i=1,2)$ is a graded subspace of degree less than 
or equal to $i$.
The local coordinates are $(x^i, q^a, p_a, \xi_i)$
of degrees $(0, 1, 2, 3)$.
The P-structure $\omega$ is an odd symplectic form of degree $3$,
and it can be locally written as
\begin{eqnarray}
\omega = \delta x^i \wedge \delta \xi_i 
%+
- 
\delta q^a \wedge \delta p_a.
\end{eqnarray}
Since the Q-structure function is of degree $4$, 
its general form is
\begin{eqnarray}\label{deftheta1}
\Theta &=& f{}_1{}^i{}_{a} (x) \xi_i q^a 
+\frac{1}{2} f_2{}^{ab}(x) p_a p_b
+\frac{1}{2} f_3{}^a{}_{bc}(x) p_a q^b q^c
%\nonumber \\ &&
+\frac{1}{4!} f_4{}_{abcd}(x) q^a q^b q^c q^d,
\end{eqnarray}
where the $f_i$'s are local functions of $x$.
The Q-structure condition $\sbv{\Theta}{\Theta}=0$
imposes the following relations on these functions:
\begin{eqnarray}
&&f{}_1{}^i{}_{b} f_2{}^{ba} = 0,\\
\label{fc1}
&&
f{}_1{}^k{}_{c} \frac{\partial f_2{}^{ab}}{\partial x^k} 
+ f_2{}^{da} f_3{}^b{}_{cd} + f_2{}^{db} f_3{}^a{}_{cd} = 0,\\
\label{fc2}
&&
f{}_1{}^k{}_{b} \frac{\partial f{}_1{}^i{}_{a}}{\partial x^k} 
- f{}_1{}^k{}_{a} \frac{\partial f{}_1{}^i{}_{b}}{\partial x^k} 
+ f{}_1{}^i{}_{c} f_3{}^c{}_{ab} = 0,\\
\label{fc3}
&& 
f{}_1{}^k{}_{[d} \frac{\partial f_3{}^a{}_{bc]}}{\partial x^k} 
+ f_2{}^{ae} f_4{}_{bcde}
- f_3{}^a{}_{e[b} f_3{}^e{}_{cd]} = 0,\\
\label{fc4}
&& 
f{}_1{}^k{}_{[a} \frac{\partial f_4{}_{bcde]}}{\partial x^k} 
+ f_3{}^f{}_{[ab} f_4{}_{cde]f} =0.
\label{fc5}
\end{eqnarray}
%where $[a \ b \ c \ d \ \cdots]$ is 
%an antisymmetrization
%with respect to the indices $a, b, c, d, \cdots$.
Here, $[abc \cdots]$ is the 'intermolecular antisymmetrization'
%or Young antisymmetrization
,
i.e., 
for two completely antisymmetric tensors 
$f_{a_1\cdots a_r}$ and $g_{b_1\cdots b_s}$,
this is an antisymmetric sum of only 
nonantisymmetric indices of $f$ and $g$ with unit weight,
\begin{eqnarray}
f_{[a_1\cdots a_r}g_{b_1\cdots b_s]}
= \frac{1}{r!s!} \sum_{\sigma \in \mathfrak{S}_{r+s}} 
{\rm sgn} (\sigma)
f_{a_{\sigma(1)}\cdots a_{\sigma(r)}}
g_{a_{\sigma(r+1)}\cdots a_{\sigma(r+s)}}.
\end{eqnarray}
For example, 
$f_3{}^a{}_{e[b} f_3{}^e{}_{cd]}
= f_3{}^a{}_{eb} f_3{}^e{}_{cd}
+ f_3{}^a{}_{ec} f_3{}^e{}_{db}
+ f_3{}^a{}_{ed} f_3{}^e{}_{bc}$
and 
$f_3{}^f{}_{[ab} f_4{}_{cde]f}$ has $\frac{5!}{2!3!} = 10$ terms.
\footnote{
If we take the notation that $[--]$ denotes complete 
antisymmetrization,
equation (\ref{fc4}) is
$f{}_1{}^k{}_{[d} \frac{\partial f_3{}^a{}_{bc]}}{\partial x^k} 
+ 2 f_2{}^{ae} f_4{}_{bcde}
- f_3{}^a{}_{e[b} f_3{}^e{}_{cd]} = 0$,
and
equation (\ref{fc5}) is
$f{}_1{}^k{}_{[a} \frac{\partial f_4{}_{bcde]}}{\partial x^k} 
+ \frac{1}{2} f_3{}^f{}_{[ab} f_4{}_{cde]f} =0
$.
}

These identities, equations (\ref{fc1})--(\ref{fc5}), define 
the Lie 3-algebroid on the vector bundle $E$, also called the Lie algebroid
up to homotopy, or the splittable H-twisted Lie algebroid
\cite{Ikeda:2010vz}.
%This algebroid is called a Lie $3$-algebroid, and i
It is a special case of 
%a general nonsplittable algebroid, 
the H-twisted Lie algebroid
\cite{Grutzmann}.

\subsubsection{Higher Dorfman Bracket}\label{liealoid}
Let $E$ be a vector bundle on $M$, and let $\calM = T^{*}[n]E[1]$
%T^{*}[n](\mathbb{E}[N])
be a graded manifold of degree $n$, where $n\ge 4$.
%$E$ becomes a Lie algebroid and
We take local coordinates $(x^i, q^a, p_a, \xi_i)$
of degrees $(0, 1, n-1, n)$.
The QP-structure is naturally defined on $\calM = T^{*}[n]E[1]$, and
the P-structure $\omega$ is of degree $n$
and can be locally written as
\begin{eqnarray}
\omega = \delta x^i \wedge \delta \xi_i 
+ 
(-1)^{n|q|}
%(-1)^{n} 
\delta q^a \wedge \delta p_a.
\end{eqnarray}
The general form of the Q-structure function is of degree $n+1$, and we have
\begin{eqnarray}\label{deftheta2}
\Theta &=& f{}_1{}^i{}_{a} (x) \xi_i q^a 
+\frac{1}{2} f_2{}^a{}_{bc}(x) p_a q^b q^c
%\nonumber \\ &&
+\frac{1}{(n+1)!} f_3{}_{a_1 \cdots a_{n+1}}(x) 
q^{a_1} q^{a_2} \cdots q^{a_{n+1}},
\end{eqnarray}
where the $f_i$'s are functions.
The Q-structure condition $\sbv{\Theta}{\Theta}=0$
imposes the following relations:
\footnote{In a complete-antisymmetrization notation, 
equation (\ref{fcn3}) is
$f{}_1{}^k{}_{[a_1} 
\frac{\partial f_3{}_{a_2\cdots a_{n+2}]}}{\partial x^k} 
+ \frac{2}{n+1} f_2{}^f{}_{[a_1 a_2} f_3{}_{a_3 \cdots a_{n+2}]f} =0$.
}
\begin{eqnarray}
&&
\label{fcn1}
f{}_1{}^k{}_{b} \frac{\partial f{}_1{}^i{}_{a}}{\partial x^k} 
- f{}_1{}^k{}_{a} \frac{\partial f{}_1{}^i{}_{b}}{\partial x^k} 
+ f{}_1{}^i{}_{c} f_2{}^c{}_{ab} = 0,\\
&& 
f{}_1{}^k{}_{[d} \frac{\partial f_2{}^a{}_{bc]}}{\partial x^k} 
- f_2{}^a{}_{e[b} f_2{}^e{}_{cd]} = 0,\\
\label{fcn2}
&& 
f{}_1{}^k{}_{[a_1} \frac{\partial f_3{}_{a_2\cdots a_{n+2}]}}{\partial x^k} 
+ f_2{}^f{}_{[a_1 a_2} f_3{}_{a_3 \cdots a_{n+2}]f} =0.
\label{fcn3}
\end{eqnarray}
A vector bundle $E\oplus\wedge^{n-1}E^{*}$ is naturally embedded into 
$T^{*}[n]E[1]$ by degree shifting.
The QP-structure induces an algebroid structure
on $E\oplus\wedge^{n-1}E^{*}$
by the derived bracket $[-,-]_{CD} = \sbv{\sbv{-}{\Theta}}{-}$, 
which is called the higher Dorfman bracket.
It has the following form:
\beq
[u+\alpha,v+\beta]_{CD}=[u,v]+L_{u}\beta-\iota_{v}d\alpha+H(u,v),
\eeq
where
$u,v\in\Gamma (E)$; $\alpha,\beta\in\Gamma(\wedge^{n-1}E^{*})$;
and $H$ is a closed $(n+1)$-form on $E$.
We refer to Refs.\citenum{Hagiwara, Wade, BS, Zambon:2010ka} 
for detailed studies on brackets of this type.
The graded manifold was analyzed in Ref.~\citenum{Zambon:2010ka}.

%%%%%%%%%%%%%%%%%%%%%%%%%%%%%%%%%%%%%%%%%%%%%%%%%%%%%%%%%%%%%%%%%%%%%%
\subsubsection{Nonassociative Example}\label{nplusoneform}
A large class of nontrivial nonassociative algebras (algebroids) 
are included
in a QP-manifold of degree $n$, and we show one such example.
We define $\Theta$ as 
\begin{eqnarray}
\Theta=\Theta_0+\Theta_{2}+\Theta_{3}+\cdot\cdot\cdot+\Theta_{n},
\end{eqnarray}
where 
\begin{eqnarray}
\Theta_0 = f_{0}{}^{a(0)}{}_{b(1)}(x)
\xi_{a(0)} q^{b(1)},
\label{nonassociativeTheta0}
\end{eqnarray}
and 
\begin{eqnarray}
\Theta_i = \frac{1}{i!}
f_{i}{}_{a(n-i+1)b_1(1) \cdots b_i(1)}(x)
q^{a(n-i+1)} q^{b_1(1)} \cdots q^{b_i(1)},
\label{nonassociativeThetai}
\end{eqnarray}
where $i = 2, 3, \cdots n$, and $(x^{a(0)}, q^{a_1(1)}, \cdots, 
q^{a(n-1)}, \xi_{a(0)})$ have degrees $(0,1,\cdots,n-1,n)$. 
In particular, $\Theta_{n}$ is an $(n+1)$-form
on $\Gamma(\bigwedge^{n+1}E_{1})$.
Then, the master equation $\sbv{\Theta}{\Theta}=0$
is equivalent to
\begin{eqnarray}
\{\Theta_0,\Theta_0\}&=&0,
\label{nonassociative01}\\
\{\Theta_0,\Theta_{i}\}&=&0, \ \ i<n,
\label{nonassociative02}
\end{eqnarray}
and
\begin{eqnarray}
\{\Theta_0,\Theta_{n}\}+\sum\{\Theta_{i},\Theta_{n-i}\}&=&0,\ \ 
(n~\mbox{odd}),
\nonumber \\
\{\Theta_0,\Theta_{n}\}+\frac{1}{2}\{\Theta_{n/2},\Theta_{n/2}\}
+\sum\{\Theta_{i},\Theta_{n-i}\}&=&0,\ \ (n~\mbox{even}).
\label{nonassociative03}
\end{eqnarray}
The first condition \eqref{nonassociative01}
implies that $d:=\{\Theta_0,-\}$
is a differential,
and
the second one \eqref{nonassociative02} implies that
$\Theta_{i}$ is a closed $i$-form for each $i<n$.
The third condition \eqref{nonassociative03} says that $\Theta_{n}$
is a closed $(n+1)$-form up to homotopy \cite{Uchino}.
This structure is regarded as an $n$-term $L_{\infty}$-algebra.

%%%%%%%%%%%%%%%%%%%%%%%%%%%%%%%%%%%%%%%%%%%%%%%%%%%%%%%%%%%%%%%%%%%%%
%%%%%%%%%%%%%%%%%%%%%%%%%%%%%%%  SEC. 5 %%%%%%%%%%%%%%%%%%%%%%%%%%%%%%
%%%%%%%%%%%%%%%%%%%%%%%%%%%%%%%%%%%%%%%%%%%%%%%%%%%%%%%%%%%%%%%%%%%%%
\section{AKSZ Construction of Topological Field Theories
}\label{AKSZTFTgeneral}
\noindent
In this section, the superfield formalism of topological field theories presented in Sections 2 and 3 is reformulated by the AKSZ construction.
If a QP-structure on the target graded manifold $\calM$ is given, 
a QP-structure is induced 
on the mapping space (i.e., a space of fields)
from the world-volume graded manifold $\calX$ to 
the target graded manifold $\calM$.
\cite{Alexandrov:1995kv, Cattaneo:2001ys, Roytenberg:2006qz}

%
%Let $X$ be a manifold in $n+1$ dimensions
%$X$ is called a world-sheet if $n=1$, 
%or a world-volume if $n >1$.
%and $M$ be a manifold in $d$ dimensions.
%$M$ is a called a target space.
Let $(\calX, D)$ be a differential graded manifold 
(a dg manifold)
$\calX$
with a $D$-invariant nondegenerate measure $\mu$,
%of degree $-n-1$ for a positive integer $n$,
where
$D$ is a differential on $\calX$.
%($\calM, \Delta, \Theta$)
Let ($\calM, \omega, Q$) be a QP-manifold of degree $n$, where  
$\omega$ is a graded symplectic form of degree $n$ and
$Q= \{\Theta, -\}$ is a differential on $\calM$.
%with $\calM|_0 = M$.
%
%A degree $\deg (-)$ on $\calX$ is called the form degree and 
%a degree $\gh (-)$ on $\calM$ is called the ghost number.
%\footnote{The ghost number $\gh (-)$ is the degree $|-|$
%on $\calM$ in section 2.}.
%
$\Map(\calX, \calM)$ is
%$\bbx:X \longrightarrow M \in 
a space of smooth maps from $\calX$ to $\calM$.
%$|-| = \deg (-) + \gh (-)$ is the degree
%on $\Map(\calX, \calM)$ and called the total degree.
%
The QP-structure on $\Map(\calX, \calM)$
is constructed from the above data.

Since 
${\rm Diff}(\calX)\times {\rm Diff}(\calM)$ 
naturally acts on $\Map(\calX, \calM)$,
$D$ and $Q$ induce differentials 
%homological vector fields 
on $\Map(\calX, \calM)$, 
$\hat{D}$ and $\check{Q}$. 
Explicitly, 
$\hat{D}(z, f) = D(z) d f(z)$ 
%$\hat{D}(z, f) = d f(z) D(z)$ 
and $\check{Q}(z, f) = Q f(z)$,
for $z \in \calX$ and $f \in \calM^{\calX} = \Map(\calX, \calM)$.

%
%We consider $\calX = T[1]X$
%In order to construct a QP-structure on $\Map(\calX, \calM)$,

Now, we introduce the following two maps.
The {\it evaluation map} 
${\rm ev}: \calX \times \calM^{\calX} \longrightarrow \calM$ 
is defined as
\begin{eqnarray}
{\rm ev}:(z, f) \longmapsto f(z),
\nonumber
\end{eqnarray}
where 
$z \in \calX$ and $f \in \calM^{\calX}$.

The {\it chain map} 
on the space of graded differential forms,
$\mu_*: \Omega^{\bullet}(\calX \times \calM^{\calX}) 
\longrightarrow \Omega^{\bullet}(\calM^{\calX})$, is defined as 
$$\mu_* \omega(f)(v_1, \ldots, v_k)
 = \int_{\calX} \mu(z)
 \omega(z, f) (v_1, \ldots, v_k),
$$
for a graded differential form $\omega$,
where $v$ is a vector field on $\calX$,
and 
$\int_{\calX} \mu$ is the integration over $\calX$.
%by the measure $\mu$.
When the degree is even, the integral is the standard one,  but when the degree is odd, it is the Berezin integral.
The map $\mu_* \ev^*: \Omega^{\bullet}(\calM) \longrightarrow
\Omega^{\bullet}(\calM^{\calX})$, which is called the 
\textsl{transgression map}, 
maps a graded differential form on the target space 
to a graded differential form on the mapping space.

The \textsl{P-structure} on $\Map(\calX, \calM)$ is defined as follows:
\begin{definition} 
For a graded symplectic form $\omega$ on 
$\calM$, 
a graded symplectic form $\bomega$ on $\Map(\calX, \calM)$ is
defined as
$\bomega := \mu_* \ev^* \omega$.
%a linear operator $\hat{\bDelta}$ on $\Map(\calX, \calM)$ as
%$\hat{\bDelta} \mu_* \ev^* F = \mu_* \ev^* \Delta F$,
%where $F, G \in C^{\infty}(\calM)$.
\end{definition}
%
%We can confirm that 
Here, $\bomega$ is nondegenerate and closed,
because $\mu_* \ev^*$ preserves nondegeneracy and closedness.
Also, $\bomega$ is a graded symplectic form 
on $\Map(\calX, \calM)$ and induces a graded Poisson bracket 
$\sbv{-}{-}$, which is a BV antibracket
on $\Map(\calX, \calM)$.

Next, the \textsl{Q-structure} $S$ on $\Map(\calX,  \calM)$
is constructed.
$S$ corresponds to a {\it BV action} and
consists of two parts: $S = S_0 + S_1$.
%
%$S_0$ is constructed as follows: 
%Let 
%$\sbv{\cdot}{\cdot}$ be
%a nondegenerate graded Poisson bracket 
%induced from a P-structure on $\calM$. 
%This defines t
%Let $\omega$ be the odd symplectic form derived from 
%a P-structure on $\calM$.
%
%(= D \bphi \wedge D \bb_1 -D \ba_1 \wedge D \ba_1)$ 
%($D=\theta^{\mu} \partial_{\mu}$) 
We take a canonical $1$-form (the Liouville $1$-form) $\vartheta$ 
for the P-structure on $\calM$ such that 
$\omega= - \delta \vartheta$, and
%$\Xi = {\rm ev}^* \vartheta$
we define $S_0 := \iota_{\hat{D}} \mu_* {\rm ev}^* \vartheta$,
which is equal to the kinetic term of the BF theory
$S^{(0)}$ presented in Section 2.\footnote{In the remainder of this paper, 
$S^{(0)}$ will be denoted as $S_0$.}
$S_1$ is constructed as follows: 
We take the Q-structure $\Theta$ on $\calM$ and 
map it by the transgression map, $S_1 := \mu_* \ev^* \Theta$.

From the definitions of $S_0$ and $S_1$, we can prove that
$S$ is a Q-structure on $\Map(\calX,  \calM)$ \cite{Cattaneo:2001ys}:
\begin{eqnarray}
\sbv{\Theta}{\Theta} =0
\Longleftrightarrow \ssbv{S}{S} =0.
\label{classicalmaster}
\end{eqnarray}
The right-hand side of this equation is 
the classical master equation in the BV formalism.
%on $\Map(\calX, \calM)$.
The homological vector field $\hQ$ on the mapping space 
is defined as $\hQ = \sbv{S}{-}$. 
By counting the degrees of $\sbv{-}{-}$ and $S$, it can be seen that the degree of $\hQ$ is 1.
$\hQ$ is a coboundary operator,
$\hQ^2=0$, by the classical master equation.
The cohomology defined by $\hQ$ is called BRST cohomology.
Since $\sbv{S_0}{S_0} = 0$, $S_0$ is considered to be a differential, and $S_1$ is considered to be a connection.
The classical master equation $\sbv{S}{S} =
2 \delta_0 S_1 + \sbv{S_1}{S_1} = 0$ is 
regarded as flatness condition, i.e., Maurer-Cartan equation.

%Note that $S_0$ is defined locally, but 
%(\ref{classicalmaster}) can be confirmed globally.

%The infinitesimal form of the right hand side 
%in (\ref{classicalmaster}) is 
%$\sbv{S}{S} - 2 i \hbar \hat{\bDelta} S =0$.
%This equation is called a {\it quantum master equation}.
%We summarize 
%Then t
The following theorem has been proved.
\cite{Alexandrov:1995kv}
\begin{theorem}\label{QPonmappingspace}
If $\calX$ is a differential graded manifold with a compatible measure
and 
$\calM$ is a QP-manifold,
%with $\calM|_0=M$,
then the graded manifold $\Map(\calX, \calM)$
inherits a QP-structure.
%induced from the QP-structure on $\calM$.
\end{theorem}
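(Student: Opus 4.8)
The plan is to verify, one at a time, the three ingredients of a QP structure on $\Map(\calX,\calM)$, all of which have already been written down explicitly above: the graded two-form $\bomega = \mu_*\ev^*\omega$, the action $S = S_0 + S_1$, and the vector field $\hQ = \sbv{S}{-}$. The underlying graded manifold structure on $\Map(\calX,\calM)$ is taken as given, so the content is to show (i) $\bomega$ is a genuine P-structure, (ii) $\hQ$ is a Q-structure, and (iii) the compatibility $\calL_{\hQ}\bomega = 0$.

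For (i) I would argue that $\bomega$ is closed because $\ev^*$ commutes with $d$ and the chain map $\mu_*$ does too (no boundary term survives, $\mu$ being $D$-invariant), so $d\bomega = \mu_*\ev^* d\omega = 0$. Nondegeneracy is the transgression statement asserted above: $\mu_*\ev^*$ transports the nondegenerate pairing $\omega$ to a nondegenerate pairing on the mapping space, using nondegeneracy of the integration supplied by $\mu$. A degree count shows $|\bomega| = n - \dim\calX$, which for $\dim\calX = n+1$ is $-1$, so $\bomega$ induces exactly the BV antibracket $\sbv{-}{-}$.

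For (ii), the degree of $\hQ$ is $+1$ from $|\hQ| = |S| + |\sbv{-}{-}|$, and the graded Jacobi identity gives $\hQ^2 F = \sbv{S}{\sbv{S}{F}} = \frac{1}{2}\sbv{\sbv{S}{S}}{F}$, so nilpotency $\hQ^2 = 0$ is equivalent to the master equation $\sbv{S}{S} = 0$, which is the already-stated equivalence (\ref{classicalmaster}). To establish it I would split $\sbv{S}{S} = \sbv{S_0}{S_0} + 2\sbv{S_0}{S_1} + \sbv{S_1}{S_1}$ and treat each term: $\sbv{S_0}{S_0} = 0$ because $S_0$ generates $\hat D$ and $D^2 = 0$; the cross term $\sbv{S_0}{S_1} = \hat D S_1 = \int_{\calX}\mu\, D(\ev^*\Theta)$ is a total $D$-derivative and vanishes by $D$-invariance of $\mu$; and $\sbv{S_1}{S_1} = \mu_*\ev^*\sbv{\Theta}{\Theta}$ vanishes precisely because $\calM$ is a QP-manifold, so $\sbv{\Theta}{\Theta} = 0$.

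Finally (iii) is automatic: $\hQ = \sbv{S}{-}$ is by construction the Hamiltonian vector field of $S$ with respect to $\bomega$, so $\iota_{\hQ}\bomega = \pm\, dS$, and Cartan's formula together with the closedness from step (i) gives $\calL_{\hQ}\bomega = d\,\iota_{\hQ}\bomega + \iota_{\hQ}\, d\bomega = \pm\, d^2 S + 0 = 0$. I expect the main obstacle to be the transgression input used in steps (i) and (ii): proving rigorously on the infinite-dimensional space $\Map(\calX,\calM)$ that $\mu_*\ev^*$ preserves nondegeneracy and intertwines the target Poisson bracket with the antibracket, so that in particular $\sbv{S_1}{S_1} = \mu_*\ev^*\sbv{\Theta}{\Theta}$ holds. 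Once this geometric input is secured, the algebraic assembly in (ii) and (iii) is routine.
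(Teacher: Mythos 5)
Your proposal follows essentially the same route as the paper: transgress the P-structure via $\bomega = \mu_*\ev^*\omega$, build $S = S_0 + S_1$ from $\vartheta$ and $\Theta$, and reduce $\sbv{S}{S}=0$ to $\sbv{\Theta}{\Theta}=0$ term by term, using $D$-invariance of $\mu$ (absence of boundary terms) to kill $\sbv{S_0}{S_0}$ and the cross term $\sbv{S_0}{S_1}$. The paper asserts rather than proves the key transgression facts --- that $\mu_*\ev^*$ preserves closedness, nondegeneracy, and intertwines the target bracket with the antibracket so that $\sbv{S_1}{S_1}=\mu_*\ev^*\sbv{\Theta}{\Theta}$ --- which is precisely the input you correctly single out as the real content.
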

%In our context, a
In fact, the QP-structure on $\Map(\calX,  \calM)$ yields
a topological field theory.

A topological field theory constructed from the BV formalism
is derived by considering a special super-world-volume $\calX$.
Let $X$ be an ($n+1$)-dimensional smooth manifold.
The supermanifold $\calX = T[1]X$ has 
a Berezin measure $\mu$ of degree $-n-1$, which is induced by the measure on $X$.
We can prove that the topological field theories in the previous sections 
%can be 
can be constructed by the AKSZ construction on $T[1]X$.
Conversely, if $\calX = T[1]X$, 
a QP-structure on $\Map(\calX, \calM)$ is 
equivalent to the BV formalism of a topological field theory
\cite{Cattaneo:2001ys, Ikeda:2001fq}.
We can prove that this theory is gauge invariant and unitary
by physical arguments,
%%\rc{discussion/arguments},
thus it defines a consistent quantum field theory.
\begin{definition}
An \textsl{AKSZ sigma model
(AKSZ topological field theory)
} in $n+1$ dimensions
is a QP-structure constructed in Theorem \ref{QPonmappingspace},
where $X$ in $\calX = T[1]X$ is an $n+1$ dimensional manifold 
and $\calM$ is a QP-manifold of degree $n$.
%and $S$ is a Q-structure BV action on $\Map(\calX, \calM)$
%induced by a QP-structure on $\calM$.
%$\bomega$ is a graded symplectic form 
%with the degree $n$
%and $S$ is a BV action with the degree $0$ such that 
%$\ssbv{S}{S} =0$.
%$\hat{\bDelta} (e^{\frac{i}{\hbar} S}) =0$.
\end{definition}
In an AKSZ sigma model, $\Map(\calX, \calM)$ is a QP-manifold of degree $-1$,
since there is a measure of degree $-n-1$ on $\calX$
and a QP-structure on $\calM$ of degree $n$.
Therefore, it is an odd symplectic manifold.
The graded Poisson bracket $\sbv{-}{-}$ is of degree 1
and $S$ is of degree 0.

The AKSZ formalism can be applied to realize
the Batalin-Fradkin-Vilkovisky (BFV) formalism 
corresponding to the Hamiltonian formalism, 
if we choose an $n$-dimensional manifold $X$ and
$\calX$ has a measure of degree $-n$.
\cite{Cattaneo:2010re}
Then, the AKSZ construction defines a QP-structure of degree $0$
on $\Map(\calX, \calM)$.
Its P-structure is the usual Poisson bracket and
$\Theta$ is the BRST charge of the BFV formalism.

In order to quantize the theory by the BV formalism,
the classical master equation 
(\ref{classicalmaster}) must be modified to 
the \textsl{quantum master equation}.
%An odd Laplace operator ${\Delta}$ is introduced.
An odd Laplace operator ${\Delta}$ 
on $\Map(\calX, \calM)$ can be constructed 
if $\Map(\calX, \calM)$ has
%the odd Poisson bracket and 
a measure $\brho$. 
\cite{Khudaverdian:2000zt, Khudaverdian:2002wz, Khudaverdian:2002ky}
%by a similar method as in section 4.1.
It is defined as
\begin{eqnarray}
{\Delta} F = \frac{(-1)^{|F|}}{2} {\rm div}_{\brho} X_F,
\label{oddlaplace2}
\end{eqnarray}
where $F \in C^{\infty}(\Map(\calX, \calM))$
and $X_F$ is the Hamiltonian vector field of $F$.
Here, the divergence ${\rm div}$ of the vector field $X$ is defined as
$
\int_{\Map(\calX, \calM)} \brho \ ({\rm div}_{\rho} X) F 
= - \int_{\Map(\calX, \calM)} \brho \ X(F)$
for arbitrary $F \in C^{\infty}(\Map(\calX, \calM))$.
%
%An odd Laplace operator is necessary for a BV quantization of a TFT.
%
If an odd Laplace operator is given,
an odd Poisson bracket can be constructed by the derived bracket:
\begin{eqnarray}
\{F,G\}:&=& (-1)^{|F|}[[{\Delta},{F}],{G}](1)
\nonumber \\
&=& (-1)^{|F|}\Delta(FG) - (-1)^{|F|}\Delta(F)G - F\Delta(G).
\nonumber
%\label{oddlaplacian}
\end{eqnarray}
The classical master equation is modified to the following equation:
\begin{eqnarray}
\Delta (e^{\frac{i}{\hbar} S_q})=0,
\nonumber
%\label{oddlaplacian}
\end{eqnarray}
where $S_q$ is the quantum BV action, which is a deformation 
of a classical BV action $S_q = S + \cdots$.
This equation is equivalent to the quantum master equation:
\begin{eqnarray}
-2 i \hbar \Delta S_q + 
\sbv{S_q}{S_q} =0.
%\nonumber
%\label{oddlaplacian}
\end{eqnarray}

The above definition of the odd Laplace operator $\Delta$ is formal,
because $\Map(\calX, \calM)$ is infinite dimensional in general.
%has an infinite number of dimensions.
The naive measure $\brho$ is divergent and needs regularization.
Moreover, even if the graded manifold is finite dimensional,
the solutions of the quantum master equation have obstructions,
that depend on the topological properties of the base manifold.
We refer to Refs.\citenum{Bonechi:2007ar, Bonechi:2009kx, Cattaneo:2008yf}
for analyses of the 
obstructions of the quantum master equation related to the odd Laplace operator
in AKSZ theories.

%$\Delta^2 = 0$
%odd Laplace operator 
%($\bb$)
%A
%
%
%
%$F, G \in Hom(\sum_k C^{\infty}(M)^{\otimes k}, C^{\infty}(M))$
%\begin{eqnarray}
%$
%\sbv{F}{G} 
%= \Omega^{-1} \sbv{F}{G} 
%&=& 
%\equiv (-1)^{(n+1) |F|} \Delta(FG) - (-1)^{(n+1) |F|} \Delta(F){G} 
%- (-1)^{n|F|} {F}\Delta(G)
%$
%\nonumber 
%\label{Poisson2ike}
%\end{eqnarray}
%}
%$\sbv{F}{G}$
%degree $-n+1$ Loday bracket
%

%%%%%%%%%%%%%%%%%%%%%%%%%%%%%%%%%%%%%%%%%%%%%%%%%%%%%%%%%%%%%%%%%%%%%
%%%%%%%%%%%%%%%%%%%%%%%%%%%%%%   SEC  6    %%%%%%%%%%%%%%%%%%%%%%%%%%
%%%%%%%%%%%%%%%%%%%%%%%%%%%%%%%%%%%%%%%%%%%%%%%%%%%%%%%%%%%%%%%%%%%%%
\section{Deformation Theory}\label{deformationtheory}
\noindent
In this section, 
we apply the deformation theory to the AKSZ formalism of TFTs 
and determine the most
 general consistent local BV action $S$ under physical conditions.
This method is also called homological perturbation theory.

%%%%%%%%%%%%%%%%%%%%%%%%%%%%%%%%%%%%%%%%%%%%%%%%%%%%%%%%%%%%%%%%%%%%%%%%
%\subsection{Deformation Theory}
We begin with $S=S_0$.
%define a Q-structure $\hQ=\hQ_0$ 
%for $\hQ_0 = \sbv{S_0}{-}$.
%because of $\hQ_0^2 = \sbv{S_0}{S_0}=0$.
%on $\Map(\calX, \calM)$.
In fact, $S_0 = S^{(0)}$ is determined from the P-structure only,
and it trivially satisfies 
%$\hQ^2=\hQ^2_0=0$.
%This is equivalent to 
the classical master equation $\sbv{S_0}{S_0}=0$.
Next, we deform $S_0$ to 
\begin{eqnarray}
S= \sum_{n=0}^{\infty} g^n S^{(n)}
= S^{(0)} + g S^{(1)} + g^2 S^{(2)} + \cdots
\label{deformationofS}
\end{eqnarray}
in order to obtain a consistent $S_1$ term,
where $g$ is the deformation parameter.
% and $S_0 = S^{(0)}$.	
$S$ is required to satisfy 
the classical master equation 
$\sbv{S}{S}=0$ in order to be a Q-structure.

The deformation $S^{\prime}$ is equivalent to 
$S$ if there exist 
local redefinitions of superfields
$\bbe^{a(i)} \mapsto \bbe^{\prime a(i)}=F(\bbe^{a(i)})$ satisfying
$S^{\prime}(\bbe^{\prime a(i)}) = S(\bbe^{a(i)})$, 
where $F$ is a function on $\Map(\calX, \calM)$.
If we expand $\bbe^{\prime a(i)}= \sum_m g^m F^{(m)}(\bbe^{a(i)})$,
then 
$ S(\bbe^{a(i)}) 
= S^{\prime}(\bbe^{\prime a(i)}) = 
S^{\prime}(\sum_m g^m F^{(m)}(\bbe^{a(i)})) 
= S^{\prime}(\bbe^{a(i)}) 
+ g \frac{\delta S^{\prime}(\bbe^{a(i)})}
{\delta \bbe^{b(j)}}F^{(1)}(\bbe^{b(j)})
+ \cdots$.
Therefore, the difference between the two actions is BRST exact to first order in $g$:
\begin{eqnarray}
S^{\prime} - S = \pm g \hQ^{\prime} \left(\int d \bbe \ F^{(1)} \right),
\label{equivalence}
\end{eqnarray}
where $\hQ^{\prime}$ is the BRST transformation defined by $S^{\prime}$.
It has been proved that higher-order terms can be 
absorbed order by order by the BRST exact terms.
Therefore, 
$S$ is equivalent to $S_0$ by field redefinition
if the deformation is exact $S = S_0 + \delta (*)$.
Therefore, computing the $\hQ$ cohomology class is sufficient for 
determining $S$.

If we substitute equation (\ref{deformationofS}) into $\sbv{S}{S}=0$
and expand it in $g$, 
we obtain the following series of equations:
\begin{eqnarray}
&& \sbv{S^{(0)}}{S^{(0)}} =0, 
\nonumber \\
&& \sbv{S^{(0)}}{S^{(1)}} =0, 
\nonumber \\
&& 2 \sbv{S^{(0)}}{S^{(2)}} + \sbv{S^{(1)}}{S^{(1)}} =0, 
\nonumber \\
&& \cdots.
\end{eqnarray}
The first equation is already satisfied by construction.
The second equation is $\hQ_0 S^{(1)}= 0$. 
%From (\ref{equivalence}), 
Therefore, $S^{(1)}$ is a cocycle of $\hQ_0$.

The third equation is an obstruction. 
We assume that the action is local. 
Thus,
%and thus
$S^{(1)}$ and $S^{(2)}$ are integrals of local Lagrangians. 
This means that it is the transgression of a function 
$\Theta^{(2)}$ on the target space,
$S^{(2)} = \mu_* \ev^* \Theta^{(2)}$,
%for a function 
%$S^{(2)} = \int_{\calX} \mu_* \ \Theta^{(2)}$ for a function 
where $\Theta^{(2)}
\in C^{\infty}(\calM)$.
%C^{\infty}\Map(\calX, \calM)$.
Since $\sbv{S_0}{\bbe^{a(i)}} = \bbd \bbe^{a(i)}$
for all superfields $\bbe^{a(i)}$,
$\sbv{S^{(0)}}{S^{(2)}} =\hQ_0 S^{(2)} 
%= \hQ_0 \mu_* \ev^* \Theta^{(2)} 
=0$,
provided the integral of the total derivative terms vanishes, 
$\int_{\calX} \mu \bbd (*) = 0$.
Therefore, if we assume that $\calX$ has no boundary, 
each term must be equal to zero:
$\sbv{S^{(0)}}{S^{(2)}} =0$, $\sbv{S^{(1)}}{S^{(1)}} =0$.
%are separately satisfied.

From $\sbv{S^{(0)}}{S^{(2)}} =0$, we can absorb $S^{(2)}$ into 
$S^{(1)}$
by the following redefinition: $\tilde{S}^{(1)} = S^{(1)} + g S^{(2)}$.
Then, we have $\sbv{S^{(0)}}{\tilde{S}^{(1)}}=0$.
Repeating this process, we obtain 
$S = S_0 + S_1$, where $S_1 = \sum_{n=1}^{\infty} g^n S^{(n)}$.
Here, $S_1$ is an element of the cohomology class of $\hQ_0$, 
\begin{lemma}\label{S1d}
Denote $S_1 = 
\int_{\calX} \mu \
\calL_1$.
If $\calL_1$ contains a superderivative $\bbd
%\bbe
$, then $\calL_1$ is $\hQ_0$-exact.
\end{lemma}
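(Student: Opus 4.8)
The plan is to exploit the single structural fact furnished by the free action, namely that $\hQ_0\bbe^{a(i)}=\sbv{S_0}{\bbe^{a(i)}}=\bbd\bbe^{a(i)}$ for every superfield. Since $\hQ_0$ is a graded derivation and commutes with the worldvolume superderivative $\bbd$ (both square to zero), it acts on a differentiated superfield by $\hQ_0(\bbd\bbe^{a(i)})=\bbd(\hQ_0\bbe^{a(i)})=\bbd^2\bbe^{a(i)}=0$; hence on any local density built from the $\bbe^{a(i)}$ and their first $\bbd$-derivatives (no higher ones occur, as $\bbd^2\bbe=0$) the operator $\hQ_0$ coincides with the total derivative $\bbd$. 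First I would therefore regard $\calL_1=\calL_1(\bbe,\bbd\bbe)$ as a function of the independent jet variables $\bbe^{a(i)}$ and $\bbd\bbe^{a(i)}$, writing $\hQ_0=\bbd=\sum_{a(i)}\bbd\bbe^{a(i)}\,\partial/\partial\bbe^{a(i)}$ on this algebra.

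The main tool is a contracting homotopy that undoes one $\bbd$. I would introduce $K=\sum_{a(i)}\bbe^{a(i)}\,\partial/\partial(\bbd\bbe^{a(i)})$ and compute the graded commutator, obtaining $\hQ_0 K+K\hQ_0=N_1+E_0$, where $N_1=\sum\bbd\bbe^{a(i)}\,\partial/\partial(\bbd\bbe^{a(i)})$ counts the explicit factors of $\bbd$ and $E_0=\sum\bbe^{a(i)}\,\partial/\partial\bbe^{a(i)}$ is the Euler field in the undifferentiated superfields. The hypothesis that $\calL_1$ ``contains $\bbd$'' says precisely that $\calL_1$ lies in the sector $N_1\ge 1$. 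On that sector $N_1+E_0\ge 1$ is invertible (expanding the arbitrary coefficient functions $f(\bbe)$ of the degree-zero superfields as power series, $E_0$ acts by the nonnegative homogeneity degree), so $h:=(N_1+E_0)^{-1}K$ is well defined, and the homotopy identity gives $\calL_1=\hQ_0(h\,\calL_1)+h\,(\hQ_0\calL_1)=\hQ_0(h\,\calL_1)+h\,(\bbd\calL_1)$. Integrated against the $D$-invariant measure over the boundaryless $\calX$, the total-derivative remainder $\bbd\calL_1$ drops, leaving $S_1$ represented by the manifestly $\hQ_0$-exact density $\hQ_0(h\calL_1)$, which is the assertion of the lemma.

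The step I expect to be the genuine obstacle is the remainder term and, with it, the correct cohomological bookkeeping. The identity $\hQ_0\calL_1=\bbd\calL_1$ means $\calL_1$ is only $\hQ_0$-closed modulo total derivatives, so the argument really lives in the local cohomology ``$\hQ_0$ modulo $\bbd$'', and one must verify that $h\,(\bbd\calL_1)$ is itself a total $\bbd$-derivative rather than a fresh obstruction. The one class this procedure cannot remove is the free kinetic density $\bbd\bbe^{a(i)}$ carried with a constant coefficient—there $E_0=0$, $N_1=1$ is still invertible, but the term is exactly the $S_0$ already split off, so it never appears inside the deformation $S_1$; confirming that every $\bbd$-term actually occurring in $S_1$ sits in the invertible sector (equivalently, that the nontrivial kinetic class is excluded) is the crux. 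The remaining work—fixing the Koszul signs in the commutator $\hQ_0 K+K\hQ_0$ and checking the vanishing of the boundary term via $D$-invariance of $\mu$ and $\partial\calX=\emptyset$—is routine.
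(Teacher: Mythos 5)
Your reduction to the jet algebra generated by $\bbe^{a(i)}$ and $\bbd\bbe^{a(i)}$ is where the argument collapses, because on that algebra the two operators you need to play off against each other are literally the same operator: $\hQ_0\bbe^{a(i)}=\bbd\bbe^{a(i)}$ and $\hQ_0(\bbd\bbe^{a(i)})=\bbd(\bbd\bbe^{a(i)})=0$, so for any function $f$ of superfields and their first $\bbd$-derivatives one has $\hQ_0 f=\bbd\bbe\,\partial f/\partial\bbe=\bbd f$ identically. Consequently every expression of the form $\hQ_0(\mbox{function of superfields})$ is automatically a total $\bbd$-derivative and integrates to zero over a closed $\calX$. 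If your homotopy really produced a superfield-covariant primitive $h\calL_1$ with $S_1=\int_{\calX}\mu\,\hQ_0(h\calL_1)$, you would have proved $S_1=0$ --- which is false: take $\calL_1=\ba_i\bbd\bphi^i$, the kinetic density itself, which contains $\bbd$, sits in your invertible sector ($N_1+E_0=2$), and integrates to $S_0\neq 0$. The precise failure is the remainder term you yourself flagged: since $\hQ_0K+K\hQ_0=N_1+E_0$ and $\hQ_0=\bbd$ on this algebra, one finds $h(\bbd\calL_1)=h(\hQ_0\calL_1)=\calL_1-\hQ_0(h\calL_1)=\calL_1-\bbd(h\calL_1)$, so $\int_{\calX}\mu\,h(\bbd\calL_1)=S_1$. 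The ``remainder'' is not a total derivative; it carries all of $S_1$, and your decomposition is the tautology $\calL_1=\bbd(h\calL_1)+\bigl(\calL_1-\bbd(h\calL_1)\bigr)$. Excluding the kinetic class by fiat does not repair this, since the same identity holds for every $\bbd$-containing $\calL_1$.

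The content of the lemma lives precisely where your framework cannot see it: at the level of the $\theta$-components, where $\hQ_0$ and $\bbd$ become genuinely different operators related by the descent equations $\hQ_0F_0=0$, $\hQ_0F_i=\bbd F_{i-1}$, $\bbd F_{n+1}=0$. The paper writes $\calL_1=F(\bbe)\,\bbd G(\bbe)$, expands $F=\sum_iF_i$ and $G=\sum_iG_i$ in powers of $\theta$, and pairs components of complementary form degree via
\begin{eqnarray}
F_{n-i}\,\bbd G_{i}+F_{n-i-1}\,\bbd G_{i+1}
=(-1)^{n-i}\hQ_0\bigl(F_{n-i}G_{i+1}\bigr)-(-1)^{n-i}\bbd\bigl(F_{n-i-1}G_{i+1}\bigr),
\nonumber
\end{eqnarray}
with the single leftover $F_0\,\bbd G_n=\hQ_0(F_0G_{n+1})$ when $n$ is even. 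The resulting $\hQ_0$-primitive $\sum_i(\pm)F_{n-i}G_{i+1}$ is a twisted sum of individual components that is \emph{not} the top component of any function of the full superfields --- which is exactly why it can be $\hQ_0$-exact without being $\bbd$-exact. That component-level construction is the idea missing from your proposal, and no contracting homotopy acting on the superfield jet algebra can supply it.
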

\begin{proof}
It is sufficient to prove the lemma under the assumption
that $\calL_1$ is a monomial.
Assume that $\calL_1$ contains at least one derivative,
$\calL_1(\bbe) = F(\bbe) \bbd G(\bbe)$, 
where $F(\bbe)$ and $G(\bbe)$ are functions of superfields.
$F$ and $G$ can be expanded in component superfields
by the number of odd supercoordinates
$\theta^{\mu}$ as $F(\bbe) = \sum_{i=0}^{n+1} F_{i}$ 
and $G(\bbe) = \sum_{i=0}^{n+1} G_{i}$. 
$F_{i}$ and $G_{i}$ are terms of 
$i$-th order in $\theta^{\mu}$.
Since $\hQ_0 F = \bbd F$ and $\hQ_0 G = \bbd G$,
from the properties of $\hQ_0$, 
we obtain the following expansions:
\begin{eqnarray}
&& \hQ_0 F_{0} = 0, 
\nonumber \\
&& \hQ_0 F_{i} = \bbd F_{i-1} 
\quad \mbox{for $1 \leq i \leq n+1$},
\nonumber \\
&& \bbd F_{n+1} = 0, 
\nonumber \\
&& \hQ_0 G_{0} = 0, 
\nonumber \\
&& \hQ_0 G_{i} = \bbd G_{i-1} \quad \mbox{for $1 \leq i \leq n+1$},
\nonumber \\
&& \bbd G_{n+1} = 0. 
\label{expbrstike}
\end{eqnarray}
For $S_1 = \int_{\calX} \mu \calL_1(\bbe)
= \sum_{i=0}^{n} \int_{X} \mu F_{n-i} \bbd G_{i}$,
two consecutive terms 
$F_{n-i} \bbd G_{i} + F_{n-i-1} \bbd G_{i+1}$
are combined (for even $i$) as 
%as follows:
\begin{eqnarray}
&& F_{n-i} \bbd G_{i} + F_{n-i-1} \bbd G_{i+1}
%\nonumber \\ && 
=
(-1)^{n-i} \hQ_0( F_{n-i} G_{i+1} )
- (-1)^{n-i} \bbd( F_{n-i-1} G_{i+1} ),
\label{2termsike}
%\nonumber
\end{eqnarray}
by $(\ref{expbrstike})$, which gives a $\hQ_0$-exact term up to a $\bbd$-exact term.
%where $i$ is even.

If $n$ is odd, 
$S_1 = \sum_{i=0}^{n} \int_{X} F_{n-i-1} \bbd G_{i}$ 
has an even number of terms,
%. Therefore 
%if
%, %for each set of two consecutive terms, 
and the terms can be combined as in equation (\ref{2termsike}).
Therefore, the integral $S_1$ is $\hQ_0$-exact.

If $n$ is even, the term $F_{0} \bbd G_{n}$ remains.
This term is $\hQ_0$-exact itself, since
$F_{0} \bbd G_{n} = \hQ_0 (F_{0} G_{n+1})$.
Therefore, $S_1$ is also $\hQ_0$-exact.
%\smartqed
%\hfill\qed 
\end{proof}
From Lemma \ref{S1d}, nontrivial deformation terms of $S_1$ do not include $\bbd$.
% up to $\hQ_0$-exact terms.
The remaining condition is $\sbv{S_1}{S_1} =0$.
Therefore, the following theorem has been proved.
\begin{theorem}\label{localS1}
Assume that $\calX$ is a world-volume without boundary, that is, 
$\int_{\calX} \mu \ \bbd (*) = 0$,
and locality of the BV action.
If and only if
%
%Then \\
%$\Longleftrightarrow$
$S^{(1)}
%= \int_{\calX} \mu \ \calL(\bbe)
%S_1 = \sum_{n=1}^{\infty} g^n S^{(n)}
$ is 
%locally 
a $\hQ_0$-cohomology class
such that 
$\sbv{S^{(1)}}{S^{(1)}}=0$,
and $S^{(n)} = 0$ for $n \geq 2$,
then $\sbv{S}{S}=0$.
%$\int_{\partial \calX} (*) = 0$
%$\calM$ is a Q-manifold.
Let $S_1 = g S^{(1)} = \int_{\calX} \mu \ \calL_1(\bbe)$,
then $\calL_1(\bbe)$ is a function of a superfield $\bbe$,
%of degree $n+1$ 
which does not contain the superderivative $\bbd$.
%and we obtain 
%$S^{(1)}
%= \int_{\calX} \mu \ \calL(\bbe)
%
%$ 
%such that $\sbv{\Theta}{\Theta}=0$
%where $|\Theta|$ has degree $n+1$.
%\\
%$\Longleftrightarrow$ $\calM$ is a Q-manifold
%of degree $n$.
\end{theorem}
%
%Consistent (gauge invariant and unitary)
%TFTs of BF type are constructed if and only if
%$\calM$ is a QP-manifold.
%
%Fix $S_0$ as $S_0 := \iota_{\hat{D}} \mu_* {\rm ev}^* \vartheta$. Then 
%$S=S_0+S_1$ is gauge invariant and unitary
%if and only if 
%$S_1 = \int_{\calX} \mu \
%%d^{n+1}\sigma d^{n+1}\theta \
%\ev^* \Theta
%(\bbq, \bbp)
%$ such that $\sbv{\Theta}{\Theta}=0$.
%
%{\large
%\textsl{%\color{blue}
%\S. Appendix
%}} 
%\begin{thm}
%$S_1 = \sum_{n=1}^{\infty} g^n S^{(n)}
%= \int_{\calX} \mu \ \calL(\bbe) $ is 
%%locally 
%a $Q_0$-cohomology class such that $\sbv{S_1}{S_1}=0$
%if $\int_{\calX} \mu \ \bbd (*) = 0$.
%\end{thm}
%\medskip\\
%\noindent
%We can extend AKSZ sigma models for whi the actions fo which 
%are not included in this theorem.
If we relax the assumption of no-boundary or locality in Theorem 
\ref{localS1},
we obtain more general AKSZ type sigma models, such as 
the WZ-Poisson sigma model 
%with a WZ term, and 
%topological sigma models %with degenerate P-structure 
%such as 
and the Dirac sigma model. \cite{Kotov:2004wz}

%%%%%%%%%%%%%%%%%%%%%%%%%%%%%%%%%%%%%%%%%%%%%%%%%%%%%%%%%%%%%%%%%%%%%
%%%%%%%%%%%%%%%%%%%%%%%%%%%%%%%  SEC. 7 %%%%%%%%%%%%%%%%%%%%%%%%%%%%%%
%%%%%%%%%%%%%%%%%%%%%%%%%%%%%%%%%%%%%%%%%%%%%%%%%%%%%%%%%%%%%%%%%%%%
\section{AKSZ Sigma Models in Local Coordinates}
%\rc{Action in Local Coordinates}}%{Local Coordinate Expression}
\noindent
In this section,
we give 
local coordinate expressions of the P-structure 
graded symplectic form $\bomega$, the BV antibracket,
the BV action $S$, \eqref{deformationofS} and 
the odd Laplacian in the previous section.

Let us take an ($n+1$)-dimensional manifold $X$ and
a $d$-dimensional manifold $M$.
We also take a graded manifold $\calX = T[1]X$,
 and a QP-manifold $\calM$.
%such that its degree-zero part is a smooth manifold $\calM_0 = M$.
%
%We take the following local coordinates.
Local coordinates on $T[1]X$ are denoted by $(\sigma^{\mu}, \theta^{\mu})$, where $\sigma^{\mu}$ is a local coordinate of degree 0 on the base manifold $X$, and 
$\theta^{\mu}$ is a local coordinate  of degree 1.

Let $\calM^{(i)}$ be the degree $i$ part of $\calM$.
Local coordinates on $\calM^{(i)}$ are denoted by
$e^{a(i)}$. 
%of which
% of degree $i$ on $\calM^{(i)}$,
The local coordinates $e^{a(i)}$ are also denoted by
\begin{enumerate}
 \item $x^{a(0)}$ of degree $0$
 \item $q^{a(i)}$ of degree $i$, for $0 \leq i \leq \floor{n/2}$
 \item $p_{a(n-i)}$ of degree $n-i$, for $\floor{n/2} < i \leq n$
 \item $\xi_{a(0)}$ of degree $n$
\end{enumerate}
where $\floor{m}$ is the floor function 
(that is, its value is the largest integer less than 
or equal to $m$). 
\footnote{Indices $a(i)$ run $a(i) = 1, 2, \cdots, {\rm dim} \calM^{(i)}$.
}

\if0
Let $(\sigma^{\mu}, \theta^{\mu})$ be local coordinates
on $\calX = T[1]X$, where $\sigma^{\mu}$ is a local coordinate of degree 0 on 
the base manifold $X$, and 
$\theta^{\mu}$ is a local coordinate  of degree 1
on the fiber of $T[1]X$.
Let $\calM^{(i)}$ be a degree $i$ part of $\calM$.
Let 
$x^{a(0)}$
be a local coordinate on $\calM^{(0)} = M$, and let 
$\xi_{a(0)}$ be a local coordinate on $\calM^{(n)}$.
$e^{a(i)}$ are local coordinates of degree $i$ on $\calM^{(i)}$,
for $1 \leq i \leq n-1$.
\footnote{Indices $a(i)$ run $a(i) = 1, 2, \cdots, {\rm dim} \calM^{(i)}$.
}
The Darboux coordinates
%of $\bbe_{a(i)}$ 
with respect to the P-structure are 
$q^{a(i)}$, for $0 \leq i \leq \floor{n/2}$,
and
$p_{a(n-i)}$, for $\floor{n/2} < i \leq n$,
where $\floor{m}$ is the floor function 
(that is, its value is the largest integer less than 
or equal to $m$). 
Note that $x^{a(0)}$ is also denoted as
$e^{a(0)}$ or $q^{a(0)}$, and
$\xi_{a(0)}$ is also denoted as 
$e_{a(n)}$ or $p_{a(0)}$.
\fi

As explained in Section \ref{AKSZTFTgeneral}, fields in a classical field theory correspond to maps $\calX \rightarrow \calM$. 
%superfields are local coordinates on $\Map(\calX, \calM)$.
Local coordinates on the mapping space are superfields,
which we denote by the corresponding boldface letters.
%corresponding to local coordinates on $\calM$.
%Local coordinates on $\calM^{(i)}$ are denoted by $e^{a(i)}$. 
%of which
% of degree $i$ on $\calM^{(i)}$,
$\bbx^{a(0)}$ of degree 0
is a smooth map $\bbx^{a(0)}: T[1]X \longrightarrow M$, and 
superfields $\bbe^{a(i)}$ of degree $i$ are bases of
sections of $T^*[1]X \otimes \bbx^*(\calM^{(i)})$,
for $1 \leq i \leq n$.
$\bbx^{a(0)}$ is also denoted by $\bbe^{a(0)}$ and
$\bbe^{a(n)}$ by $\bbxi_{a(0)}$.
%is a section of $T^*[1]X \otimes \bbx^*(\calM^{(n)})$.
%
\if0
%$\bbx_{a(0)}, \bbxi_{a(0)}$ and 
The Darboux coordinate superfields
%of $\bbe_{a(i)}$ 
with respect to the P-structure are
$\bbq^{a(i)}$, for $0 \leq i \leq \floor{n/2}$,
and
$\bbp_{a(n-i)}$, for $\floor{n/2} < i \leq n$.
Note that $\bbx^{a(0)}$ is also denoted as
$\bbe^{a(0)} = \bbq^{a(0)}$, and
$\bbxi_{a(0)}$ is also denoted as 
$\bbe_{a(n)}= \bbxi_{a(0)}= \bbp_{a(0)}$.
\fi

The P-structure can be written as
\begin{eqnarray}\label{BVbracket1}
\bomega &=& 
\int_{\calX} 
\mu 
\ 
%(-1)^{n+1-i}
\left(\frac{1}{2}
\delta \bbe_{a(i)}
\wedge \bomega^{a(i)b(j)} 
\delta \bbe_{b(j)}
\right) 
\nonumber \\
&=& 
\sum_{i=0}^{\floor{n/2}}
%\sum_{i, a(i)}
\int_{\calX} d^{n+1}\sigma d^{n+1}\theta \
(-1)^{ni}
\delta \bbq^{a(i)}
\wedge 
\delta \bbp_{a(i)},
\label{gPoissonDarboux}
\end{eqnarray}
where we used Darboux coordinates,
$\bbq^{a(i)}$, for $0 \leq i \leq \floor{n/2}$,
and
$\bbp_{a(n-i)}$, for $\floor{n/2} < i \leq n$.
This defines the graded Poisson bracket such that
\begin{eqnarray}\label{BVbracket3}
\left\{\bbq^{a(i)}(\sigma, \theta),
\bbp_{b(j)}(\sigma^{\prime}, \theta^{\prime})\right\} 
&=& \delta^{i}{}_{j} \delta^{a(i)}{}_{b(j)} 
\delta^{n+1}(\sigma-\sigma^{\prime}) 
\delta^{n+1}(\theta-\theta^{\prime}).
\nonumber
\end{eqnarray}
If $n$ is even, 
$\bbp_{a(n/2)}$ is identified with
$k_{ab} \bbq^{b(n/2)}$ and
the degree ($n/2$) part of the P-structure symplectic form can be written as
\begin{eqnarray}\label{BVbracket2}
%%\sbv{F}{G} &=& 
%%\sum_{i, a(i)}
%%\sum_{k=1}^{\frac{n-1}{2}}
\int_{\calX} 
d^{n+1}\sigma d^{n+1}\theta
\ 
%(-1)^{n/2+1} 
\left(
\frac{1}{2}
\delta \bbq^{a(n/2)}
\wedge k_{ab} 
\delta \bbq^{b(n/2)}
\right),
\nonumber
\end{eqnarray}
where $k_{ab}$ is a fiber metric.
% on that part. 
%$E_{n/2}$.
%\label{BVbracket}
%\sbv{F}{G} &=& 
%\sum_{i, j, a(i), b(j)} 
%%\sum_{k=1}^{\frac{n-1}{2}}
%\int_{\calX} 
%\mu \left(
%F \frac{\rd}{\partial \bbe_{a(i)}} 
%\omega^{-1}_{a(i)b(j)}\frac{\ld }{\partial \bbe_{b(i)}} G
%\right)
%\begin{eqnarray}
%%\label{BVbracket}
%\sbv{F}{G} &=& 
%\sum_{i, j, a(i), b(j)} 
%%\sum_{k=1}^{\frac{n-1}{2}}
%\int_{\calX} 
%\mu \left(
%F \frac{\rd}{\partial \bbe_{a(i)}} 
%\omega^{-1}_{a(i)b(j)}\frac{\ld }{\partial \bbe_{b(i)}} G
%\right)
%,
%\nonumber
%\end{eqnarray}
%where
%where $F$ and $G$ are functions on $\calM$, 
%and 
%$\frac{\overrightarrow{\scriptstyle{\partial}}}{\partial \bphi}$ and 
%$\frac{\overleftarrow{\scriptstyle{\partial}}}{\partial \bphi}$
%are the right and left differentiations, respectively.
%If a canonical measure $\mu = d^{n+1}\sigma d^{n+1}\theta$ on $\calX$
%and a Darboux coordinate is taken,
%The Poisson bracket is calculated as
%such that
%$$
%\bracket{p_{a(i)}}{q_{b(j)}}=\delta_{ij}\delta_{a(i)b(j)}, 
%$$
%
%
%
%The graded Poisson brackets on the superfields are
%$\sbv{\bbe_{a(i)}(\sigma, \theta)}
%{\bbe_{b(j)}(\sigma^{\prime}, \theta^{\prime})} 
%= \omega_{a(i)b(j)}
%\delta^{n+1}(\sigma-\sigma^{\prime}) 
%\delta^{n+1}(\theta-\theta^{\prime})$.
%In the Darboux coordinate,
%\sum_{a(0)}
%\int_{\calX} d^{n+1} \sigma d^{n+1} \theta 
%\Biggl[
%F \frac{\rd}{\partial \bbx_{a(0)}} 
%\frac{\ld }{\partial \bbxi_{a(0)}} G
%- 
%F \frac{\rd }{\partial \bbxi_{a(0)}} 
%\frac{\ld }{\partial \bbx_{a(0)}} G
%\nonumber \\
%&& 
%\quad
%+ 
%if $n=$odd.
%
%
%if $n$ is even, t
The corresponding Poisson bracket of the part, for which $i = j = n/2$, is
\begin{eqnarray}\label{BVbracket4}
\left\{\bbq^{a(n/2)}(\sigma, \theta),
\bbq^{b(n/2)}(\sigma^{\prime}, \theta^{\prime}) \right\} 
&=& k^{a(n/2)b(n/2)}
\delta^{n+1}(\sigma-\sigma^{\prime}) 
\delta^{n+1}(\theta-\theta^{\prime}).
\nonumber
\end{eqnarray}
%\begin{eqnarray}\label{BVbracket4}
%\sbv{F}{G} &=& 
%\sum_{a(0)}
%\int_{\calX} d^{n+1} \sigma d^{n+1} \theta 
%\Biggl[
%F \frac{\rd}{\partial \bbx_{a(0)}} 
%\frac{\ld }{\partial \bbxi_{a(0)}} G
%- 
%F \frac{\rd }{\partial \bbxi_{a(0)}} 
%\frac{\ld }{\partial \bbx_{a(0)}} G
%\nonumber \\
%&& 
%\quad
%+ 
%\sum_{0 \leq i \leq (n-1)/2, a(i)}
%\sum_{k=1}^{\frac{n-1}{2}}
%\left(
%F  \frac{\rd}{\partial \bbq_{a(i)}} 
%\frac{\ld }{\partial \bbp_{a(i)}} G
%- 
%(-1)^{i (n - i)}
%F \frac{\rd }{\partial \bbp_{a(i)}} 
%\frac{\ld }{\partial \bbq_{a(i)}} G
%\right)\delta^d(\sigma-\sigma^{\prime})
%\Biggr].
%\end{eqnarray}
%if $n=$odd.
%
%
%\begin{eqnarray}\label{BVbracket}
%\sbv{F}{G} &=& 
%\sum_{a(0)}
%\int_{\calX} d^{n+1} \sigma d^{n+1} \theta 
%\Biggl[
%F \frac{\rd}{\partial \bbx_{a(0)}} 
%\frac{\ld }{\partial \bbxi_{a(0)}} G
%- 
%F \frac{\rd }{\partial \bbxi_{a(0)}} 
%\frac{\ld }{\partial \bbx_{a(0)}} G
%\nonumber \\
%&& 
%\quad
%+ 
%\sum_{0 \leq i \leq (n-2)/2, a(i)}
%\sum_{k=1}^{\frac{n-1}{2}}
%\left(
%F  \frac{\rd}{\partial \bbq_{a(i)}} 
%\frac{\ld }{\partial \bbp_{a(i)}} G
%- 
%(-1)^{i (n - i)}
%%F \frac{\rd }{\partial \bbp_{a(i)}} 
%\frac{\ld }{\partial \bbq_{a(i)}} G
%\right)
%\nonumber \\
%&& 
%\quad
%+ 
%\sum_{a(n/2)}
%\sum_{k=1}^{\frac{n-1}{2}}
%k^{a(n/2) b(n/2)}
%\left(
%F \frac{\rd}{\partial \bbq_{a(n/2)}} 
%\wedge 
%\frac{\ld }{\partial \bbq_{a(n/2)}} G
%\right)
%\Biggr].
%\end{eqnarray}
%if $n=$even.
%
%

The differential $D$ on the differential graded manifold $\calX$ is induced from
the exterior derivative $d$ on $X$.
This defines a superdifferential 
$\bbd = \theta^{\mu} \frac{\partial}{\partial \sigma^{\mu}}$
on $\Map(\calX, \calM)$.
%If we take a Darboux coordinate and a canonical measure 
%$\mu = d^{n+1} \sigma d^{n+1} \theta$,

Next, let us consider the local coordinate expression of the Q-structure $S$
on the mapping space.
From the definition in Section \ref{AKSZTFTgeneral}, $S$ has two terms, $S = S_0 + S_1$.

$S_0$ is determined from the P-structure.
If $n$ is odd, 
\begin{eqnarray}
S_0
&=&
%\sum_{i, a(i)} 
\int_{\calX} 
\mu \
%(-1)^{n+1-i} 
\frac{1}{2} \bbe_{a(i)} 
\bomega^{a(i)b(j)}
\bbd \bbe_{b(j)}
\nonumber \\
&=& 
%\sum_{0 \leq i \leq (n-1)/2, a(i)} 
\int_{\calX} 
d^{n+1} \sigma d^{n+1} \theta 
%\mu 
\ 
\sum_{0 \leq i \leq (n-1)/2} 
(-1)^{n+1-i} \bbp_{a(i)} \bbd \bbq^{a(i)}
%= 
%\sum_{i, a(i)} \int_{\calX} 
%\mu \ ( \bbxi_{a(0)} \bbd \bbx_{a(0)}
%+ (-1)^i \bbp_{a(i)} \bbd \bbq_{a(i)}),
%\nonumber
%\label{oddSzero}
\nonumber \\
&=&
%\sum_{1 \leq i \leq (n-1)/2, a(i)} 
\int_{\calX} 
d^{n+1} \sigma d^{n+1} \theta \ 
\left( (-1)^{n+1} \bbxi_{a(0)} \bbd \bbx^{a(0)}
%\right.
%\nonumber \\ && 
%\left.
+ 
\sum_{1 \leq i \leq (n-1)/2} 
(-1)^{n+1-i}  \bbp_{a(i)} \bbd \bbq^{a(i)}
\right),
%\nonumber
\label{oddSzero}
\end{eqnarray}
and if $n$ is even,
\begin{eqnarray*}
S_0
&=&
%\sum_{i, a(i)} 
\int_{\calX} 
\mu \
\frac{1}{2} \bbe_{a(i)} 
\omega^{a(i)b(j)}
\bbd \bbe_{b(j)}
\nonumber \\
&=&
%\sum_{0 \leq i \leq (n-2)/2, a(i)} 
\int_{\calX} 
d^{n+1} \sigma d^{n+1} \theta 
\ 
\left( 
%\bbxi_{a(0)} \bbd \bbx_{a(0)}
%+ 
\sum_{0 \leq i \leq (n-2)/2} 
(-1)^{n+1-i} \bbp_{a(i)} \bbd \bbq^{a(i)}
%\right.
%\nonumber \\ && 
%\left.
+ (-1)^{\frac{n+1}{2}} k_{a(n/2)b(n/2)} \bbq^{a(n/2)} \bbd \bbq^{b(n/2)}
\right)
\nonumber \\
&=&
%\sum_{1 \leq i \leq (n-2)/2, a(i)} 
\int_{\calX} 
d^{n+1} \sigma d^{n+1} \theta 
\ \left( 
(-1)^{n+1} \bbxi_{a(0)} \bbd \bbx^{a(0)}
+ 
\sum_{1 \leq i \leq (n-2)/2} 
(-1)^{n+1-i} \bbp_{a(i)} \bbd \bbq^{a(i)}
\right.
\nonumber \\
&& \qquad 
\left.
+ (-1)^{\frac{n+1}{2}}
k_{a(n/2)b(n/2)} \bbq^{a(n/2)} \bbd \bbq^{b(n/2)}\right).
%\nonumber 
\label{evenSzero}
\end{eqnarray*}
If we define
$\bbp_{a(n/2)} \equiv k_{a(n/2)b(n/2)} \bbq^{a(n/2)}$, 
then the $S_0$'s for odd and even $n$
can be unified to 
\begin{eqnarray*}
S_0
&=&
%\sum_{i, a(i)} 
\int_{\calX} 
\mu \
\frac{1}{2} \bbe_{a(i)} 
\omega^{a(i)b(j)}
\bbd \bbe_{b(j)}
\nonumber \\
&=&
\int_{\calX} 
d^{n+1} \sigma d^{n+1} \theta 
\ 
\left( 
%\bbxi_{a(0)} \bbd \bbx_{a(0)}
%+ 
\sum_{0 \leq i \leq \floor{n/2}} 
(-1)^{n+1-i} \bbp_{a(i)} \bbd \bbq^{a(i)}
\right).
%\label{evenSzero}
\end{eqnarray*}

A superfield of degree $i$,
$\bPhi(\sigma, \theta)$,
can be expanded by $\theta^{\mu}$ as
\begin{eqnarray}
\bPhi(\sigma, \theta) 
= \sum_{k} \bPhi^{(k)}(\sigma, \theta)
= \sum_{k} \frac{1}{k!}
\theta^{\mu(1)} \cdots \theta^{\mu(k)}
\Phi^{(k)}_{\mu(1)\cdots \mu(k)}(\sigma), 
\nonumber
\end{eqnarray}
where $\Phi^{(k)}_{\mu(1)\cdots \mu(k)}(\sigma)$ 
%is a coefficient which 
depends only on $\sigma^{\mu}$.
Since $\theta^{\mu}$ has degree 1,
%$\deg \theta= 1$,
%Since $\gh \bphi + \deg \bphi = |\bphi|$, 
%
$\Phi^{(k)}_{\mu(1)\cdots \mu(k)}(\sigma)$ has degree $i-k$. 
This
is the same as the {\it ghost number} in gauge theory. The
fields $\Phi^{(k)}_{\mu(1)\cdots \mu(k)}(\sigma)$ are classified 
by their ghost numbers.
If $\Phi^{(k)}_{\mu(1)\cdots \mu(k)}(\sigma)$ has degree 0,
it is a physical field.
In particular, it is a $k$-th order antisymmetric tensor field.
If $\Phi^{(k)}_{\mu(1)\cdots \mu(k)}(\sigma)$ has positive degree,
it is a ghost field, or it is a ghost for ghosts, etc.
%which comes from the gauge parameters.
% of the gauge symmetries.
If $\Phi^{(k)}_{\mu(1)\cdots \mu(k)}(\sigma)$ has negative degree,
it is the Hodge dual of the antifield that is introduced in the BV formalism.

Let us consider expansions of the Darboux coordinate superfields: 
\begin{eqnarray}
\bbq^{a(i)}(\sigma, \theta) 
&=& \sum_{k} \frac{1}{k!}
\theta^{\mu(1)} \cdots \theta^{\mu(k)}
q^{(k), a(i)}_{\mu(1)\cdots \mu(k)}(\sigma),
\\
\bbp_{a(i)}(\sigma, \theta) 
&=& \sum_{k} \frac{1}{k!}
\theta^{\mu(1)} \cdots \theta^{\mu(k)}
p^{(k)}_{a(i), \mu(1)\cdots \mu(k)}(\sigma).
\end{eqnarray}
The antifield for the ghost 
$q^{(k), a(i)}_{\mu(1)\cdots \mu(k)}(\sigma)$ for $i-k>0$ 
is $p^{(n+1-k)}_{a(i), \mu(1)\cdots \mu(n+1-k)}(\sigma)$,
and 
the antifield for the ghost 
$p^{(k), a(i)}_{\mu(1)\cdots \mu(k)}(\sigma)$ for $k-i>0$ 
is $q^{(n+1-k)}_{a(i), \mu(1)\cdots \mu(n+1-k)}(\sigma)$.
We can see that
this coincides with the BF theory for abelian $i$-form fields that was presented in Section 3.
Note that, if $n$ is even,
a superfield of degree $i= n/2$ is a self-conjugate superfield
$\bbq_{a(n/2)}(\sigma, \theta) 
= 
\sum_{k, \mu(k)} 
\frac{1}{k!}
\theta^{\mu(1)} \cdots \theta^{\mu(k)}
q^{(k)}_{a(n/2), \mu(1)\cdots \mu(k)}(\sigma)
$.
The antifield 
$q^{(k), a(n/2)}_{\mu(1)\cdots \mu(k)}(\sigma)$ 
for 
$k \leq n/2$ is
%$n/2-k>0$ 
$q^{(n+1-k), a(n/2)}_{\mu(1)\cdots \mu(n+1-k)}(\sigma)$,
which is contained in the same superfield.
%for $n/2-k<0$.

If the component fields of nonzero ghost number are set to zero
and the $d\theta$ integration is carried out,
we obtain the kinetic term for a BF theory
of general $k$-forms:
\begin{eqnarray*}
S_0 
=
%&=&
S_A
&=& \sum_{0 \leq i \leq \floor{n/2}} 
\frac{1}{i!(n-i)!}
\int_{\calX} 
d^{n+1} \sigma 
\ 
%\bbxi_{a(0)} \bbd \bbx_{a(0)}
%+ 
(-1)^{n+1-i} 
\epsilon^{\mu(0)\cdots\mu(n)} 
%\nonumber \\ && 
%\times
p^{(n-i)}_{a(i), \mu(i+1)\cdots \mu(n)}
\partial_{\mu(i)}
q^{(i), a(i)}_{\mu(0)\cdots \mu(i-1)}.
\label{abelianBF2}
\end{eqnarray*}
This coincides with the action $S_A$ given in Section \ref{ABFiform}.
%This action has a gauge symmetry,
%$\delta \bbp^{(n-i)}_{a(i), \mu(i+1)\cdots \mu(n)} 
%= \partial_{[\mu(i+1)} 
%\bbt
%\bbp^{(n-i-1)}_{a(i), \mu(i+2)\cdots \mu(n)]}$
%and 
%$\delta \bbq^{(i), a(i)}_{\mu(0)\cdots \mu(i-1)} 
%= \partial_{[\mu(0)} 
%\bbc
%\bbq^{(i-1), a(i)}_{\mu(1)\cdots \mu(i-1)]}$.

%The Batalin-Vilkovisky formalism of the action (\ref{abelianBF}) 
%is constructed by introducing the ghosts,
%ghosts for ghosts and antifields.
%Then $\bbe^{(k)}_{a(i), \mu(1)\cdots \mu(k)}$'s of degree nonzero 
%are introduced and $S_0|_0$ is extended the BV action $S_0$.
%The gauge symmetry is generated by $Q = \sbv{S_0}{-}$.
%The Batalin-Vilkovisky formalism 
%is equivalent to the AKSZ construction in this theory.

The interaction term $S_1$ was determined in 
Theorem \ref{localS1} in Section \ref{deformationtheory}.
The local coordinate expression of $S_1$ is as follows:
\begin{eqnarray*}
%S&=&S_0 + S_1, \\
%\end{eqnarray*}
%\begin{eqnarray*}
S_1&=&
%\int_{\calX} \mu \ \ev^* \Theta
%= 
\sum_{\lambda, a(\lambda), |\lambda| = n + 1} 
\int_{\calX} \mu 
\ 
\left(
%C_{\lambda} 
f_{\lambda, a(\lambda_1)\cdots a(\lambda_m)}(\bbx)
\bbe^{a(\lambda_1)} \bbe^{a(\lambda_2)} \cdots \bbe^{a(\lambda_m)}
\right),
%\label{4DBVaction}
\end{eqnarray*}
where the integrand contains arbitrary functions of superfields 
of degree $n+1$ without the superderivative.
$f_{\lambda, a(\lambda_1)\cdots a(\lambda_m)}(\bbx)$ is a
local structure function of $\bbx$
and $|\lambda| = \sum_k \lambda_k$.
The consistency condition 
$\sbv{S_1}{S_1} =0$ imposes algebraic conditions on the
structure functions $f_{\lambda, a(\lambda_1)\cdots a(\lambda_m)}(\bbx)$.
Since $S_1= \int_{\calX} \mu \ \ev^* \Theta$, this consistency condition
is equivalent to $\sbv{\Theta}{\Theta} =0$,
and determines the mathematical structure on the target space.
%All the terms of degree $n+1$ are possible
%because $|\lambda| = n + 1$.
%and
%$C_{\lambda}$ is a statistical constant.
Thus, by solving $\sbv{\Theta}{\Theta} =0$, 
we obtain consistent local expressions for the AKSZ sigma models
in $n+1$ dimensions.
\medskip\\
\indent
%%%%%%%%%%%%%%%%%%%%%%%%%%%%%%%%%%%%%%%%%%%%%%%%%%%%%%%%%%%%%%%%%%%%%
Finally, we give the expression of the odd Laplace operator, which appears in the quantum BV master equation.
Let $\brho = 
\rho_v 
d^{n+1} \bq d^{n+1} \bp$ be a volume form on 
$\Map(\calX, \calM)$.
%on $\calM$.
%{\bf Definition} \
%of $\Delta$.
%$\Delta$, 
%We define a linear operator $\Delta$. 
The odd Laplace operator,
\begin{eqnarray}
\Delta F = \frac{(-1)^{|F|}}{2} {\rm div}_{\brho} X_F,
\end{eqnarray}
can be written as 
\begin{eqnarray}
\Delta = 
\int_{\calX} d^{n+1}\sigma d^{n+1}\theta \sum_{i=0}^{n} (-1)^i 
\frac{\partial}{\partial \bq^{a(i)}}\frac{\partial}{\partial \bp_{a(i)}}
+\frac{1}{2} \sbv{\ln \rho_v}{-}.
\end{eqnarray}
If we take coordinates such that $\rho_v=1$, 
we obtain the following simple expression:
\begin{eqnarray}
\Delta = \int_{\calX} d^{n+1}\sigma d^{n+1}\theta \sum_{i=0}^{n} (-1)^i 
\frac{\partial}{\partial \bq^{a(i)}}\frac{\partial}{\partial \bp_{a(i)}}.
\label{oddlaplacelocal2}
\end{eqnarray}
% if we take a canonical measure.
%We obtain a following explicit local coordinate expression.
%

%%%%%%%%%%%%%%%%%%%%%%%%%%%%%%%%%%%%%%%%%%%%%%%%%%%%%%%%%%%%%%%%%%%%%%
%%%%%%%%%%%%%%%%%%%%%%%%%%%%%%%%%%%%%%%%%%%%%%%%%%%%%%%%%%%%%%%%%%%%%%
%%%%%%%%%%%%%%%%%%%%%%%%%%%%%%%%%%%%%%%%%%%%%%%%%%%%%%%%%%%%%%%%%%%%%%
\section{Examples of AKSZ Sigma Models}
\noindent
In this section, we list some important examples.
%%%%%%%%%%%%%%%%%%%%%%%%%%%%%%%%%%%%%%%%%%%%%%%%%%%%%%%%%%%%%%%%%%%%%%
\subsection{$n=1$}
\subsubsection{The Poisson Sigma Model}\label{AKSZPSM}
We take $n=1$. 
In Example \ref{pmanifold} we showed that 
a QP-structure of degree 1 on
$\calM = T^*[1]M$ is equivalent to a Poisson structure on $M$.
Let $X$ be a two-dimensional manifold, and let $\calX = T[1]X$.
The AKSZ construction defines a TFT on $\Map(T[1]X, T^*[1]M)$.

Let $\bbx^{i}$ be a map from $T[1]X$ to $M$, and let 
$\bbxi_{i}$ be a section of $T^*[1]X \otimes \bbx^*(T^*[1]M)$,
which are superfields induced by the local coordinates $(x^i, \xi_i)$.
Here, we denote the indices $a(0), b(0)$ by $i,j$.
The P-structure on $\Map(T[1]X, T^*[1]M)$ is
\begin{eqnarray*}
\bomega &=& \int_{\calX} 
d^{2}\sigma d^{2}\theta
\ \delta \bbx^{i} \wedge \delta \bbxi_{i}.
\end{eqnarray*}
The BV action (Q-structure) is
\begin{eqnarray}
S &=& \int_{\calX} 
d^{2}\sigma d^{2}\theta
\ \left(\bbxi_{i} \bbd \bbx^{i}
+ 
\frac{1}{2}
f^{ij}(\bbx)
\bbxi_{i} \bbxi_{j} 
\right).
\label{2DPSMBVaction}
\end{eqnarray}
This action is the superfield BV formalism of
the Poisson sigma model, 
where the superfields are identified with 
$\bbx^i = \bphi^i$ and $\bbxi_i = \ba_i$.
%\cite{Ikeda:1993aj, Ikeda:1993fh, Schaller:1994es, Schaller:1994uj}.
The Q-structure condition is equivalent to 
equation (\ref{PoissonJacobi}) on $f^{ij}(x)$.
%This model is equivalent to the A-model if 
%a Poisson structure is nondegenerate.

Take $M=\mathfrak{g}^*$,
where $\mathfrak{g}$ is a semi-simple Lie algebra.
Then, $\calM = T^*[1]\mathfrak{g}^*$, and the Q-structure reduces to 
$\Theta = \frac{1}{2} f^{ij}{}_k x^k \xi_{i} \xi_{j}$,
where $f^{ij}{}_k$ is a structure constant of the Lie algebra.
%$\Theta$ is nothing but the BRST charge of a Lie algebra.
The AKSZ construction yields the BV action
\begin{eqnarray*}
S&=& \int_{\calX} 
d^{2}\sigma d^{2}\theta
\ \left(\bbxi_{i} \bbd \bbx^{i}
+ 
\frac{1}{2}
f^{ij}{}_k \bbx^k 
\bbxi_{i} \bbxi_{j} 
\right),
\end{eqnarray*}
which is the BV formalism of a nonabelian BF theory in two dimensions.
%%%%%%%%%%%%%%%%%%%%%%%%%%%%%%%%%%%%%%%%%%%%%%%%%%%%%%%%%%%%%%%%%%%%%%
\subsubsection{B-Model}\label{BModel}
Let $X$ be a Riemann surface, and $M$ a complex manifold.
Let us consider the supermanifold
$\calX = T[1]X$
and 
the QP-manifold $\calM = T^*[1] T[1]M$
given in Example \ref{ComplexStructure}.
This QP-manifold realizes a complex structure.
The AKSZ construction for $n=1$ induces a TFT on
$
\Map(T[1] X, T^*[1]T[1]M).
$

Let $\bbx$ be 
$\bbx: T[1] X \longrightarrow M$, let  
$\bbxi$ be a section of $T^*[1] X \otimes \bbx^*(T^* [1] M)$,
let $\bbq$ be a section of $T^*[1] X \otimes \bbx^*(T [1] M)$,
and let
$\bbp$ be a section of $T^*[1] X \otimes \bbx^*(T^* [0] M)$.
The superfield expression of the P-structure is
\begin{eqnarray*}
\bomega &=& \int_{\calX} 
d^{2}\sigma d^{2}\theta
\ (\delta \bbx^{i} \wedge \delta \bbxi_{i} 
%+
-
\delta \bbq^{i} \wedge \delta \bbp_{i}).
\end{eqnarray*}
The Q-structure BV action is 
\begin{eqnarray}
S_B &=& \int_{\calX}
d^{2}\sigma d^{2}\theta
\left(
\bbxi_{i} \bbd \bbx^i - \bbp_i \bbd \bbq^i 
+ J^i{}_j(\bbx) \bbxi_{i} \bbq^j
+ \frac{\partial J^i{}_k}{\partial \bbx^j}(\bbx) 
\bbp_{i} \bbq^j \bbq^k
\right)
\nonumber \\
&=& \int_{\calX}
d^{2}\sigma d^{2}\theta
\left[\left(
\begin{matrix}
\bbxi_{i} \ \bbq^i
%&\cr
               & \cr
\end{matrix}
\!\!\right)
\bbd \left(
\begin{matrix}
\bbx^i & \cr
                \bbp_{i} & \cr
\end{matrix}
\!\!\right)
+ \left(
\begin{matrix}
\bbxi_{i} \ \bbq^i& \cr
\end{matrix}
\!\!\right)
\left(
\begin{matrix}
0 & \frac{1}{2}J^i{}_j(\bbx)& \cr
                - \frac{1}{2}J^j{}_i(\bbx) &
 \frac{\partial J^i{}_k}{\partial \bbx^j}(\bbx) 
\bbp_{i}  & \cr
\end{matrix}
\!\!\right)
 \left(
\begin{matrix}
\bbxi_{j} & \cr
                \bbq^j & \cr
\end{matrix}
\!\!\right)
\right].
\nonumber
\label{bmodelAKSZ}
\end{eqnarray}
Proper gauge fixing of this action describes the so-called 
B-model action of a topological string.
\cite{Alexandrov:1995kv, Ikeda:2007rn}

%%%%%%%%%%%%%%%%%%%%%%%%%%%%%%%%%%%%%%%%%%%%%%%%%%%%%%%%%%%%%%%%%%%%%%
\subsection{$n=2$}
\subsubsection{The Courant Sigma Model}\label{CourantSM}
We consider the case, where 
%$\calX = T[1]X$ where 
$\calM$ is a QP-manifold of degree $n=2$.
%$\calM = T^*[2]E[1]$. 
Here, $\calM$ has the Courant algebroid structure,
discussed in Example \ref{Calgebroid}.
%Let $\calM^{(k)}$ be a subspace of degree $k$.
We take a three-dimensional manifold $X$ 
and consider $\calX = T[1]X$ as the world-volume of the AKSZ sigma model.
Let $\bbx^{i}$ be a map from $T[1]X$ to $M = \calM^{(0)}$, 
$\bbxi_{i}$ be a section of $T^*[1]X \otimes \bbx^*(\calM^{(2)})$
and $\bqone^{a}$ be 
a section of $T^*[1]X \otimes \bbx^*(\calM^{(1)})$.
$k_{ab}$ is a fiber metric on $\calM^{(1)}$.
Here, we denote $a(0), b(0), \cdots$
by $i, j, \cdots$ and
$a(1), b(1), \cdots$ by $a, b, \cdots$.
The P-structure on $\Map(\calX, \calM)$ is
\begin{eqnarray*}
\bomega &=& \int_{\calX} 
d^{3}\sigma d^{3}\theta
\ 
\left(\delta \bbx^{i} \wedge \delta \bbxi_{i} 
+
\frac{1}{2} k_{ab} \delta \bqone^{a} \wedge \delta \bqone^{b}
\right),
\end{eqnarray*}
and the Q-structure BV action has the following form:
\begin{eqnarray}
S&=& \int_{\calX} d^{3}\sigma d^{3}\theta
\ \left(- \bbxi_{i} \bbd \bbx^{i}
+ \frac{1}{2} k_{ab} \bqone^{a} \bbd \bqone^{b}
%\right.
%\nonumber \\ && 
%\left.
+ f_{1}{}^{i}{}_{a}(\bbx)
\bbxi_{i} \bqone^{a} 
+ \frac{1}{3!}
f_{2}{}_{abc}(\bbx)
\bqone^{a} \bqone^{b} \bqone^{c}
\right).
\label{3DCSM}
\end{eqnarray}
This model has the Courant algebroid structure given in 
Theorem \ref{CalgeQP22}, and therefore, it is called the \textsl{Courant sigma model}
\cite{Ikeda:2002wh, Ikeda:2002qx, Hofman:2002rv, Roytenberg:2006qz}.

We can derive the action of the physical fields 
%of ghost number 0without ghosts and antifields 
from equation (\ref{3DCSM}) 
by setting the components of the nonzero ghost number to zero:
$\bbx^{i} = \bbx^{(0) i} = x^i$,
$\bbxi_{i} = \bbxi^{(2)}_{i} 
= \frac{1}{2} \theta^{\mu}\theta^{\nu} \xi^{(2)}_{\mu\nu, i}$
and 
$\bqone^{a} = \bqone^{(1)a} = \theta^{\mu} \qone^{(1)a}_{\mu}$.
%Substituting these into equation (\ref{3DCSM}), 
Then, we obtain 
\begin{align}
S&= \int_{X} 
%d^{3}\sigma \ 
\left(- \xi_{i} \wedge d x^{i}
+ \frac{1}{2} k_{ab} \qone^{a} \wedge d \qone^{b}
%\right.
%\nonumber \\ && 
%\left.
+ f_{1}^{i}{}_{a}(x)
\xi_{i} \wedge \qone^{a} 
+ \frac{1}{3!}
f_{2}{}_{abc}(x)
\qone^{a} \wedge \qone^{b} \wedge \qone^{c}
\right),
\label{3DCSM2}
\end{align}
where $d$ is the exterior differential on $X$,
$\xi_{i} = \frac{1}{2} d \sigma^{\mu} \wedge d \sigma^{\nu} 
\xi^{(2)}_{\mu\nu, i}$ and 
$\qone^{a} = d \sigma^{\mu} \qone^{(1)a}_{\mu}$.

%%%%%%%%%%%%%%%%%%%%%%%%%%%%%%%%%%%%%%%%%%%%%%%%%%%%%%%%%%%%%%%%%%%%%%
\subsubsection{Chern-Simons Gauge Theory}\label{cstheory}
In the Courant sigma model, (\ref{3DCSM2}),
if we take $\xi_i=0$, $f_{1}^{i}{}_{a}(x) = 0$ and 
$f_{2}{}_{abc}(x)= f_{2}{}_{abc}=$ constant,
the action reduces to the Chern-Simons theory:
\begin{eqnarray}
S&=& \int_{X} 
%d^{3}\sigma \ 
\left(
\frac{1}{2} k_{ab} A^{a} \wedge d A^{b}
+ \frac{1}{3!}
f_{2}{}_{abc}
A^{a} \wedge A^{b} \wedge A^{c}
\right),
\label{3DChernSimons}
\end{eqnarray}
where we denote the $1$-form by $A^a = \qone^a$.
Therefore, the Chern-Simons theory can be obtained by the AKSZ construction.

In fact, the AKSZ construction in three dimensions for
a Lie algebra target space 
yields the Chern-Simons theory.
%We consider $n=2$ case again.
Let $\mathfrak{g}$ be a Lie algebra
and let $k_{ab}$ be a metric on $\mathfrak{g}$.
If $\mathfrak{g}$ is semi-simple, we can take $k_{ab}$ as 
the Killing metric.
Note that $\calM = \mathfrak{g}[1] 
%\oplus \mathfrak{g}^*[1]
$ has QP-manifold structure of degree $2$,
and $M = \{pt\}$.
The P-structure is defined as
\begin{eqnarray*}
\omega =
\frac{1}{2} k_{ab} \delta \qone^{a} \wedge \delta \qone^{b},
\end{eqnarray*}
where $a=a(1), b=b(1), \cdots $.
%and $k_{ab}$ is a metric on a Lie algebra.
The Q-structure is
\begin{eqnarray*}
\Theta = \frac{1}{3!} f{}_{abc}
\qone^{a} \qone^{b} \qone^{c},
\end{eqnarray*}
where $f{}_{abc}$ is the structure constant of $\mathfrak{g}$.

Let $X$ be a three-dimensional manifold and $\calX = T[1]X$.
%and a QP-manifold $\calM = \mathfrak{g}[1]$.
Then, $\bqone^{a}$ is a section of $T^*[1]X \otimes \bbx^*(\mathfrak{g}[1])$.
%and $\bbp_{a(1)}$ be a section of 
%$T^*[1]X \otimes \bbx^*(\mathfrak{g}^*[2])$.
%$\bbxi_i \bbd \bbx^i=0$ since 
%$\bbF = \bbd \bbq
%- [\bbq, \bbq]
%_{a(1)}
%\sum_{b(1), c(1)} \frac{1}{2} f{}_{a(1)b(1)c(1)} \bbq_{b(1)} \bbq_{c(1)}
%$
%is a curvature,
%where $f{}_{a(1)b(1)c(1)}$ is a structure constant of a Lie algebra 
%$\mathfrak{g}$.
The AKSZ construction 
on $\Map(T[1]X, \mathfrak{g}[1])$
yields the P-structure:
\begin{eqnarray*}
\bomega &=& \int_{\calX} 
d^{3}\sigma d^{3}\theta
\ \frac{1}{2} k_{ab} \delta \bqone^{a} \wedge \delta \bqone^{b}
\end{eqnarray*}
and the Q-structure function
\begin{eqnarray*}
&& S= 
%\sum_{\lambda, a(\lambda)} \int_{\calX} \mu 
%\ \left(- (\bbq, \bbF)_K \right)
%= 
\int_{\calX} 
d^{3}\sigma d^{3}\theta
%\mu 
\ \left(\frac{1}{2} k_{ab} \bqone^{a} \bbd \bqone^{b}
+ \frac{1}{3!} f{}_{abc}
\bqone^{a}\bqone^{b}\bqone^{c}
\right).
\end{eqnarray*}
The action satisfies $\sbv{S}{S} =0$.
This is the AKSZ sigma model of the action (\ref{3DChernSimons}) for
the Chern-Simons theory in three dimensions \cite{Alexandrov:1995kv},
which coincides with the BV action obtained in Ref.~\citenum{Axelrod:1991vq}.
%\medskip\\
%\noindent
%%%%%%%%%%%%%%%%%%%%%%%%%%%%%%%%%%%%%%%%%%%%%%%%%%%%%%%%%%%%%%%%%%%%%%
\subsection{$n=3$}
\subsubsection{AKSZ Sigma Model in $4$ Dimensions}\label{4dtft}
We take $n=3$.
Then, $X$ is a four-dimensional manifold, and
$\calM$ is the QP-manifold of degree $3$ in Example \ref{Talgebroid}.
%Let $\calM^{(k)}$ be a subspace of degree $k$.
Let $\bbx^{i}$ be a map from $T[1]X$ to $M = \calM^{(0)}$ and 
$\bbxi_{i}$ be a section of $T^*[1]X \otimes \bbx^*(\calM^{(3)})$.
Let $\bbq^{a}$ be a section of $T^*[1]X \otimes \bbx^*(\calM^{(1)})$
and 
$\bbp_{a}$ be a section of $T^*[1]X \otimes \bbx^*(\calM^{(2)})$.
Here, we denote $a(0), b(0), \cdots$ by $i, j, \cdots$ and $a(1), b(1), \cdots $ by $a, b, \cdots$.
Note that $(\bbx^{i}, \bbxi_{i}, \bbq^a, \bbp_a)$ are superfields 
of degrees $(0, 3, 1, 2)$.
The P-structure is
\begin{eqnarray*}
\bomega &=& \int_{\calX} 
d^{4}\sigma d^{4}\theta
\ \left(
\delta \bbx^{i} \wedge \delta \bbxi_{i}
-
\delta \bbq^{a} \wedge \delta \bbp_{a}
\right).
\end{eqnarray*}
%$\calM = T^*[3]E[1]$. 
The Q-structure funciton is
\begin{eqnarray*}
S&=&S_0 + S_1, \\
S_0&=& \int_{\calX} 
d^{4}\sigma d^{4}\theta
\ (\bbxi_{i} \bbd \bbx^{i}
- \bbp_{a} \bbd \bbq^{a}),
\nonumber\\
%\end{eqnarray*}
%\begin{eqnarray*}
S_1 &=&
\int_{\calX} 
d^{4}\sigma d^{4}\theta
\ 
\Bigl(
f_{1}{}^{i}{}_{a}(\bbx)
\bbxi_{a} \bbq^{i} 
+ \frac{1}{2}
f_{2}{}^{ab}(\bbx)
\bbp_{a} \bbp_{b}
%\nonumber \\ && 
+ \frac{1}{2}
f_{3}{}^{a}{}_{bc}(\bbx)
\bbp_{a} \bbq^{b} \bbq^{c}
+ \frac{1}{4!}
f_{4}{}_{abcd}(\bbx)
\bbq^{a} \bbq^{b} \bbq^{c} \bbq^{d}
\Bigr).
%\label{4DBVaction}
\end{eqnarray*}
This topological sigma model has the structure of 
a Lie 3-algebroid, which is also called a Lie algebroid up to homotopy or  H-twisted Lie algebroid,
%introduced in the papers \cite{Ikeda:2010vz, Grutzmann}
that appeared in Example \ref{Talgebroid}.
\cite{Ikeda:2010vz, Grutzmann}
%\footnote{The degree assignment in this paper is changed by 1 
%in \cite{Ikeda:2010vz}.}.

%%%%%%%%%%%%%%%%%%%%%%%%%%%%%%%%%%%%%%%%%%%%%%%%%%%%%%%%%%%%%%%%%%%%%%
\subsubsection{Topological Yang-Mills Theory}\label{topoYM}
We consider a semi-simple Lie algebra $\mathfrak{g}$
and a graded vector bundle 
$\calM = T^*[3] \mathfrak{g}[1]
\simeq
\mathfrak{g}^*[2]\oplus\mathfrak{g}[1]$ of degree 3
on a point 
$
M = 
%\mathfrak{g}\to
\{pt\}$.
The world-volume supermanifold is $\calX = T[1]X$, where
$X$ is a four-dimensional manifold.
%The P-manifold over $\mathfrak{g}\to\{pt\}$
%is isomorphic to $\mathfrak{g}^*[2]\oplus\mathfrak{g}[1]$.
%and the structure sheaf is
%the polynomial algebra over $\mathfrak{g}[2]\oplus\mathfrak{g}^{*}[1]$,
Then, 
%$\calX = T[1]X$ where 
$\bbq^{a}$ is a section of $T^*[1]X \otimes \bbx^*(\mathfrak{g}[1])$ and
$\bbp_{a}$ is a section of $T^*[1]X \otimes \bbx^*(\mathfrak{g}^*[2])$,
where $a(1)=a, b(1)=b, \cdots$.
The P-structure is
\begin{eqnarray*}
\bomega &=& \int_{\calX} 
d^{4}\sigma d^{4}\theta
\ 
\left(
- \delta \bbq^{a} \wedge \delta \bbp_{a}
\right).
\end{eqnarray*}
%We assume that $\mathfrak{g}$ is semisimple.
The dual space $\mathfrak{g}^{*}$ has the metric
$(\cdot,\cdot)_{K^{-1}}$, which is
the inverse of the Killing form on $\mathfrak{g}$.
%Then $X$ is a four-dimensional manifold.
%The metric inherits the following invariant condition from
%the Killing form:
%\begin{equation}\label{la3}
%(\mathcal{L}_{p}q_{1},q_{2})_{K^{-1}}
%+(q_{1},\mathcal{L}_{p}q_{2})_{K^{-1}}=0,
%\end{equation}
%where $\mathcal{L}_{p}(-)$ is the canonical
%coadjoint action of $\mathfrak{g}$ to $\mathfrak{g}^{*}$.
%We notice that 
We can define the Q-structure
\begin{eqnarray}
\Theta =
k^{ab}p_a p_b
+\frac{1}{2} f{}^a{}_{bc} p_a q^b q^c,
\label{killingQ}
\end{eqnarray}
where $q^a$ is a coordinate on $\mathfrak{g}[1]$, 
$p_a$ is a coordinate on $\mathfrak{g}^*[2]$,
$k^{ab} p_a p_b:=(p_a,p_b)_{K^{-1}}$ and 
$f^{a}{}_{bc}$ is the structure constant of the Lie algebra
$\mathfrak{g}$.
The AKSZ construction determines
the following BV action:
\begin{eqnarray*}
&& S= \int_{\calX} 
d^{4}\sigma d^{4}\theta
\ (- \bbp_a \bbF^a
+ k^{ab} \bbp_a \bbp_b),
\end{eqnarray*}
where $\bbF^a = \bbd \bq^a - \frac{1}{2} f{}^a{}_{bc} \bq^b \bq^c$.
This derives a topological Yang-Mills theory,
if we integrate out $\bbp_a$ and
make a proper gauge fixing of the remaining superfields. 
\cite{Ikeda:2011xr}

%%%%%%%%%%%%%%%%%%%%%%%%%%%%%%%%%%%%%%%%%%%%%%%%%%%%%%%%%%%%%%%%%%%%%%
\subsection{General $n$}
\subsubsection{Nonabelian BF Theories in $n+1$ Dimensions}
Let $n \geq 2$, and let $\mathfrak{g}$ be a Lie algebra.
$X$ is an ($n+1$)-dimensional manifold, and we define $\calX = T[1]X$.
We consider $\calM = T^*[n] \mathfrak{g}[1]
\simeq \mathfrak{g}[1] \oplus \mathfrak{g}^*[n-1]$
with a point base manifold, $M = \{pt\}$.
Let $\bbq^{a}$ be a section of $T^*[1]X \otimes \bbx^*(\mathfrak{g}[1]])$
of degree 1, and $\bbp_{a}$ be a section of 
$T^*[1]X \otimes \bbx^*(\mathfrak{g}^*[n-1]])$
of degree $n$.
Here, we denote $a(1)=a, b(1)=b, \cdots $.
%$\bbxi_i \bbd \bbx^i=0$ since 
%
The P-structure is defined as
\begin{eqnarray*}
\bomega &=& \int_{\calX} 
d^{n+1}\sigma d^{n+1}\theta
\ (-1)^{n|q|} \delta \bbq^{a} \wedge \delta \bbp_{a}.
\end{eqnarray*}
The curvature is defined as
$\bbF^{a} = \bbd \bbq^{a} 
+ (-1)^{n} \frac{1}{2} f{}^{a}{}_{bc} 
\bbq^{b}\bbq^{c}
$.
%$\mathfrak{g}$.	
The BV action is 
\begin{eqnarray*}
S
&=& 
\int_{\calX} 
d^{n+1}\sigma d^{n+1}\theta
%\mu 
\ ((-1)^{n} \bbp_{a} \bbF^{a})
\nonumber \\
&=& 
\int_{\calX} d^{n+1}\sigma d^{n+1}\theta
%\mu 
\ \left((-1)^{n} \bbp_{a} \bbd \bbq^{a} 
+ \frac{1}{2} f{}^{a}{}_{bc} \bbp_{a} \bbq^{b} \bbq^{c}
\right).
\end{eqnarray*}
The master equation $\sbv{S}{S}=0$ is easily confirmed.
%The first term corresponds to $S_0$ and 
%the second term to $S_1$.
This action is equivalent to the BV formalism of
a nonabelian BF theory in $n+1$ dimensions.
\cite{Cattaneo:2000rt, Cattaneo:2000mc}
%A local coordinate on $\Map(\calX, \calM)$ is called a superfield.

%%%%%%%%%%%%%%%%%%%%%%%%%%%%%%%%%%%%%%%%%%%%%%%%%%%%%%%%%%%%%%%%%
%%%%%%%%%%%%%%%%%%%%%%%%%%%%%%%%%%%%%%%%%%%%%%%%%%%%%%%%%%%%%%%%%%%%%%
\subsubsection{Nonassociative Topological Field Theory}
We consider the QP-structure that was presented in Example \ref{nplusoneform}.
We obtain a TFT with a nontrivial nonassociativity 
based on a Lie n-algebroid structure.
%ve algebroid structure.

$\calM$ is a QP-manifold of degree $n$, 
$X$ is an ($n+1$)-dimensional manifold, and $\calX = T[1]X$.
From the Q-structure $\Theta$ in Example \ref{nplusoneform},
the BV action $S=S_0 + S_1$ on $\Map(\calX, \calM)$ is 
constructed by the AKSZ construction.
When $n$ is odd, $S_0$ has the form of equation (\ref{oddSzero}), and when $n$ is even, it has the form of equation (\ref{evenSzero}).
$S_1$ has the following expression:
\begin{eqnarray*}
S_1&=&\int_{\calX} 
\mu 
\ \ev^* \Theta
= \int_{\calX} \mu \ \ev^* 
(\Theta_0+\Theta_{2}+\Theta_{3}+\cdot\cdot\cdot+\Theta_{n}),
\end{eqnarray*}
where the $\Theta_i$'s are given in \eqref{nonassociativeTheta0}
and \eqref{nonassociativeThetai}.
After transgression, we obtain the superfield expressions, 
$$
\int_{\calX} \mu \ \ev^* \Theta_0 = 
\int_{\calX} 
d^{n+1}\sigma d^{n+1}\theta 
\ (f_{0}{}^{a(0)}{}_{b(1)}(\bbx)
\bbxi_{a(0)} \bbq^{b(1)} )
$$
and 
\begin{eqnarray*}
&& \int_{\calX} \mu \ \ev^* \Theta_i 
\nonumber \\
&=&
\int_{\calX} 
d^{n+1}\sigma d^{n+1}\theta 
\ 
\left(\frac{1}{i!}
f_{i,}{}_{a(n-i+1)}{}_{b_1(1) \cdots b_i(1)}(\bbx)
\bbe^{a(n-i+1)} \bbq^{b_1(1)} \cdots \bbq^{b_i(1)} \right).
\end{eqnarray*}
In particular, for the $(n+1)$-form $\Theta_{n}$, 
\begin{eqnarray*}
&&
\int_{\calX} \mu \ \ev^* \Theta_n
\nonumber \\
&=&
\int_{\calX} 
d^{n+1}\sigma d^{n+1}\theta 
%\mu 
\ 
\left(\frac{1}{(n+1)!}
f_{n,}{}_{b_0(1)b_1(1) \cdots b_n(1)}(\bbx)
\bbq^{b_0(1)} \bbq^{b_1(1)} \cdots \bbq^{b_n(1)}
\right).
\end{eqnarray*}
The master equation $\sbv{S}{S}=0$ defines the structure of 
the $(i+1)$-forms $\Theta_{i}$.
% up to homotopy on $\Map(\calX, \calM)$.
%%%%%%%%%%%%%%%%%%%%%%%%%%%%%%%%%%%%%%%%%%%%%%%%%%%%%%%%%%%%%%%%%

%%%%%%%%%%%%%%%%%%%%%%%%%%%%%%%%%%%%%%%%%%%%%%%%%%%%%%%%%%%%%%%%%%%%%
%%%%%%%%%%%%%%%%%%%%%%%%%%%%%%%  SEC. 9 %%%%%%%%%%%%%%%%%%%%%%%%%%%%%%
%%%%%%%%%%%%%%%%%%%%%%%%%%%%%%%%%%%%%%%%%%%%%%%%%%%%%%%%%%%%%%%%%%%%
\section{AKSZ Sigma Models with Boundary}\label{AKSZboundary}
\noindent
So far,
we have considered AKSZ sigma models on a closed base manifold $X$.
In this section, we will consider AKSZ models,
where the base manifold $X$ has 
boundaries.
%a 
%\rc{a} 
%boundary.
These have important applications.
In the case where $n=1$, it corresponds to a topological open string
and it yields the deformation quantization formulas \cite{Cattaneo:1999fm}.
The quantization of the $n=1$ case will be discussed below.
If $n \geq 2$, the theory describes a topological open $n$-brane
\cite{Park2000au, Hofman:2002rv}.

%%%%%%%%%%%%%%%%%%%%%%%%%%%%%%%%%%%%%%%%%%%%%%%%%%%%%%%%%%%%%%%%%%%%%%
\subsection{$n=2$: WZ-Poisson Sigma Model}
\noindent
We will explain the construction of the AKSZ theory with boundary
using the WZ-Poisson sigma model, the simplest nontrivial example.
%of the $n=2$.
% case.
%In the following, we demonstrate derivation of the AKSZ sigma models with 
%boundary using derivation of the WZ-Poisson sigma model as a concrete example.
Nontrivial boundary structures are described in supergeometry terminology.
%in the language of 
%using the terminology used for 
%graded manifolds.

%AKSZ construction with boudnary 
%by the WZ-Poisson sigma model.
%derivation of the AKSZ sigma models with boundary by 
%the WZ-Poisson sigma model.

%\begin{example}[The WZ-Poisson Sigma Model.]
%{\rm 
%Let $X$ be a manifold with a boundary.
%
We take $n=2$ and the target graded manifold $\calM = T^*[2]T^*[1]M$.
As discussed,
$T^*[2]T^*[1]M$ has a natural QP-manifold structure.
Let $x^i$ be a coordinate of degree $0$ on $M$, 
$q^i$ be a coordinate of degree $1$ on the fiber of $T[1]M$, 
$p_i$ be a coordinate of degree $1$ on the fiber of $T^*[1]M$, 
and $\xi_i$ be a coordinate of degree $2$ on the fiber of $T^*[2]M$.
%Let $x^i$ be a coordinate on $M$ of degree 0, 
%let $q^i$ be a coordinate on the fiber of $T[1]M$ of degree 1, 
%let $p_i$ be a coordinate on the fiber of $T^*[1]M$ of degree 1, 
%and let $\xi_i$ be a coordinate on the fiber of $T^*[2]M$ of degree 2.

We take the following P-structure:
\begin{eqnarray}
\omega &=& 
\delta x^{i} \wedge \delta \xi_{i} +
\delta q^{i} \wedge \delta p_i.
\label{WZPpstructure}
\end{eqnarray}
By introducing a $3$-form $H$ on $M$,
the Q-structure function is defined as
\begin{eqnarray}
\Theta = \xi_{i} q^{i} 
+ \frac{1}{3!}
H_{ijk}(x) q^{i} q^{j} q^{k}.
\end{eqnarray}
Note that $\sbv{\Theta}{\Theta}=0$ is equivalent to $dH=0$.

Let us consider a three-dimensional manifold $X$ with boundary
%for which the boundary is the two-dimensional manifold 
$\partial X$. 
The AKSZ construction 
defines a topological sigma model on $\Map(T[1]X, T^*[2]T[1]M)$.
This model is a special case of the Courant sigma model
on an open manifold.
The P-structure becomes
\begin{eqnarray}
\bomega &=& \int_{\calX} 
d^{3}\sigma d^{3}\theta
\ 
(
\delta \bbx^{i} \wedge \delta \bbxi_{i} +
\delta \bp_i \wedge \delta \bq^{i}
).
\end{eqnarray}
%where $\calX = T[1]X$.
The Q-structure BV action has the following form:
\begin{eqnarray}
S&=& \int_{\calX} d^{3}\sigma d^{3}\theta
\ \left(- \bbxi_{i} \bbd \bbx^{i}
+ \bbq^{i} \bbd \bbp_{i}
+ 
\bbxi_{i} \bq^{i} 
+ \frac{1}{3!}
H_{ijk}(\bbx) \bq^{i} \bq^{j} \bq^{k}
\right).
\label{WZPoissonBVaction}
\end{eqnarray}

We need to determine the boundary conditions to complete the theory.
Consistency with the variation principle restricts the possible boundary conditions.
%We will discuss consistent boundary conditions, since they
%must be consistent with 
%the variation principle in order 
%to obtain the equations of motion.
%Then the sum of boundary contributions of 
The variation $\delta S$ is
\begin{eqnarray*}
\delta S =
\int_{\calX} d^{3}\sigma d^{3}\theta
\ \left(- \delta \bbxi_{i} \bbd \bbx^{i}
- \bbxi_{i} \bbd \delta \bbx^{i}
+ \delta \bbq^{i} \bbd \bbp_{i}
+ \bbq^{i} \bbd \delta \bbp_{i}
+ \cdots
\right).
%\label{4DBVaction}
\end{eqnarray*}
To derive the equations of motion, we use integration by parts for the terms 
$- \bbxi_{i} \bbd \delta \bbx^{i} + \bbq^{i} \bbd \delta \bbp_{i}$.
The boundary terms must vanish, i.e.,
\begin{eqnarray}
\delta S|_{\partial \calX} =
\int_{\partial \calX} d^{2}\sigma d^{2}\theta
\ \left(
- \bbxi_{i} \delta \bbx^{i}
- \bbq^{i} \delta \bbp_{i}
\right) = 0.
\label{boundaryintegration}
\end{eqnarray}
Any boundary condition must be consistent with 
equation (\ref{boundaryintegration}).

Two kinds of local boundary conditions are possible:
$\bbxi_{//i} = 0$ or $\delta \bbx_{//}^{i}=0$, and
$\bbq_{//}^{i} =0$ or $\delta \bbp_{//i}=0$,
where $//$ indicates the component that is parallel to the boundary.\footnote{Hybrids of these boundary conditions are also possible.
}
As an example, 
%we consider local coordinates such that 
%the boundary $\partial \calX$ is defined by $\sigma^2 = \theta^2 =0$.
we take the boundary conditions 
$\bbxi_{//i} = 0$ and $\bbq_{//}^{i} =0$ on $\partial \calX$.
These boundary conditions can be written using the components 
of the superfields as follows:
$\xi^{(0)}_{i} = \xi^{(1)}_{0 i} 
= \xi^{(1)}_{1 i} = \xi^{(2)}_{01 i} = 0$ 
and $q^{(0)i} = q_{0}^{(1)i} 
= q_1^{(1)i} = q_{01}^{(2)i} =0$ on $\partial \calX$.

Another consistency condition is that the boundary conditions
must not break the classical master equation 
$\sbv{S}{S}=0$.
Direct computation using the BV action \eqref{WZPoissonBVaction} gives
\begin{equation}
\sbv{S}{S} = 
\int_{\partial \calX} \!\!\!\! d^{2}\sigma d^{2}\theta
\left(- \bbxi_{i} \bbd \bbx^{i}
+ \bbq^{i} \bbd \bbp_{i}
+ 
\bbxi_{i} \bq^{i} 
+ \frac{1}{3!}
H_{ijk}(\bbx) \bq^{i} \bq^{j} \bq^{k}
\right).
\label{boundarymasterequation}
\end{equation}
The boundary conditions 
$\bbxi_{//i} = 0$ and $\bbq_{//}^{i} =0$ are consistent with the classical master equation.
The kinetic terms on the right-hand side in equation (\ref{boundarymasterequation}) vanish
 on the boundary:
\begin{eqnarray}
\int_{\partial \calX} d^{2}\sigma d^{2}\theta \
\bvartheta = 
\int_{\partial \calX} d^{2}\sigma d^{2}\theta
\ \left(- \bbxi_{i} \bbd \bbx^{i}
+ \bbq^{i} \bbd \bbp_{i}
\right) = 0.
\label{kinticboundary}
\end{eqnarray}
The interaction terms in equation (\ref{boundarymasterequation}) also vanish:
\begin{eqnarray}
\int_{\partial \calX} d^{2}\sigma d^{2}\theta \
\bTheta = 
\int_{\partial \calX} d^{2}\sigma d^{2}\theta
\ \left(
\bbxi_{i} \bq^{i} 
+ \frac{1}{3!}
H_{ijk}(\bbx) \bq^{i} \bq^{j} \bq^{k}\right)=0.
\label{interactionboundary}
\end{eqnarray}
%This equation is $\bTheta_1|_{\partial \calX}=0$.
%This equation is satisfied
%because the boundary conditions are
%$\bbxi_{//i} = 0$ and $\bbq_{//}^{i} =0$
It is accidental that the second condition does not impose a new condition.
Generally, we have more conditions on the boundary, such as in the next example.

The consistency of the boundary conditions is described in the language of the target QP-manifold $\calM$.
Equation (\ref{kinticboundary}) is satisfied if 
$\xi_i= q^i = 0$.
% on $\calM$.
From equation (\ref{WZPpstructure}),
this is satisfied if the image of a boundary is in
a Lagrangian subspace of the P-structure $\omega$.
Equation (\ref{interactionboundary}) is satisfied if
$\bTheta|_{\partial \calX}=0$, that is, 
the Q-structure vanishes ($\Theta = 0$) on the Lagrangian subspace.

Note that there exists an ambiguity in the total derivatives of $S_0$,
and this comes from the ambiguity in the expression for the local coordinates of $\vartheta$.
Here, we choose an $S_0$ such that the classical master equation is satisfied 
if we take $\bTheta|_{\partial \calX}=0$.
For example, if we use the boundary condition $\xi_i = p_i =0$, then
we should take $S_0 = \int_{\calX} d^{3}\sigma d^{3}\theta
\ \left(- \bbxi_{i} \bbd \bbx^{i}
+ \bbp_{i} \bbd \bbq^{i}
\right)$.

We can change the boundary condition by introducing consistent boundary terms.
%The boundary term changes the boundary conditions.
% to the AKSZ sigma models.
%We modify the action by introducing a boundary term.
%Counting degree counting, t
For the present example, 
the boundary terms must be pullbacks of a 
%second-degree 
degree two function $\alpha$ by the transgression map,
$\mu_* \ev^* \alpha$. 
As an example, we take
$%- \int_{\partial \calX} d^{2}\sigma d^{2}\theta \
\alpha =
%- \int_{\partial \calX} d^{2}\sigma d^{2}\theta \
\frac{1}{2}
f^{ij}(x)
p_{i} p_{j}$ \cite{Park2000au} and find consistency conditions for $H_{ijk}(x)$ and $f^{ij}(x)$.\footnote{Equation (\ref{3DCSMboundary}) is just one example of a boundary term; we can consider 
more general boundary terms, such as 
%\\
$$- \int_{\partial \calX} 
d^{2}\sigma d^{2}\theta \
\left(\bp_i \bbd \bbx^i + 
\frac{1}{2} f^{ij}(\bbx) \bp_{i} \bp_{j} 
+ g^{i}{}_j(\bbx) \bp_{i} \bq^{j}
+ \frac{1}{2} h_{ij}(\bbx) \bq^{i} \bq^{j}
\right).$$} The modified action is given by
\begin{eqnarray}
S&=& \int_{\calX} d^{3}\sigma d^{3}\theta
\ \left(- \bbxi_{i} \bbd \bbx^{i}
+ \bbq^{i} \bbd \bbp_{i}
+ 
\bbxi_{i} \bq^{i} 
+ \frac{1}{3!}
H_{ijk}(\bbx) \bq^{i} \bq^{j} \bq^{k}
\right)
\nonumber \\
&& - 
\int_{\partial \calX} 
d^{2}\sigma d^{2}\theta \
\frac{1}{2}
f^{ij}(\bbx)
\bp_{i} \bp_{j}.
\label{3DCSMboundary}
\end{eqnarray}
In order to derive the equations of motion
from the variation of $\delta S$, the following boundary 
integral must vanish:
\begin{eqnarray*}
\delta S|_{\partial \calX}
=
\int_{\partial \calX} d^{2}\sigma d^{2}\theta
\ \left[\left(- \bbxi_{i} 
- \frac{1}{2} \frac{\partial f^{jk}(\bbx)}{\partial \bbx^i}
\bp_{j} \bp_{k} \right)
\delta \bbx^{i}
+ \left(- \bbq^{i} 
+ 
f^{ij}(\bbx) \bp_{j}
\right)
\delta \bbp_{i} 
\right].
%\label{4DBVaction}
\end{eqnarray*}
This determines the 
%For these terms to vanish, the consistent 
boundary conditions as
\begin{eqnarray}
\bbxi_{i}|_{//} =
- \frac{1}{2} \frac{\partial f^{jk}}{\partial \bbx^i}(\bbx) 
\bp_{j} \bp_{k}|_{//},
\qquad 
\bq^i|_{//} &=& f^{ij}(\bbx) \bp_{j} |_{//}.
\label{WZPoissonboundary}
\end{eqnarray}
In addition, we must also consider a boundary term in $\sbv{S}{S}$.
%This condition requires
%additional consistency conditions, as follows.
In this example, the classical master equation, $\sbv{S}{S}=0$,
requires
%leads 
the integrand of $S_1$ to be zero on the boundary:\footnote{Equation \eqref{boundaryS1} is the same as equation \eqref{interactionboundary}. We can prove that this condition does not depend on the boundary conditions.}
\begin{eqnarray}
\left(\bbxi_{i} \bq^{i} 
+ \frac{1}{3!}
H_{ijk}(\bbx) \bq^{i} \bq^{j} \bq^{k}
\right) \bigg|_{//}
= 0.
\label{boundaryS1}
\end{eqnarray}
Equations (\ref{WZPoissonboundary}) and (\ref{boundaryS1}) 
show that the image of the boundary must satisfy the following conditions,
\begin{eqnarray}
%(
&& \xi_{i} q^{i} 
+ \frac{1}{3!}
H_{ijk}(x) q^{i} q^{j} q^{k}
%) |_{\calL^{\alpha}}
= 0,
\label{WZPS1}
\\
&& \xi_{i} = - \frac{1}{2} \frac{\partial f^{jk}}{\partial x^i}(x) 
p_{j} p_{k},
\label{WZPLagrangian1}
\\
&& q^i = f^{ij}(x) p_{j}.
\label{WZPLagrangian2}
\end{eqnarray}
This means that 
equation \eqref{WZPS1} is satisfied on the Lagrangian subspace 
$\calL_{\alpha}$ 
of a target QP-manifold $\calM$
defined by \eqref{WZPLagrangian1} and \eqref{WZPLagrangian2}.
By substituting equations  (\ref{WZPLagrangian1}) and (\ref{WZPLagrangian2}) into equation
(\ref{WZPS1}), we obtain the geometric structures on the image of the boundary 
$\partial X$,
\begin{eqnarray}
&& \xi_{i} q^{i} 
+ \frac{1}{3!}
H_{ijk}(x) q^{i} q^{j} q^{k}
\nonumber \\
&=& -
\frac{1}{2} \frac{\partial f^{jk}}{\partial x^l}(x) 
f^{li}(x) p_{j} p_{k} p_{i}
+ \frac{1}{3!}
H_{ijk}(x) f^{il}(x)
f^{jm}(x) f^{kn}(x) p_{l} p_{m} p_{n} 
\nonumber \\ 
&=& 0.
\label{twistedPoisson}
\end{eqnarray}
If we define a bivector field
$\pi = \frac{1}{2} f^{ij}(x) \partial_i \wedge \partial_j$,
%on a space of bivector fields $\wedge^2 TM$,
then equation (\ref{twistedPoisson}) is equivalent to
\begin{eqnarray}
[\pi,\pi]_S = \wedge^3 \pi^{\#} H.
\label{twistedPoisson2}
\end{eqnarray}
Here, $[-,-]_S$ is the Schouten-Nijenhuis bracket on 
the space of multivector fields $\Gamma (\wedge^{\bullet} TM)$,
which is an odd Lie bracket on the exterior algebra
such that 
$\partial_i \wedge \partial_j = - \partial_j \wedge \partial_i$.
The operation $\pi^{\#}: T^*M \to TM $ is locally 
defined by 
$\frac{1}{2} f^{ij}(x) \partial_i \wedge \partial_j(dx^k) 
= f^{kj}(x) \partial_j$.
Equation \eqref{twistedPoisson2} is called a \textsl{twisted Poisson structure}
\cite{Severa:2001qm}.

The ghost number 0 part of the BV action, equation (\ref{3DCSMboundary}), 
becomes 
\begin{eqnarray}
S|_0&=& \int_{X} 
%d^{3}\sigma 
\ \left(- \xi^{(2)}_{i} \wedge d x^{i}
+ q^{(1)i} \wedge d p^{(1)}_{i}
+ 
\xi^{(2)}_{i} \wedge q^{(1)i} 
%\right.
%\nonumber \\
%&& 
%\left.
+ \frac{1}{3!}
H_{ijk}(x) q^{(1)i} \wedge q^{(1)j} \wedge q^{(1)k}
\right)
\nonumber \\
&& 
- 
\int_{\partial X} 
%d^{2}\sigma \
\frac{1}{2}
f^{ij}(x)
p^{(1)}_{i} \wedge p^{(1)}_{j},
\label{3DCSMboundary2}
\end{eqnarray}
after integration with respect to $\theta^{\mu}$, where $x = x^{(0)}$.
Integrating out $\xi^{(2)}_i$, we obtain a
topological field theory in two dimensions with a Wess-Zumino term:
\begin{eqnarray*}
S|_0&=& \int_{\partial X} 
%d^{2}\sigma
\ \left(- p^{(1)}_{i} \wedge d x^{i}
- 
\frac{1}{2}
f^{ij}(x)
p^{(1)}_{i} \wedge p^{(1)}_{j} 
\right)
%\nonumber \\
%&& 
+ 
\int_{X}
%d^3 \sigma \
\frac{1}{3!}
H_{ijk}(x)
d x^{i} 
\wedge d x^{j} 
\wedge d x^{k}.
%\label{4DBVaction}
\end{eqnarray*}
This model is called
the WZ-Poisson sigma model or the twisted Poisson sigma model \cite{Klimcik:2001vg}.
The constraints are first class if and only if 
the target space manifold has a twisted Poisson structure.

%Analysis shows that this action has 
%a first-class constraint algebra when 
%the target space manifold has the twisted Poisson structure.
%}
%\end{example}

%%%%%%%%%%%%%%%%%%%%%%%%%%%%%%%%%%%%%%%%%%%%%%%%%%%%%%%%%%%%%%%%%%%%%
%%%%%%%%%%%%%%%%%%%%%%%%%%%%%%%  SEC. 7 %%%%%%%%%%%%%%%%%%%%%%%%%%%%%%
%%%%%%%%%%%%%%%%%%%%%%%%%%%%%%%%%%%%%%%%%%%%%%%%%%%%%%%%%%%%%%%%%%%%
\subsection{General Structures of AKSZ Sigma Models with Boundary}
\noindent
In the previous subsection, a typical example 
for  boundary structures of AKSZ sigma models was presented.
In this subsection, we discuss the general theory 
%boundary structures 
%of AKSZ sigma models 
in $n+1$ dimensions.

Assume that $X$ is an ($n+1$)-dimensional
manifold with boundary, 
$\partial X \neq \emptyset$.
Let $\calM$ be a QP-manifold of degree $n$.
Then, by the AKSZ construction,
a topological sigma model on $\Map(T[1]X, \calM)$
can be constructed.
The boundary conditions on $\partial X$
must be consistent with the QP-structure.

First, let us take a Q-structure function
%$\bvartheta$ and $\bTheta$ be super Lagrangians such that
$S= S_0 + S_1 = \iota_{\hat{D}} \mu_* \ev^* \vartheta +\mu_* \ev^* \Theta$
without boundary terms.
%$S_0 = \int_{\calX} \mu \ \bvartheta$ and
%$S_1 = \int_{\calX} \mu \ \bTheta$.
%Concretely 
%$\bvartheta = \iota_{{D}} {\rm ev}^* \vartheta$ and
%$\bTheta = \ev^* \Theta$.
Then, $\sbv{S}{S}$ yields the integrated boundary terms,
\begin{eqnarray}
\sbv{S}{S}= 
\iota_{\hat{D}} \mu_{\partial \calX *} \
(i_{\partial} \times {\rm id})^* \ 
\ev^* \vartheta 
+ \mu_{\partial \calX *} \
(i_{\partial} \times {\rm id})^* \ \ev^* \Theta,
%\int_{\partial \calX}
%\mu_{\partial \calX} \
% (i_{\partial} \times {\rm id})^* \ 
%\ev^* (\vartheta + \Theta),
\label{boundaryCME}
\end{eqnarray}
where
$\mu_{\partial \calX} $ is the boundary measure
induced from $\mu$ on $\partial \calX$ by
the inclusion map
$i_{\partial}: \partial \calX \longrightarrow \calX$.
The map $(i_{\partial} \times {\rm id})^*: 
\Omega^{\bullet}(\calX \times \calM)
\longrightarrow \Omega^{\bullet}(\partial \calX \times \calM)
$ is the restriction of the bulk graded differential forms 
on the mapping space to the boundary $\partial \calX$.
In order to satisfy the master equation,
the right-hand side of equation (\ref{boundaryCME}) must vanish.
Thus we obtain the following theorem,
%as stated in the following theorem. 
\begin{theorem}\label{boundarytheta}
Assume that $\partial \calX \neq \emptyset$. 
$\sbv{S}{S}=0$ requires 
$
\iota_{\hat{D}} \mu_{\partial \calX *} \
(i_{\partial} \times {\rm id})^* \ 
\ev^* \vartheta 
+ \mu_{\partial \calX *} \
(i_{\partial} \times {\rm id})^* \ \ev^* \Theta =0$.
%on $\calM$, where $\calN = {\rm Im} \ \partial \calX \subset \calM$
\end{theorem}
If we consider the consistency with the variational principle
of a field theory,
the two terms must vanish independently.
We explain this using the local coordinate expression.

The kinetic term in the AKSZ sigma model is
\begin{eqnarray}
S_0 = 
\int_{\calX} 
d^{n+1} \sigma d^{n+1} \theta 
%\mu 
\ 
\sum_{0 \leq i \leq \floor{n/2}} 
(-1)^{n+1-i} \bbp_{a(i)} \bbd \bbq^{a(i)}.
\end{eqnarray}
In order to derive the equations of motion, we take the variation.
%of the action. 
We find that the boundary integration of the variation of the total action,
should vanish for consistency:
\begin{eqnarray}
\delta S|_{\partial \calX} = 
\int_{\partial \calX} 
d^{n} \sigma d^{n} \theta 
%\mu 
\ 
\sum_{0 \leq i \leq \floor{n/2}} 
(-1)^{n+1-i} \bbp_{a(i)} \delta \bbq^{a(i)}=0.
\end{eqnarray}
%is required 
This imposes the boundary conditions $\bbp_{a(i)} =0$ 
or $\delta \bbq^{a(i)} =0$ on $
%{\rm Im} \ 
\partial X$.
This implies 
%satisfies the condition 
that the image of the boundary lies
in a Lagrangian submanifold $\calL \subset \calM$,
which is the zero locus of $\vartheta$, 
$\vartheta |_{\mathcal L} =0$,
on the target space.
%We can prove that 
%Let $\calL$ be a Lagrangian subspace with respect to the Q-structure
%on $\calM$.
%, and let $\calN = \calL$.
Under this condition, the first term
in equation \eqref{boundaryCME}, $\iota_{\hat{D}} \mu_{\partial \calX *} \
(i_{\partial} \times {\rm id})^* \ 
\ev^* \vartheta$, vanishes.
Therefore, Theorem \ref{boundarytheta} reduces to a simpler form, that is, %it reduces to requiring
the condition that the second term vanishes.
This can be reinterpreted as a condition on $\Theta$ on the target space.
\begin{proposition}\label{boundarytheta2}
Let $\calL$ be a Lagrangian 
submanifold of $\calM$, i.e.,
$\vartheta|_{\calL} =0$.
%We assume $\calL$ is a conormal subspace of $\calM$.
Then $\sbv{S}{S}=0$ is satisfied if 
%$\vartheta |_{\calL} =0$ and
$\Theta |_{\calL} =0$.
\cite{Hofman:2002rv}
\end{proposition}

%%%%%%%%%%%%%%%%%%%%%%%%%%%%%%%%%%%%%%%%%%%%%%%%%%%%%%%%%%%%%
\subsection{Canonical Transformation of Q-structure Function}
\noindent
In the remainder of this section,
% subsections,
%and the next, 
we discuss the general theory of boundary terms.
Let us define an exponential adjoint operation $e^{\delta_\alpha}$
on a general QP-manifold $\calM$,
\begin{eqnarray}
e^{\delta_\alpha} \Theta
= \Theta + \sbv{\Theta}{\alpha} 
+ \frac{1}{2} \sbv{\sbv{\Theta}{\alpha}}{\alpha}
+ \cdots,
\end{eqnarray}
where $\alpha \in C^{\infty}(\calM)$.
% is a function on a graded manifold.

\begin{definition}
Let $(\calM, \omega, \Theta)$ be a QP-manifold of degree $n$,
$\alpha \in C^{\infty}(\calM)$
be a function of degree $n$, then,
$e^{\delta_\alpha}$ is called a 
%canonical transformation or 
{\rm twist} by $\alpha$.
%if 
%$\Theta^{\prime} = e^{\delta_\alpha} \Theta$
%satisfies
%$\sbv{\Theta^{\prime}}{\Theta^{\prime}}=0$.
\end{definition}
This transformation preserves degree, since $\alpha$ is of degree $n$.
Note that a twist 
%canonical transformation 
satisfies $\sbv{e^{\delta_\alpha} f}{e^{\delta_\alpha} g}
= e^{\delta_\alpha} \sbv{f}{g}$ for any function 
$f, g \in C^{\infty}(\calM)$, therefore, the twist by $\alpha$ is a canonical transformation.

Now we consider a canonical transformation of a QP-manifold $(\calM, \omega, \Theta)$ by a twist $e^{\delta_{\alpha}}$.
Since the Q-structure function $\Theta$ changes
to $e^{\delta_\alpha} \Theta$,
the Q-structure function in the corresponding AKSZ sigma model
is changed to
\begin{eqnarray}
S &=& S_0 + S_1
\nonumber \\
&=& \iota_{\hat{D}} \mu_* {\rm ev}^* \vartheta
+ \mu_* \ev^* 
e^{\delta_\alpha} \Theta.
\label{akszwithboundary}
\end{eqnarray}
%by a canonical transformation on the target space $\calM$.
If $\partial X = \emptyset$, the consistency condition
of the theory is not changed,
since a canonical transformation preserves the graded Poisson bracket
and the classical master equation.
However, if $\partial X \neq \emptyset$, 
the twist changes the boundary conditions.
Applying Proposition \ref{boundarytheta}
to equation (\ref{akszwithboundary}), we obtain 
the following conditions on $\alpha$ for the consistent boundary conditions 
of the AKSZ sigma models.
\begin{proposition}\label{boundaryQstr}
Assume $\partial \calX \neq \emptyset$.
Let $(\calM, \omega, \Theta)$ be a QP-manifold of degree $n$,
$\calL$ be a Lagrangian submanifold of $\calM$,
which is the zero locus of $\vartheta$, and
$\alpha \in C^{\infty}(\calM)$ be a function of degree $n$.
If the twist generated by $\alpha$ vanishes on $\calL$, 
$e^{\delta_\alpha} \Theta |_{\calL} =0$,
then the Q-structure function (\ref{akszwithboundary}) satisfies 
the classical master equation $\sbv{S}{S}=0$.
\cite{Hofman:2002rv}
\end{proposition}
A function $\alpha$ with the property defined in Proposition 
\ref{boundaryQstr} is called 
%A canonical function is also called 
a Poisson function
%when $n=2$
\cite{Terashima, Kosmann-Schwarzbach:2007}
or a canonical function \cite{Ikeda:2013wh}.
The structures  for general $n$ have been analyzed in Ref.~\citenum{Ikeda:2013wh}.
%%%%%%%%%%%%%%%%%%%%%%%%%%%%%%%%%%%%%%%%%%%%%%%%%%%%%%%%%%%%%%%%%%%%%%
\subsection{From Twist to Boundary Terms}
\noindent
In this subsection, we show that a canonical function $\alpha$, 
%as
defined 
in the previous section, generates a boundary term.
Let $I = \mu_* \ev^* \alpha$ be a functional constructed by a transgression
%from the pullback 
of $\alpha$.
%to the mapping space.
In equation (\ref{akszwithboundary}),
the change in the Q-structure 
%that is caused 
by the twist
%canonical transformation 
is converted into the change in the P-structure 
%that is caused 
by the following inverse canonical transformation
on the mapping space,
\begin{eqnarray}
S^{\prime} &=& e^{- \delta_{I}} S
\nonumber \\
&=& 
e^{- \delta_{I}} S_0
+ \mu_* \ev^* 
e^{- \delta_{\alpha}} e^{\delta_\alpha} \Theta
\nonumber \\
&=& 
e^{- \delta_{I}} S_0
+ \mu_* \ev^* \Theta.
\label{ChangeS}
\end{eqnarray}
This QP-structure $(\bomega^{\prime}= - d(e^{- \delta_{I}} S_0), 
S^{\prime})$ is equivalent to
the original QP-structure $(\bomega, S)$.
%This expression is used in topological open membrane theory
\cite{Hofman:2002rv}

For a physical interpretation of $\alpha$, we consider 
the simple special case in which 
$\alpha$ satisfies $\sbv{\alpha}{\alpha}=0$, and thus $\sbv{I}{I}=0$. 
Then, since $e^{- \delta_{I}} S_0 = 
S_0 - \sbv{S_0}{I}$, the BV action becomes
\begin{eqnarray}
S^{\prime} 
&=& 
S_0 
% \int_{\calX} \mu \, \bbd \, {\rm ev}^* \vartheta
- 
\sbv{S_0}{I}
+ \mu_* \, \ev^* \Theta.
\label{Swithboundaryalpha}
\end{eqnarray}
The second term, $- \sbv{S_0}{I}$, is nothing but a boundary term:
%, and
%If we take the Darboux local coordinates expression,
%$\vartheta
%= \sum_{0 \leq i \leq \floor{n/2}} 
%(-1)^{n+1-i} p_{a(i)} \delta q^{a(i)}$, 
%we can compute it:
\begin{eqnarray}
- \sbv{S_0}{I}
&=& 
- \left\{S_0, \int_{\calX} \mu \ \ev^* \alpha \right\}
\nonumber \\ 
&=& \int_{\calX} d^{n+1}\sigma d^{n+1}\theta \, \bbd \ev^* \alpha
= \int_{\partial \calX} 
d^{n+1}\sigma d^{n+1}\theta \, \ev^* \alpha.
\nonumber
\end{eqnarray}
Therefore, a canonical transformation by a twist induces
a boundary term generated by the $\alpha$ in the BV action $S$.
The boundary term generally carries a nonzero charge.
In physics, this charge can be identified  with the 
number of $n$-branes, and 
the above action \eqref{Swithboundaryalpha} defines
a so-called \textsl{topological open $n$-brane theory}.
This structure has been applied to the analysis of 
T-duality geometry. \cite{Bessho:2015tkk}
If $\sbv{\alpha}{\alpha}\neq 0$, we cannot make a simple interpretation 
as local boundary terms, but it still gives a consistent deformation of an AKSZ sigma model.
As a special case of this construction,
the Nambu-Poisson structures are realized by the AKSZ 
sigma models on a manifold with boundary.
\cite{Bouwknegt:2011vn}

In this section, 
we have discussed Dirichlet-like fixed boundary conditions.
We can also impose Neumann-like free boundary conditions.
The AKSZ sigma models with free boundary conditions
 are called the AKSZ-BFV theories on a manifold with boundary, 
and they have been analyzed in Ref.~\citenum{Cattaneo:2012qu, Cattaneo:2012zs}.

%%%%%%%%%%%%%%%%%%%%%%%%%%%%%%%%%%%%%%%%%%%%%%%%%%%%%%%%%%%%%%%%%%%%%
%%%%%%%%%%%%%%%%%%%%%%%%%%%%%%%  SEC. 9 %%%%%%%%%%%%%%%%%%%%%%%%%%%%%%
%%%%%%%%%%%%%%%%%%%%%%%%%%%%%%%%%%%%%%%%%%%%%%%%%%%%%%%%%%%%%%%%%%%%
\section{Topological Strings from AKSZ Sigma Models}
\noindent
In this section, we discuss derivations of 
the A- and B-models \cite{Witten:1991zz} 
from the AKSZ sigma models in two dimensions, 
which is equivalent to the Poisson sigma model.
The A- and B-models are derived by gauge fixing of this
AKSZ sigma model. \cite{Alexandrov:1995kv}

%%%%%%%%%%%%%%%%%%%%%%%%%%%%%%%%%%%%%%%%%%%%%%%%%%%%%%%%%%%%%%%%%%%%
\subsection{A-Model}
\noindent
Let the worldsheet $X = \Sigma$ be a compact Riemann surface and 
the target space $M$ be a K\"ahler manifold.
Let us consider the AKSZ formalism of the Poisson sigma model in 
Example \ref{AKSZPSM}.
Here, we take the theory where
%use the 
$S_0=0$ 
%version of 
in the Q-structure 
%function 
BV action 
\eqref{2DPSMBVaction}, i.e.,
\begin{eqnarray}
S= S_1&=& 
%\rc{1\over2}
\int_{T[1]\Sigma} 
d^{2}\sigma d^{2}\theta \
%\frac{1}{2}
f^{ij}(\bbx)
\bbxi_{i} \bbxi_{j}.
\label{AmodelPSM}
\end{eqnarray}
Here, we take the normalization of $S_1$ in Ref.~\citenum{Alexandrov:1995kv}.
The classical master equation, $\sbv{S_1}{S_1}=0$,
%is the consistency condition.
%This identity 
is satisfied if $f^{ij}(x)$ satisfies
equation (\ref{PoissonJacobi}) as in the case of 
the Poisson sigma model, i.e., if $M$ is a Poisson manifold.
This condition is satisfied on a K\"ahler manifold $M$, 
%since 
by taking $f^{ij}$ as the inverse of the K\"ahler form.
As in Example \ref{AKSZPSM}, the superfields $(\bbx^i, \bbxi_i)$ 
of degree $(0, 1)$ can be identified 
with $(\bphi^i, \ba_i)$ in Section 
\ref{superfieldformalismPSM}.
%Example \ref{AKSZPSM}.
The superfields are expanded in the supercoordinate $\theta^{\mu}$,
\begin{eqnarray}
\bbx^i &=& 
\bphi^i = \phi^i + A^{+i} + c^{+i}
(= x^{(0)i} +  x^{(1)i} +  x^{(2)i}),
\nonumber \\
\bbxi_i &=& \ba_i 
=  - c_i + A_i + \phi^+_i
(= \xi^{(0)}_i +  \xi^{(1)}_i +  \xi^{(2)}_i).
\nonumber
\end{eqnarray}
%Here as in Section \ref{sectoin2PSM}.

We take the complex coordinates $(z, \bar{z})$ on the worldsheet $\Sigma$ and on the target space $M$
with holomorphic and antiholomorphic indices $i =(a, \dot{a})$.
Let $J^i{}_j$ be a complex structure and
$g_{ij}$ be a K\"ahler metric.
Then, the inverse of the K\"ahler form $f^{ij}$ is
expressed as $f^{ij} = - J^i{}_k g^{kj}$.
%It is canonically taken as
%$f^{a\dot{a}} = i g^{a\dot{a}}$,
%$f^{\dot{a}a} = - i g^{a\dot{a}}$
%and others zero.
%$f^{ij} = \left(
%\begin{matrix}
%0 & i g^{a\dot{a}} & \cr
%- i g^{a\dot{a}} & 0 & \cr
%\end{matrix}
%\!\!\!\!\!\!\right) %= \epsilon^{ik} \delta^j_k
%$.
We decompose the holomorphic and antiholomorphic parts 
of the fields with respect to the worldsheet complex structure.
$A_{z}^{+i} = - {A_{0}^{+i} + i A_{1}^{+i}}$
and $A_{\bar{z}}^{+i} = {A_{0}^{+i} + i A_{1}^{+i}}$,
$A_{z i} = - A_{1i} - i A_{0i}$
and $A_{\bar{z} i} = A_{1i} - i A_{0i}$,
%$A_{z i} = A_{1i} - i A_{0i}$,
%$A_{\bar{z} i} = {A_{1i} + i A_{0i}}$ 
%$A_{z i} = {- A_{0i} - i A_{1i}}$,
%$A_{\bar{z} i} = {- A_{0i} + i A_{1i}}$ 
$\phi_{z \bar{z}}{}^+_{i}  = 2 i \phi^*_i$
and 
$c_{z \bar{z}}{}^{+i}  = 2 i c^{*i}$.
%
%It is convenient to define $\hat{A}^{z}{}_i = \epsilon^{\mu\nu} A_{z i}$.
The BV antibrackets are
\begin{eqnarray}
&& 
\sbv{A_{zi}}{A_{\bar{z}^{\prime}}^{+j}} 
= 2 \delta^j_i \delta(z - z^{\prime})\delta(\bar{z} - \bar{z}^{\prime}),
\quad
\sbv{A_{\bar{z}i}}{A_{z^{\prime}}^{+j}} 
= 2 \delta^j_i \delta(z - z^{\prime})\delta(\bar{z} - \bar{z}^{\prime}),
\nonumber \\
%&& 
%\sbv{A_{zi}}{A_{z^{\prime}}^{+j}} =0,
%\qquad
%\sbv{A_{\bar{z}i}}{A_{\bar{z}^{\prime}}^{+j}} =0.
%\nonumber \\
&& 
\sbv{\phi^i}{\phi_{z \bar{z}}{}^+_{j}}
= 2i \delta^i{}_j \delta(z - z^{\prime})\delta(\bar{z} - \bar{z}^{\prime}),
\quad
\sbv{c_i}{c_{z \bar{z}}{}^{+j}}
= 2i \delta_i{}^j \delta(z - z^{\prime})\delta(\bar{z} - \bar{z}^{\prime}),
\nonumber 
\end{eqnarray}
and all other antibrackets are zero.
Taking linear combinations of the fields,
we obtain the complex fields with respect to the target complex structure.
For example, for $A_{z i}$, $A_{\bar{z} i}$, $A_{z}^{+i}$ and 
$A_{\bar{z}}^{+i}$,
we take linear combinations such that 
$\overline{A_{z}^{+a}} = A_{\bar{z}}^{+\dot{a}}$,
%$\overline{A_{\bar{z}}^{+a}} = A_{\bar{z}}^{+\dot{a}}$,
$\overline{A_{z a}} = A_{\bar{z}\dot{a}}$.
Their BV brackets are
\begin{eqnarray}
&& 
\sbv{A_{za}}{A_{\bar{z}^{\prime}}^{+b}} 
= \delta^b_a \delta(z - z^{\prime})\delta(\bar{z} - \bar{z}^{\prime}),
\quad
\sbv{A_{\bar{z}a}}{A_{z^{\prime}}^{+b}} 
= \delta^b_a \delta(z - z^{\prime})\delta(\bar{z} - \bar{z}^{\prime}),
\nonumber
\end{eqnarray}
and their complex conjugates.

If the gauge symmetry of the theory is partially fixed
by the BV gauge fixing procedure,
the action reduces to the A-model action given in Ref.~\citenum{Witten:1991zz}.
We fix $c^{+i}$, $A_{\bar{z} a}$, $A_{z \dot{a}}$ and $\phi^+_{i}$
by taking the following gauge fixing fermion
\begin{eqnarray}
&& \Psi = \int_{T[1]\Sigma} d^2 z 
g_{a\dot{a}}(\phi)
(A_{z}^{+\dot{a}} \partial_{\bar{z}} \phi^a
- A_{\bar{z}}^{+a} \partial_z \phi^{\dot{a}}).
\nonumber
\end{eqnarray}
%$g_{a\dot{a}}(\phi)$ is a K\"ahler metric on a target space $M$.
We obtain the gauge fixing conditions,
\begin{eqnarray}
&& c^{+i} = 0,
\nonumber \\
&& 
%A_{za} = 0, 
%\qquad\qquad\quad 
%\hat{A}^z{}_{\dot{a}} 
A^z{}_{\dot{a}} = i g_{a\dot{a}}(\phi) \partial_{\bar{z}} \phi^a,
\nonumber \\
&& 
A^{\bar{z}}{}_{a} = - i g_{a\dot{a}}(\phi)
\partial_{z} \phi^{\dot{a}},
%\quad 
%A_{\bar{z}\dot{a}} = 0,
\nonumber \\
&& \phi_{z \bar{z}}{}^+_{a} 
= - i \partial_{\bar{z}} 
(g_{a\dot{a}}(\phi) 
A_z^{+\dot{a}} ),
%+ \Gamma^{\dot{a}}_{kl} \partial_{\bar{z}} \phi^k A_z^{+l} ),
\nonumber \\
&& \phi_{z \bar{z}}{}^+_{\dot{a}} 
= i \partial_{z} 
(g_{a\dot{a}}(\phi) 
A_{\bar{z}}^{+{a}}).
%+ \Gamma^{{a}}_{kl} \partial_{{z}} \phi^k A_{\bar{z}}^{+l} ).
\label{gaugefixingA}
\end{eqnarray}
Substituting equations (\ref{gaugefixingA}) into equation (\ref{AmodelPSM})
and integrating out $A_{za}$ and $A_{\bar{z}\dot{a}}$,
we obtain the original A-model action,
\begin{eqnarray*}
S_1&=& \int_{\Sigma} 
d^{2} z 
\left(
g_{a\dot{a}}
\partial_{\bar{z}} \phi^a
\partial_{z} \phi^{\dot{a}}
- i \psi^{a}_{\bar{z}} D_{{z}} \chi_{a}
%\chi^{\dot{a}} g_{\dot{a}a}
- i \psi^{\dot{a}}_z D_{\bar{z}} \chi_{\dot{a}}
%\chi^a g_{\dot{a}a}
+ R_{a\dot{a}}{}^{b\dot{b}} 
\psi^{a}_{\bar{z}} \psi^{\dot{a}}_z \chi_b \chi_{\dot{b}}
\right),
%\label{4DBVaction}
\end{eqnarray*}
where
\begin{eqnarray*}
\chi_i &=& \frac{1}{2i} c_i, \quad
%\chi^i &=& g^{ij} c_j, \quad
%\nonumber \\
\psi_{\mu}^a = A_{\mu}^{+a},
\quad
\psi_{\mu}^{\dot{a}} = A_{\mu}^{+\dot{a}}, 
\nonumber 
\end{eqnarray*}
and
\begin{eqnarray*}
D_{{z}} \chi_i &=& \partial_{{z}} \chi_i
- \Gamma^k_{ij} \partial_{{z}} \phi^j \chi_k,
%D_{{z}} \chi^i &=& \partial_{{z}} \chi^i
%+ \Gamma^i_{jk} \partial_{{z}} \phi^j \chi^k,
\nonumber \\
D_{\bar{z}} \chi_i &=& \partial_{\bar{z}} \chi_i
- \Gamma^k_{ij} \partial_{\bar{z}} \phi^j \chi_k,
%D_{\bar{z}} \chi^i &=& \partial_{\bar{z}} \chi^i
%+ \Gamma^i_{jk} \partial_{\bar{z}} \phi^j \chi^k.
%\label{4DBVaction}
\end{eqnarray*}
and $\Gamma^k_{ij}$ is the Christoffel symbol on the target space.

%%%%%%%%%%%%%%%%%%%%%%%%%%%%%%%%%%%%%%%%%%%%%%%%%%%%%%%%%%%%%%%%%%%%
\subsection{B-Model}
\noindent
We start from Example \ref{BModel}, the QP-manifold realization of a complex structure on a smooth manifold $M$,
%In Example \ref{BModel} of the QP-manifold realization of a complex structure,
and take local coordinates on the target space such that 
$J^i{}_{j} = \left(
\begin{matrix}
0 & 1 & \cr
- 1 & 0 & \cr
\end{matrix}
\!\right) 
= \epsilon^{ik} \delta_{kj}
$. 
Then, the BV action \eqref{bmodelAKSZ} 
%of the AKSZ sigma model with respect to a complex structure in Example \ref{BModel}
is simplified to
\begin{eqnarray}
S_B &=& \int_{\calX}
d^2 z d^2 \theta 
\left(
\bbxi_{i} \bbd \bbx^i - \bbp_i \bbd \bbq^i 
+ \epsilon^i{}_j \bbxi_{i} \bbq^j
\right).
\label{BmodelPSM}
\end{eqnarray}
The superfields can be expanded in $\theta^{\mu}$ as
\begin{eqnarray}
\bbx^i &=& x^{(0)i} + x^{(1)i} + x^{(2)i},
\nonumber \\
\bbxi_i &=&  \xi^{(0)}_i + \xi^{(1)}_i + \xi^{(2)}_i,
\nonumber \\
\bbq^i &=& q^{(0)i} + q^{(1)i} + q^{(2)i},
\nonumber \\
\bbp_i &=& p^{(0)}_i + p^{(1)}_i + p^{(2)}_i.
\nonumber
\end{eqnarray}
We consider partial gauge fixing, as in the A-model.
%We fix $x^{(2)i}$, $\xi^{(1)}_i$ and $\xi^{(2)}_i$.
%$x^{(2)i}$, $\xi^{(1)}_i$ and $\xi^{(2)}_i$.
Different gauge fixing conditions 
for the holomorphic and antiholomorphic parts are imposed
as follows,
\begin{eqnarray}
&& x^{(1)\dot{a}} = 0,
\nonumber \\
&& x_{z \bar{z}}^{(2)a} + \Gamma^a_{bc} x_z^{(1)b} x_{\bar{z}}^{(1)c} = 0,
\nonumber \\
&& \xi^{(0)}_{\dot{a}} = 0, 
\nonumber \\
&& \xi^{(1)}_{z a} + \Gamma^b_{ac} \xi_{b}^{(0)} x_{z}^{(1)c}
= g_{a\dot{a}}(\phi) \partial_z x^{(0)\dot{a}},
\nonumber \\
&& \xi^{(1)}_{\bar{z}a} 
- \Gamma^b_{ac} \xi_{b}^{(0)} x_{\bar{z}}^{(1)c}
= g_{a\dot{a}}(\phi) \partial_{\bar{z}} x^{(0)\dot{a}},
\nonumber \\
&& \xi^{(2)}_{z\bar{z}a} = 0,
\quad
%\nonumber \\ && 
\xi^{(2)}_{z\bar{z}\dot{a}} 
- R^c_{a\dot{a}b} x_{\bar{z}}^{(1)a} x_{{z}}^{(1)b}
\xi_{c}^{(0)}
%\xi_{\dot{c}}^{(0)}
= - (D_z x_{\bar{z}}^{(1)a} + D_{\bar{z}} x_z^{(1)a})
g_{a\dot{a}},
\nonumber \\
&& q^{(0)\dot{a}} = q^{(1)\dot{a}} = q^{(2)\dot{a}} = 0,
\nonumber \\
&& p^{(0)}_a = p^{(1)}_a = p^{(2)}_a = 0.
\label{gaugefixingB}
\end{eqnarray}
Substituting equations (\ref{gaugefixingB}) into equation (\ref{BmodelPSM}),
we obtain the original B-model action,
\begin{eqnarray*}
S&=& \int_{\Sigma} 
d^{2} z 
\left(
%\frac{1}{2}
g_{ij}
\partial_{z} \phi^i
\partial_{\bar{z}} \phi^j
+ i \eta^{\dot{a}}_z (D_{z} \rho_{\bar{z}}^a
+ D_{\bar{z}} \rho_z^a) g_{a\dot{a}}
+ i \theta_{a} (
D_{\bar{z}} \rho_z^a
- D_{z} \rho_{\bar{z}}^a ) 
\right.
\nonumber \\ &&
\left.
- R_{a\dot{a}b\dot{b}} 
\rho^{a}_{{z}} \rho^{{b}}_{\bar{z}} \eta^{\dot{a}} \theta_c
g^{c\dot{b}}\right),
%\label{4DBVaction}
\end{eqnarray*}
where 
$\phi^i = x^{(0)i}$,
$\rho^a = x^{(1)a}$,
$\theta_a = \xi^{(0)}_{a}$
and
$\eta^{\dot{a}} = g^{a\dot{a}} p^{(0)}_{a}$.
%\rc{$\eta^{\dot{a}} = p^{(0)}_{\dot{a}}$}.%please check

%%%%%%%%%%%%%%%%%%%%%%%%%%%%%%%%%%%%%%%%%%%%%%%%%%%%%%%%%%%%%%%%%%%%%
%%%%%%%%%%%%%%%%%%%%%%%%%%%%%%%  SEC. 10 %%%%%%%%%%%%%%%%%%%%%%%%%%%%%%
%%%%%%%%%%%%%%%%%%%%%%%%%%%%%%%%%%%%%%%%%%%%%%%%%%%%%%%%%%%%%%%%%%%%
\section{Quantization}
\noindent
We discuss the quantization of the AKSZ sigma models
%The AKSZ sigma models are quantized 
in two dimensions as an important 
example. The quantization is carried out
by the usual procedure of the BV formalism.
Quantization in general dimensions is not well understood, yet.
%In two dimensions, the quantization is known.
%We discuss this as an example, in this section.

%%%%%%%%%%%%%%%%%%%%%%%%%%%%%%%%%%%%%%%%%%%%%%%%%%%%%%%%%%%%%%%%%%%%%
\subsection{Poisson Sigma Model on a Disc}
\noindent
The path integral quantization of the Poisson sigma model on a disc 
%in two dimensions
yields the Kontsevich deformation quantization formula
on a Poisson manifold. \cite{Cattaneo:1999fm}
We briefly explain this model as an example of the quantization of 
an AKSZ sigma model.
For details, we refer to Ref.~\citenum{Cattaneo:1999fm}.

%%%%%%%%%%%%%%%%%%%%%%%%%%%%%%%%%%%%%%%%%%%%%%%%%%%%%%%%%%%%%%%%%%%%%
\subsubsection{Deformation Quantization}
\noindent
Recall that a Poisson manifold is a manifold $M$ with a Poisson bracket
$\{-, -\}_{PB}$. 
\begin{definition}\label{defquant}[deformation quantization]
Let $M$ be a Poisson manifold
and $C^{\infty}(M)[[\hbar]]$ be 
a set of formal power series on $C^{\infty}(M)$,
where $\hbar$ is a formal parameter.
A deformation quantization is a product (star product) $*$ on 
$C^{\infty}(M)[[\hbar]]$ satisfying the following conditions:
\begin{enumerate}
\item[$(1)$]
For $F, G \in C^{\infty}(M)[[\hbar]]$,
$F * G = \sum_k \left(\frac{i \hbar}{2}\right)^k {\cal B}_k(F, G)$
is bilinear, where
${\cal B}_k$ is a bidifferential operator such that
${\cal B}_0$ is a product, ${\cal B}_0(F, G)= FG$, and 
${\cal B}_1$ is a Poisson bracket, ${\cal B}_1(F, G)= \{F, G\}_{PB}$.

\item[$(2)$]
For $F,G,H \in C^{\infty}(M)[[\hbar]]$, 
$*$ is associative, i.e., 
$$
(F*G)*H = F*(G*H).
$$

\item[$(3)$]
Two star products $*$ and $*^{\prime}$ corresponding to the same Poisson 
bracket are equivalent if they coincide
%are coincident
%are made coincident 
by the following linear transformation:
$F^{\prime} = R F = \sum_k \left(\frac{i \hbar}{2}\right)^k {\cal D}_k(F)$,
where ${\cal D}_k$ is a differential operator.
i.e.
\begin{eqnarray}
F *^{\prime} G(x) &=& R^{-1} (RF * RG).
\nonumber
\end{eqnarray}
\end{enumerate}

\end{definition}
%In this section, w
We review the following theorem proved in Ref.
\citenum{Cattaneo:1999fm}.
\begin{theorem}
The correlation functions of the Poisson sigma model 
of observables on the boundary of a disc
coincide with the star product formula
on a Poisson manifold,
% this is 
called the Kontsevich formula.
i.e.
\begin{eqnarray}
F * G(x) &=&
\left\langle F(\bphi(1)) G(\bphi(0)) \right\rangle 
= \int_{\bphi(\infty) = x} \calD \Phi\
F(\bphi(1)) G(\bphi(0)) e^{\frac{i}{\hbar}S_q}.
\nonumber
\end{eqnarray}
\end{theorem}

%%%%%%%%%%%%%%%%%%%%%%%%%%%%%%%%%%%%%%%%%%%%%%%%%%%%%%%%%%%%%%%%%%%%%
\subsubsection{Path Integrals}
\noindent
Let us consider the disc $D = \{z \in \bC | |z| \leq 1 \}$.
Since the Poisson sigma model is invariant under conformal 
transformations,
we map the disc to the upper half-plane 
$\Sigma = 
\{z = \sigma^0 + i \sigma^1| \sigma^1 \geq 0\}$ 
by a conformal transformation.
Then, we consider 
\begin{eqnarray}
S = S_0 + S_1 = \int_{T[1] \Sigma} 
d^2 \sigma d^2 \theta
\left(
\ba_i \bbd \bphi^i
+ 
\frac{1}{2} f^{ij}(\bphi) \ba_i \ba_j
\right),
\label{qPSM}
\end{eqnarray}
where $\bphi^i = \bbx^i
%{a(0)}
$ and
$\ba_i = \bbxi_i
%{a(0)}
$.

The partition function 
$Z = \int_L \calD \Phi \ e^{\frac{i}{\hbar}S_{q}}$
and correlation functions
are calculated by a formal perturbative expansion in $\hbar$
in the path integral,
$$
Z( \calO_1 \cdots \calO_r )
= \int_L \calD \Phi \ \calO_1 \cdots \calO_r e^{\frac{i}{\hbar}S_{q}}
= \sum_{k=0}^{\infty} \hbar^k
Z_k( \calO_1 \cdots \calO_r ).
$$
Here, $S_{q}$ is the gauge fixed quantum action and the
$\calO_s$ are observables.
%where $\calD \Phi$ is a measure on all the fundamental fields
%and ghosts.

Since a complete superfield formalism is not known for gauge fixed
actions of AKSZ theories, we expand it in the component fields.
The superfields are expanded in $\theta^{\mu}$ as follows,
\begin{eqnarray}
\bphi^i &=& \phi^i + A^{+i} + c^{+i},
\nonumber \\
\ba_i &=& - c_i + A_i + \phi^+_i.
\end{eqnarray}
%

%%%%%%%%%%%%%%%%%%%%%%%%%%%%%%%%%%%%%%%%%%%%%%%%%%%%%%%%%%%%%%%%%%%%%%
\subsubsection{BV Quantization}\label{BVQuantizations}
%\subsubsection{observables (vertex operators)}
\noindent
In general, the gauge symmetry algebra of an AKSZ sigma model
is an open algebra. Thus, we apply the BV quantization procedure 
\cite{Henneaux:1989jq, Gomis:1994he}.
We consider the gauge fixing of the action $S$.

First, we introduce an
FP antighost $\bar{c}^i$ of ghost number 
$\gh \ \bar{c}^i = -1$, 
a Nakanishi-Lautrup multiplier field $b^i$
of $\gh \ b^i = 0$
and their antifields $\bar{c}_i^{+} 
= \frac{1}{2} \theta^{\mu} \theta^{\nu} \bar{c}_{\mu \nu i}^{+}$ 
of $\gh \ \bar{c}_i^{+} = 0$ and
$b_i^{+} = \frac{1}{2} \theta^{\mu} \theta^{\nu} b_{\mu \nu i}^{+}
$ of $\gh \ b_i^{+} = -1$.
Then, the P-structure (antibracket) is extended as
\begin{eqnarray}
\sbv{\bar{c}^i}{\bar{c}_j^{+}}=\sbv{b^i}{b_j^{+}} = \delta^i{}_j,
\end{eqnarray}
and the other antibrackets are zero.

The following gauge fixing term is added to the classical 
BV action $S$,
%in order to fix the gauge,
\begin{eqnarray}
S_{GF} = - \int_{T[1]\Sigma} d^2 \sigma d^2 \theta \ b^i \bar{c}_i^{+},
\label{gfterm}
\end{eqnarray}
and we denote $S_q = S + S_{GF}$.

Next, the gauge fixing fermion 
$\Psi(\Phi)$ 
of ghost number one is determined
such that it restricts the path integral to the subspace of the gauge fixed fields and ghosts.
We take the gauge fixing fermion as
\begin{eqnarray}
\Psi = \int_{T[1]\Sigma} d^2 \sigma d^2 \theta \ \bar{c}^i \bbd * A_i,
\nonumber
\end{eqnarray}
where $*$ is the Hodge star on $\Sigma$.
The BV gauge fixing is carried out by 
imposing the following equation,
%solving  
\begin{eqnarray}
\Phi^+ = \frac{\delta \Psi}{\delta \Phi}.
\nonumber
\end{eqnarray}
%for $\Phi^+$  $S_q$.
%$S_q$ is defined as 
All the antifields are fixed by this gauge fixing condition.
%This is a function of only fields and ghosts and not antifields.
%
In components, we obtain
% This leads to the gauge fixing conditions for each compnent field,
\begin{eqnarray}
&& \bar{c}^+_i = \bbd * A_i,
\quad 
{A}^{+i} = * \bbd \bar{c}^i,
\nonumber \\
&& \phi^+_i = 0,
\quad 
{c}_i^{+} = b_i^{+} =0.
\label{PSMgaugefixing}
\end{eqnarray}
Substituting equations (\ref{PSMgaugefixing}) into the BV action 
$S_q$,
we obtain the gauge fixed quantum BV action,
%by restricting the action to
%a Lagrangian subspaces of the space of all fields,
$S_{q|fix}(\Phi) = 
S_q \left(\Phi, \Phi^+ = \frac{\delta \Psi}{\delta \Phi}\right)$:
\begin{eqnarray}
S_{q|fix} &=& \int_{T[1] \Sigma} 
d^2 \sigma d^2 \theta
\biggl(
A_i \bbd \phi^i
- * \bbd \bar{c}^i \bbd c_i
- b^i \bbd * A_i
+ \frac{1}{2} f^{ij}(\phi) A_i A_j
\nonumber \\
&& 
- \frac{\partial f^{ij}}{\partial \phi^k}(\phi) 
* \bbd \bar{c}^k A_i c_j
+ 
\frac{1}{4} 
\frac{\partial^2 f^{ij}}{\partial \phi^k \partial \phi^l}(\phi) 
* \bbd \bar{c}^k * \bbd \bar{c}^l c_i c_j
\biggr).
\label{gaugefixedaction}
%\nonumber
\end{eqnarray}

The partition function $Z$ must be independent of 
%not depend on 
the gauge fixing conditions.
This means that the partition function is invariant under
arbitrary infinitesimal changes of the gauge fixing fermion $\Psi$,
$$
Z(\Psi) = Z(\Psi + \delta \Psi).
$$
This requirement gives the following consistency condition for the quantum BV
action $S_q= S+ S_{GF}$,
\begin{eqnarray}
\Delta e^{\frac{i}{\hbar}S_q(\Phi, \Phi^+)} =0,
\label{laplacianofs}
\end{eqnarray}
where $\Delta$ 
is the odd Laplace operator (\ref{oddlaplace2}) introduced in Section \ref{AKSZTFTgeneral}.
This equation is equivalent to
the \textsl{quantum master equation},
\begin{eqnarray}
2 i \hbar \Delta S_q - \sbv{S_q}{S_q} = 0.
\label{quantummaster}
\end{eqnarray}
We can prove that the AKSZ sigma models formally satisfy this equation.
More precisely, the AKSZ sigma models satisfy 
$\Delta S_q = 0$ and $\sbv{S_q}{S_q} =0$.
Since these equations contain divergences in general,
we need to renormalize in order to prove these equations beyond 
the formal expressions.
As we discuss later, we can properly regularize the equation in the Poisson sigma model.
%\footnote{
%A naive equation may be broken, as discussed in 
%the notes at the end of Section 5, but 
%we can properly regularize the equation in this model.}
%$S_q (\Phi, \Phi^+)$ for equation (\ref{qPSM}) and 
%(\ref{gfterm})
%satisfies the quantum master equation (\ref{quantummaster}),
%$\Delta S_q = 0$ and $\sbv{S_q}{S_q} =0$.
%formally. 'formally' means that we need a renormalization
%to subtract divergences.

The correlation function of an observable $\calO$,
\begin{eqnarray}
\cfun{\calO}
= \int_{
\Phi^+ = \frac{\delta \Psi}{\delta \Phi}, 
%\phi(\infty) = x
} \calD \Phi\
\calO e^{\frac{i}{\hbar}S_{q}},
\nonumber
\end{eqnarray}
must also be invariant 
under infinitesimal changes of the gauge fixing fermion $\Psi$.
%where $\phi(\infty) = x$ is the boundary condition
This condition is equivalent to
\begin{eqnarray}
\Delta \left( \calO e^{\frac{i}{\hbar}S_{q}} \right)=0
\label{observabledelta}
\end{eqnarray}
and can be rewritten as
\begin{eqnarray}
i \hbar \Delta \calO - \sbv{S_q}{\calO} = 0.
\label{observabledef}
\end{eqnarray}

%%%%%%%%%%%%%%%%%%%%%%%%%%%%%%%%%%%%%%%%%%%%%%%%%%%%%%%%%%%%%%%%%%%%%%
\subsubsection{Boundary Conditions}
\noindent
Here, we determine the boundary conditions of the classical theory, using the same procedure as explained in Section \ref{AKSZboundary}.

The boundary conditions on each field are determined by two consistency conditions.
The variation of the action is
\begin{align}
\delta S &= \int_{T[1] \Sigma} 
d^2 \sigma d^2 \theta
\left(
\delta \ba_i \bbd \bphi^i
+ \ba_i \bbd \delta \bphi^i
+ 
\delta \bphi^i
\frac{1}{2} \frac{\partial f^{jk}}{\partial \bphi^i}(\bphi) \ba_j \ba_k
%\right.
%\nonumber \\ && 
%\left.
+ f^{ij}(\bphi) \delta \ba_i  \ba_j
\right).
\label{variationofaction}
\end{align}
In order to obtain the equations of motion,
we need to integrate the second term $\ba_i \bbd \delta \bphi^i$ by parts.
Its boundary integral must vanish. 
Thus, we obtain
\begin{eqnarray}
\int_{T[1] \Sigma} 
\!\!\!\!\!
d^2 \sigma d^2 \theta \
\bbd (\ba_i \delta \bphi^i) 
&=&
\int_{\partial T[1] \Sigma} 
\!\!\!\!\!
d \sigma^0 d \theta^0
\ba_i \delta \bphi^i
\nonumber \\
&=&
\int_{\partial T[1] \Sigma} 
\!\!\!\!\!
d \sigma^0 d \theta^0
\left(A_i \delta \phi^i - c_i \delta A^{+i}\right)
=0.
\label{boudnarycondition}
\end{eqnarray}
%is required.
The possible boundary conditions
that satisfy equation (\ref{boudnarycondition}) are
$A_{//i}|=0$ or $\delta \phi^i|=0$,
and $c_i|=0$ or $\delta A^{+i}_{//}|=0$.\footnote{More general boundary conditions have been analyzed in Ref.~\citenum{Cattaneo:2003dp}.}\, 
Here, the notation $A_{//i}= A_{0i}$ means the 
component parallel to the boundary and
$\Phi|$ denotes the value of $\Phi$ on the boundary.
In order to obtain a nontrivial solution for the embedding map 
from $\Sigma$ to $M$, $\phi^i$, 
%which is a map of the sigma model,
we take $A_{//i}|=0$ and $c_i|=0$.

The classical equations of motion are
\begin{eqnarray}
&& \bbd \bphi^i + f^{ij}(\bphi) \ba_j =0,
\nonumber \\
&& \bbd \ba_i 
+ \frac{1}{2} \frac{\partial f^{jk}}{\partial \bphi^i}(\bphi) \ba_j \ba_k 
= 0.
%\right\rangle
\nonumber
\end{eqnarray}
From the equations of motion and the boundary conditions,
$A_{//i}|=0$ and $c_i|=0$,
%for $A_{//i}$ and $c_i$, which are already fixed,
we obtain the boundary conditions 
$\phi^i|=$ constant and $A^{+i}_{//}|=0$.
Therefore, the boundary conditions for all fields are
\begin{eqnarray}
&& \phi^i|=x^i=\mbox{constant},
\quad 
A_{//i}|=0,
\nonumber \\
&& c_i|=0,
\quad 
A^{+i}_{//}|=0.
\label{classicalboundary}
\end{eqnarray}
Here  $x^i$ parametrize the boundary.
%on the boundary.

%In quantum theory, 
Next, we determine the boundary conditions for 
other extra fields.
% added to the conditions in equation (\ref{classicalboundary}).
The consistency conditions for the equations of motion 
of the gauge fixed action (\ref{gaugefixedaction})
fix the boundary conditions for the ghost $b^i=0$.
The boundary conditions for the other ghosts and antifields are determined by
consistency with the gauge fixing conditions of 
equation (\ref{PSMgaugefixing}) as 
\begin{eqnarray}
&& \phi^+_i| = 0,
\quad 
{c}_i^{+}| = b_i^{+}| =0,
\nonumber \\
&& \bar{c}^+_i| = \bbd * A_i|, 
\quad
\bar{c}^i| =\mbox{constant}.
\nonumber
\end{eqnarray}
%We obtain the boundary condition for the extra field
%consistent with Equation (\ref{PSMgaugefixing}):
%\begin{eqnarray}
%\bar{c}^i =\mbox{constant}.
%\end{eqnarray}
These boundary conditions are consistent 
with the master equation.

%%%%%%%%%%%%%%%%%%%%%%%%%%%%%%%%%%%%%%%%%%%%%%%%%%%%%%%%%%%%%%%%%%%%%%%
\subsubsection{Propagators}
\noindent
The propagators are defined by the first three terms
%line 
of the gauge fixed action (\ref{gaugefixedaction}),
\begin{eqnarray}
S_F 
&=& \int_{T[1] \Sigma} 
d^2 \sigma d^2 \theta
\biggl(
A_i \bbd \phi^i
- b^i \bbd * A_i
- * \bbd \bar{c}^i \bbd c_i
\biggr)
\nonumber \\
&=& \int_{T[1] \Sigma} 
d^2 \sigma d^2 \theta
\biggl(
A_i \bbd \phi^i
+ A_i * \bbd b^i 
- c_i \bbd * \bbd \bar{c}^i
\biggr).
\end{eqnarray}
If we introduce the gauge fixed superfields,
\begin{eqnarray}
\bvarphi^i &=& \varphi^i + * \bbd \bar{c}^i + 0,
\nonumber \\
\ba_i &=& - c_i + A_i + 0,
\label{gaugefixedsuperfield}
\end{eqnarray}
the propagators of each component field are combined 
to a superfield propagator,
where $\bvarphi$ is defined by $\bphi^i = x^i + \bvarphi^i$.
%and $x^i$ is a constant determined from the boundary condition
%on $\phi^i$.

Let $\bbd_z$ and $\bbd_w$ be superderivatives with respect to the variables $z$ and $w$.
%, respectively.
Let $G(z, w)$ be a Green's function such that
$\bbd_w * \bbd_w G(z, w) = 2 \pi \delta(z-w)$,
where $G(z, w)$ is determined by the Dirichlet boundary condition for $z$ and the Neumann boundary condition for $w$, respectively.
The solution is
$G(z, w) = \frac{1}{2i} \ln 
\frac{(z-w)(z-{\bar w})}{({\bar z}-{\bar w})({\bar z}-{w})}$.
Using this Green's function, the superpropagator of
$(\varphi^i, \bbd \bar{c}^i, c_i, A_i)$ is determined as
\begin{eqnarray}
\cfun{\bvarphi^i(w) \ba_j(z)}
= \frac{i\hbar}{2 \pi} \delta^i{}_j (\bbd_z + \bbd_w) G(z, w),
\label{superpropagator}
\end{eqnarray}
which is consistent with the boundary conditions for each field.
In addition to equation (\ref{superpropagator}),
there is the propagator of $A_i$ and $b^i$, which we 
omit, since the star product does not involve the propagator $\cfun{A_i(w) b^j(z)}$.

%%%%%%%%%%%%%%%%%%%%%%%%%%%%%%%%%%%%%%%%%%%%%%%%%%%%%%%%%%%%%%%%%%%%%%%
\subsubsection{Vertices}
\noindent
The last three terms of the gauge fixed action (\ref{gaugefixedaction})
are interaction terms denoted by $S_I$, and they define the vertices. 
$S_I$ is 
%a combination of 
simplified using 
gauge fixed
superfields (\ref{gaugefixedsuperfield})
as follows,
\begin{eqnarray}
S_I &=& \int_{T[1] \Sigma} 
d^2 \sigma d^2 \theta
\biggl(
\frac{1}{2} f^{ij}(\phi) A_i A_j
- 
\frac{\partial f^{ij}}{\partial \phi^k}(\phi) 
* \bbd \bar{c}^k A_i c_j
\nonumber \\ &&
+ 
\frac{1}{4} 
\frac{\partial^2 f^{ij}}{\partial \phi^k \partial \phi^l}(\phi) 
* \bbd \bar{c}^k * \bbd \bar{c}^l c_i c_j
\biggr)
\nonumber \\
&=& \int_{T[1]\Sigma} \!\!\!\! 
d^2 \sigma d^2 \theta \
\frac{1}{2} f^{ij}(\bphi) \ba_i \ba_j.
\end{eqnarray}
In order to identify the vertices, 
$\bphi^i$ and $\ba_i$ are
expanded around the classical 
solutions $\bphi^i = x^i + \bvarphi^i$
and $\ba_i = 0 + \ba_i$. 
Taylor expansion of $f^{ij}(\bphi)$ gives 
\begin{equation}
S_I =
\frac{1}{2} \int_{T[1]\Sigma} \!\!\!\! 
d^2 \sigma d^2 \theta \
\sum_{k=0}^{\infty}
\frac{1}{k!} 
\partial_{l_1} \partial_{l_2} \cdots 
\partial_{l_k} 
f^{ij}(x) 
\bvarphi^{l_1} {\bvarphi}^{l_2} \cdots 
{\bvarphi}^{l_k} 
\ba_i \ba_j,
\label{taylorexpansion}
\end{equation}
which determines the vertices of order $\hbar^{-1}$.
Note that there is an infinite number of vertices. 
From equation (\ref{taylorexpansion}),
the $k$-th vertex has two $\ba$ lines and 
$k$ ${\bvarphi}$ lines
that have the weight $\frac{1}{2} \frac{1}{k!} 
\partial_{l_1} \partial_{l_2} \cdots 
\partial_{l_k} f^{ij}(x) $.

The path integral of an observable $\calO$ can be expanded as
\begin{eqnarray}
\cfun{\calO}
&=&
\int \calO e^{\frac{i}{\hbar}(S_F+S_I)}
= 
\sum_{n=0}^{\infty}
\frac{i^n}{\hbar^n n!}
\int \calO e^{\frac{i}{\hbar}S_F}
S_I^n.
\end{eqnarray}
Since $\calO$ 
%and $S_I$ 
is a function of superfields $\bvarphi$ and $\ba$,
it is computed by Wick's theorem
%the integration of the combinations of the 
using the propagators $\cfun{\bvarphi^i(w) \ba_j(z)}$,
as in usual perturbation theory.

%%%%%%%%%%%%%%%%%%%%%%%%%%%%%%%%%%%%%%%%%%%%%%%%%%%%%%%%%%%%%%%%%%%%%%%
\subsubsection{Renormalization of Tadpoles}\label{tadpolesection}
\noindent
Contributions from tadpoles are renormalized 
to zero in order to derive a star product.
Although this renormalization is different from the
one usually used in quantum field theory,
it can be carried out consistently with the quantum master equation.
We can add a gauge invariant counter term
that subtracts all tadpole contributions,
\begin{eqnarray}
S_{ct} &=& \int_{T[1]\Sigma} \!\!\!\! d^2 \theta d^2 \sigma \
\frac{\partial f^{ij}(\bphi)}{\partial \bphi^i} \ba_j \kappa,
\nonumber
\end{eqnarray}
where $\kappa$ is the subtraction coefficient of the renormalization.

%%%%%%%%%%%%%%%%%%%%%%%%%%%%%%%%%%%%%%%%%%%%%%%%%%%%%%%%%%%%%%%%%%%%%%%
\subsubsection{Correlation Functions of Observables on the Boundary}
\noindent
%$\bphi$ is an observable if it is restricted on the boundary of
%the worldsheet $\Sigma$. 
%In general,
An arbitrary function of $\bphi$, $F(\bphi)$,
restricted to the boundary of $\Sigma$,
is an  observable
%restrictions of functions of $\bphi$, $\calO = F(\bphi(\sigma, \theta))$, 
%to the boundary $\partial \Sigma$ are observables 
since it satisfies equation \eqref{observabledef}.
We now compute the correlation functions of these observables
(often called vertex operators).
%since in this case, it satisfies equation (\ref{observabledef}).
They satisfy the first condition in 
the definition of a deformation quantization, Definition \ref{defquant}.

We consider an observable $\calO = F(\bphi(t))G(\bphi(s))$ which depends on two points,
where $t$ and $s$ are coordinates 
%of $\sigma^0$ 
%\rc{of $\sigma^0$} 
on the boundary $\partial \Sigma$
and $F$ and $G$ are arbitrary functions of $\bphi$.
The conformal transformation of the disc worldsheet fixes the three points
$0,1,\infty$ on the boundary circle $S^1$.
% of the disc $\Sigma$.
The boundary condition of $\bphi$ is fixed at $\sigma^0 = \infty$ as 
$\bphi^i(\infty) = x^i$,
and $\calO$ can be transformed to $\calO = F(\bphi(1))G(\bphi(0))$
by conformal transformation.

We compute the correlation function $\cfun{F(\bphi(1))G(\bphi(0))}$ 
by the Feynman rules.
%Feynman diagrams are used to calculate the combinations.
%Using Wick's theorem for the order $\hbar^n$ Feynman diagram $\Gamma$,
The order $\hbar^n$ amplitudes consist of $n$ vertices and $2n$ propagators.
We choose $n+2$ points on $\Sigma$.
There are two points $z=u_L=0$ and $z=u_R=1$ on the boundary
where two vertex operators $F(\bphi(1))$ and $G(\bphi(0))$ are inserted.
Other $n$ points are located in the interior of $\Sigma$.
These points are denoted by $u_j \in \Sigma$, $(j= 1, 2, \cdots, n, L, R)$,
where $u_j$ for $j= 1, 2, \cdots, n$ are the points of $n$ vertices.
A propagator $d G(z,w)$ connects two points chosen from the above $n+2$ points.
We introduce a map $v_a:\{1, 2, \cdots, n\}
\rightarrow \{1, 2, \cdots, n, L, R\}$, where $a = 1, 2$, 
and $d G(u_j, u_{v_a(j)})$ denotes the propagator from $u_j$ to $u_{v_a(j)}$,
where $j=1, 2, \cdots, n$, since two vertex operators on the boundary are 
functions of $\bphi$.
$v_a(j) \neq j$ for all $j$,
since we renormalize the tadpole graphs to zero as in Section 
\ref{tadpolesection}.
%The order $\hbar^n$ amplitudes consist of $n$ vertices and $2n$ propagators.
Since all the vertices contain precisely two $\ba_i$'s,
the weight of the nonzero Feynman diagram is obtained as
%the following integration of the propagators,
$$
\frac{1}{n!}\left(\frac{(i \hbar)^n}{(2 \pi)^{2n}}\right)
\int \wedge_{j=1}^{n}
\bbd G(u_j, u_{v_1(j)}) \wedge
\bbd G(u_j, u_{v_2(j)}),
$$
where $\bbd = \bbd_z + \bbd_w$.
%The coordinates of the end points of each propagator $(j = 1, \cdots, n)$
This gives coefficients of
%to the bidifferential operator 
the $\hbar^n$ term of the star product
$(-1)^n \calB_{\Gamma n}(F, G)$ 
induced from the Feynman diagram $\Gamma$.

%The correlation function of this observable
%is computed by a perturbative expansion with respect to $\hbar$.
The first two terms of the perturbative expansion
are 
\begin{eqnarray}
\left\langle F(\bphi(1)) G(\bphi(0)) \right\rangle 
&=& \int_{\bphi(\infty) = x} \calD \Phi\
F(\bphi(1)) G(\bphi(0)) e^{\frac{i}{\hbar}S_q}
\nonumber \\
&=& F(x) G(x)
+ \frac{i \hbar}{2} f^{ij}(x) \frac{\partial F(x)}{\partial x^i} 
 \frac{\partial G(x)}{\partial x^j} + O(\hbar^2)
\nonumber \\
&=& F(x) G(x)
+ \frac{i \hbar}{2} \sbv{F(x)}{G(x)}_{PB} + O(\hbar^2),
\label{correlationfunction}
\end{eqnarray}
where the first term is the solution of the classical equations of motion
and the second term is the Poisson bracket of $F$ and $G$.
This correlation function satisfies the first condition in 
%the conditions of a deformation quantization, 
Definition \ref{defquant}.

Higher-order terms are determined by the Feynman diagrams.
From equation (\ref{correlationfunction}),
the Poisson sigma model has been determined only by
the Poisson structure on $M$, and thus
higher-order terms in the expansion are expressed by 
$f^{ij}$ and its derivatives.

If $f^{ij}(x)
%=f^{ij}=
$ is a constant, the perturbation is simplified 
at all orders.
In this case, \eqref{taylorexpansion} has one vertex 
without derivatives of $f$,
$\frac{1}{2} f^{ij}(x) 
\ba_i \ba_j
$.
Therefore, we obtain
\begin{eqnarray}
\left\langle F(\bphi(1)) G(\bphi(0)) \right\rangle 
&=& \int_{\bphi(\infty) = x} \calD \Phi\
F(\bphi(1)) G(\bphi(0)) e^{\frac{i}{\hbar}S_q}
\nonumber \\
&=& 
%\sum_{n=0}^{\infty} 
\lim_{y \rightarrow x} 
\exp{\left(\frac{i \hbar}{2} f^{ij}
 \frac{\partial }{\partial x^i} 
 \frac{\partial }{\partial y^j}\right)} 
F(x) G(y).
\nonumber 
\end{eqnarray}
This is nothing but the Moyal product, which is the star product derived 
from the constant antisymmetric tensor $f^{ij}$.

%%%%%%%%%%%%%%%%%%%%%%%%%%%%%%%%%%%%%%%%%%%%%%%%%%%%%%%%%%%%%%%%%%%%%%%
\subsubsection{Associativity and Equivalence}
\noindent
In this section, we explain how the correlation function
(\ref{correlationfunction})
satisfies Condition (2)
of Definition \ref{defquant}, i.e., 
%we explain why it satisfies 
the associativity condition.
%of the deformation quantization on a Poisson manifold.	

The associativity condition is derived from the Ward-Takahashi identity 
of the gauge symmetry of this theory.
In the BV formalism, 
the Ward-Takahashi identity is derived from the quantum master equation (\ref{laplacianofs})
and its path integral,
\begin{eqnarray}
\int_{\phi(\infty) = x} \calD \Phi\
\Delta \left(\calO
e^{\frac{i}{\hbar}S_q}
\right) = 0.
\label{WTidentity}
\end{eqnarray}
Take an observable 
$\calO = F(\bphi(1)) G(\bphi(t)) H(\bphi(0))$
on the boundary,
where $t$ is a coordinate on the boundary such that $0 < t < 1$, and
let $\tau$ be a supercoordinate partner 
%pair is wired 
 of $t$.
Since the conformal transformation in two dimensions 
fixes only three points, this observable has the modulus $t$.
Substituting this observable into equation (\ref{WTidentity}), we get
\begin{eqnarray}
%\cfun{F(\bphi(1)) G(\bphi(t)) H(\bphi(0)) }
%= 
\int_{\phi(\infty) = x, 1 > t > 0} dt d\tau \calD \Phi\
\Delta \left(
F(\bphi(1)) G(\bphi(t)) H(\bphi(0)) 
e^{\frac{i}{\hbar}S_q}
\right) = 0.
\nonumber
\end{eqnarray}
From equations (\ref{laplacianofs}) and (\ref{WTidentity}),
%and $\Delta \left(F(\bphi(1)) G(\bphi(t)) H(\bphi(0))\right)=0$, 
we obtain 
\begin{eqnarray}
%\cfun{F(\bphi(1)) G(\bphi(t)) H(\bphi(0)) }
%= 
\int_{\phi(\infty) = x, 1 > t > 0} dt d\tau
\calD \Phi\
\sbv{S_q}{F(\bphi(1)) G(\bphi(t)) H(\bphi(0))}
e^{\frac{i}{\hbar}S_q}
= 0.
\nonumber
\end{eqnarray}
Substituting 
$$\sbv{S_q}{F(\bphi(1)) G(\bphi(t)) H(\bphi(0))}
= - \bbd \left(F(\phi(1)) G(\phi(t)) H(\phi(0))\right),$$
and applying Stokes' theorem,
this path integral becomes a boundary integral on the moduli space,
\begin{eqnarray}
&& \lim_{t \rightarrow 1} 
\int_{\phi(\infty) = x} \calD \Phi\
\left(
F(\bphi(1)) G(\bphi(t)) H(\bphi(0)) 
e^{\frac{i}{\hbar}S_q}
\right) 
\nonumber \\
&& -
\lim_{t \rightarrow 0} 
\int_{\phi(\infty) = x} \calD \Phi\
\left(
F(\bphi(1)) G(\bphi(t)) H(\bphi(0)) 
e^{\frac{i}{\hbar}S_q}
\right) = 0.
\label{boundaryidentititesofassociativity}
\end{eqnarray}
%&& \int_{\bphi(\infty) = x, t=0} \calD \bphi \calD \bb \
%\sbv{F(\bphi(0)) G(\bphi(t)) H(\bphi(1))}
%{e^{\frac{i}{\hbar}(S + S_{GF})}}
%\nonumber \\ =
This equation leads to the associativity relation
$$
(F*G)*H - F*(G*H) =0,
$$
for $F,G,H \in C^{\infty}(M)[[\hbar]]$.

Next, we discuss Condition (3) in Definition \ref{defquant}.
It is sufficient to
%We 
prove the following statement:
Let $F(x)$ be a function such that 
$\sbv{F(x)}{G(x)}_{PB} = 0$ for any $G$. Then, $F*G(x)$
is equivalent to the normal product $F(x) G(x)$ by a redefinition
$F^{\prime} = RF$.\footnote{Note that if 
$\sbv{F}{G}_{PB} =0$, then
$F*G(x)= F(x) G(x)$ is a trivial solution of the deformation quantization.}

If $\sbv{F(x)}{-}_{PB} = 0$, $F(\bphi(u))G(\bphi(0))$ is an observable,
where $u$ is an interior point on the disc. Thus, the correlation function
\begin{eqnarray}
\cfun{F(\bphi(u))G(\bphi(0))}
= \int_{\phi(\infty) = x} 
%du d \theta 
\calD \Phi\
F(\bphi(u))G(\bphi(0))
e^{\frac{i}{\hbar}S_q}
\label{conditions3WTeq}
\end{eqnarray}
satisfies the following Ward-Takahashi identity,
\begin{eqnarray}
\int_{\phi(\infty) = x} 
%du d \theta 
\calD \Phi\
\Delta \left(F(\bphi(u))G(\bphi(0))
e^{\frac{i}{\hbar}S_q}
\right) = 0.
\label{conditions3WTeq}
\end{eqnarray}
From equation \eqref{conditions3WTeq} and
a similar computation to the
%\rc{with/by taking the} 
derivation of 
\eqref{boundaryidentititesofassociativity}
using $\sbv{S}{F(\bphi(u))} = \bbd F(\bphi(u))$, 
we obtain
\begin{eqnarray}
\int_{\phi(\infty) = x} 
%du d \theta 
\calD \Phi\
\bbd F(\bphi(u))G(\bphi(0))
e^{\frac{i}{\hbar}S_q} = 0.
\label{conditions3WTeq2}
\end{eqnarray}
This means that the correlation function $\cfun{F(\bphi(u))G(\bphi(0))}$
is independent of $u$.

For $G=0$, we obtain the one-point function,
\begin{eqnarray}
\cfun{F(\bphi(u))}
= \int_{\phi(\infty) = x} 
%du d \theta 
\calD \Phi\
F(\bphi(u))
e^{\frac{i}{\hbar}S_q}
%\sum_k \left(\frac{i \hbar}{2}\right)^k  {\cal D}_k(F)
= F(x) + O(\hbar^2),
\nonumber
\end{eqnarray}
which is expressed by a formal series of derivatives of $F(x)$ as
$\sum_k \left(\frac{i \hbar}{2}\right)^k {\cal D}_k(F)$.
Then, we can take $RF(x) = \cfun{F(\bphi(u))}$.

We can prove that
\begin{eqnarray}
RF * G(x) = \cfun{F(\bphi(1))G(\bphi(0))} &=& \lim_{\epsilon \rightarrow +0}
\cfun{F(\bphi(1 + i \epsilon))G(\bphi(0))},
\end{eqnarray}
by the factorization property of the path integral.
This shows that 
$RF * G(x)$ is equivalent to $F(x) G(x)$.

\if0
\rc{
If $\sbv{F(x)}{G(x)}_{PB} = 0$, then $F(\bphi)$ satisfies
$\sbv{S}{F(\bphi(u))} = \bbd F(\bphi(u))$ 
%Based on the discussion of the path integral, e
Equation (\ref{dqequivalence}) 
After the point $u$ approaches the boundary, 
we have boundary correlation functions,
which 
%are used to 
derive equation (\ref{dqequivalence}).
Equation (\ref{conditions3WTeq}) means that in the correlation function, \eqref{correlationfunction}, $\left\langle F(\bphi(1)) G(\bphi(0)) \right\rangle$,
This equation corresponds to Condition (3).
%This term leads to an equivalence transformation of the ambiguities,
%from the factorization argument of the moduli space.
\begin{eqnarray}
\cfun{F(\bphi(1))G(\bphi(0))} &=& \lim_{\epsilon \rightarrow +0}
\cfun{F(\bphi(1 + i \epsilon))G(\bphi(0))},
\nonumber \\
\cfun{G(\bphi(0))F(\bphi(-1))} &=& \lim_{\epsilon \rightarrow +0}
\cfun{F(\bphi(-1 + i \epsilon))G(\bphi(0))}.
\nonumber
%\label{dqequivalence}
\end{eqnarray}
Since $\cfun{F(\bphi(1))G(\bphi(0))} = F*G(x)$ 
and $\cfun{F(\bphi(-1))G(\bphi(0))} = G*F(x)$,
this shows that the star product $F*G(x)$ satisfies
$\sbv{F(x)}{G(x)}_{PB} = 0$ has the property
$F*G(x) - G*F(x) = \sbv{F}{G}_{PB}=0$.
}
\fi

%%%%%%%%%%%%%%%%%%%%%%%%%%%%%%%%%%%%%%%%%%%%%%%%%%%%%%%%%%%%%%%%%%%%%%%
\subsection{Formality}
\noindent
The mathematical proof 
of the existence of a deformation quantization 
on a Poisson manifold 
\cite{Kontsevich:1997vb, Cattaneo:1999fm}
is called the formality theorem,
and it is closely related to the quantization of the 
Poisson sigma model.
In this article, we discuss the correspondence between mathematical terms
and physical concepts appearing in the AKSZ sigma model.
%For rigorous discussions of the formality theorem,
%we refer to 
%this is not the focus here. 
%discussion.

\subsubsection{Differential Graded Lie Algebras}

The input data
%\rc{input/input data} 
of the deformation quantization is 
a Poisson bracket $\{F, G\}_{PB}$.
As we saw in Example \ref{pmanifold},
the Poisson structure can be interpreted in terms of supergeometry.
%, i.e., a QP-manifold.
Thus, a deformation quantization is also reformulated in terms of 
supergeometry or graded algebras.
First, we introduce a differential graded Lie algebra.
\begin{definition}
A \textsl{differential graded Lie algebra} (dg Lie algebra)
$(\mathfrak{g}, \sbv{-}{-}, d)$ is 
a graded algebra with $\mathbb{Z}$-degree 
$\mathfrak{g} = \oplus_{k \in \mathbb{Z}} \mathfrak{g}^k[-k]$,
where $\mathfrak{g}^k$ is the degree $k$ part of $\mathfrak{g}$.
$\sbv{-}{-}: \mathfrak{g}^k \times \mathfrak{g}^l 
\rightarrow \mathfrak{g}^{k+l}$ is a graded Lie bracket
and $d: \mathfrak{g}^k \longrightarrow \mathfrak{g}^{k+1}$ 
is a differential of degree 1 such that $d^2=0$.
\end{definition}
%The Poisson structure on $M$ is determined
%by $\sbv{\Theta}{\Theta}=0$ on

\subsubsection{Maurer-Cartan Equations of Poisson Bivector Fields}

We consider a QP-manifold of degree 1, $(\calM, \omega, \Theta)$.
The graded Poisson bracket, $\sbv{-}{-}$, induced by the P-structure
is identified with the graded Lie bracket of the dg Lie algebra,
%which is equivalent to a Schouten-Nijenhuis bracket $[-,-]_S$,
where the degree is shifted by 1.
% as a differential graded Lie algebra.
The corresponding differential is $d=0$.
The space of functions of degree 2 in $C^{\infty}(\calM)$ is identified 
with $\mathfrak{g}^1$, which is isomorphic to the space of the bivector fields, 
$\alpha_1 = \frac{1}{2} \alpha^{ij}(x) \partial_i \wedge \partial_j$.
Then, the space $(\mathfrak{g}^1 = \Gamma(\wedge^2 TM), \sbv{-}{-}, d=0)$
is a dg Lie algebra and denoted by $\mathfrak{g}^1_1 = T^1_{poly}(M)$.

%We consider a QP-manifold $(\calM, \omega, \Theta)$.

Next, we consider the subspace of the solutions of 
the Maurer-Cartan equation 
$d \alpha_1 + \frac{1}{2} \sbv{\alpha_1}{\alpha_1} =0$ 
in $\mathfrak{g}^1_1$.
%where $\alpha_1 \in \mathfrak{g}^1_1$.
This space is denoted by $\calMC(\mathfrak{g}^1_1)=
\mathfrak{g}^1_1/_{\sim}$.
It is equivalent to the solutions of the classical 
master equation $\sbv{\Theta}{\Theta}=0$
since $d=0$ and $\Theta$ is of degree 2 and can be identified 
with a bivector field.
Therefore, 
the QP-manifold of degree 1, $(\calM, \omega, \Theta)$, is identified with
$\calMC(\mathfrak{g}^1_1)$.
%An element gives a Poisson bracket 
%$\{F, G\}_{PB} = \frac{1}{2} f^{ij}(x)
%\partial_i F(x) \partial_j G(x)$ on $M$.

\if0
A homological function 
$\Theta= \frac{1}{2} f^{ij}(x) \xi_i \xi_j$
on a QP-manifold of degree 1 is identified as
the bivector fields 
(second-order antisymmetric tensors),
$\alpha_1 = \frac{1}{2} f^{ij}(x)
\partial_i \wedge \partial_j
$,
%a section of the space of 
on $M$,
where $\partial_i = \partial/\partial x^i$.
The homological function is an element of
$\mathfrak{g}^1$ by degree counting of the graded Lie algebra.
%bracket $\sbv{-}{-}$, 
\fi

\subsubsection{Hochschild Complex of Polydifferential Operators}

The $\hbar^1$-th order of the deformation quantization
corresponds to the classical theory in physics.
The Poisson bivector $\alpha_1 = \frac{1}{2} f^{ij}(x) \partial_i \wedge \partial_j$ determines 
first two terms of the star product
%are determined 
as
$(F, G) \mapsto
\calB_0(F,G) + \frac{i \hbar}{2}
\calB_1(F,G) 
= FG + \frac{i \hbar}{2}
\frac{1}{2} f^{ij}(x) \partial_i F \partial_j G
\in \Hom(A^{\otimes 2}, A) $,
where $F, G \in A=C^{\infty}(M)$.

From Condition (3) in Definition \ref{defquant},
the two expressions of
$\calB_0(F,G) + \frac{i \hbar}{2} \calB_1(F,G)$
and 
$\calB_0(F^{\prime},G) + \frac{i \hbar}{2} \calB_1(F^{\prime},G)$
are equivalent in $\hbar^1$-th order, if they coincide 
after $F$ is redefined as 
$F^{\prime} = F + \frac{i \hbar}{2} \calD F$.
The redefinition map is an element of $\Hom(A, A)$.

In order to prove associativity, 
we must consider
%$\Hom(A^{\otimes 2}, A)$ to 
a map $C(F, G, H)$ in $\Hom(A^{\otimes 3}, A)$.
%From $d \alpha_1 + \frac{1}{2} \sbv{\alpha_1}{\alpha_1}=0$,
The following associativity relation is obtained at
classical level, i.e., at $\hbar^1$-th order,
\begin{eqnarray}
C: (F, G, H) 
&\mapsto& 
C(F, G, H) 
\nonumber \\
&& = (FG)H - F(GH) 
\nonumber \\
&&
+ \frac{i \hbar}{2}
\left(\calB_1(FG, H)
- \calB_1(F, GH)
+ \calB_1(F,G)H - F\calB_1(G, H)
\right)
\nonumber \\
&&
+ \left(\frac{i \hbar}{2}\right)^2
\left(\calB_1(\calB_1(F,G), H) - \calB_1(F, \calB_1(G, H))\right).
\end{eqnarray}
The classical associativity holds,
if 
\begin{eqnarray}
C(F, G, H)=0.
\label{dfassociative}
\end{eqnarray}

To formulate associativity for all orders in $\hbar$,
we define a second dg Lie algebra in $\Hom(A^{\otimes k+1}, A)$.
Let $\mathfrak{g}_2 = \oplus_{k \in \mathbb{Z}, k \geq -1} 
\mathfrak{g}_2^k[-k]$,
where $\mathfrak{g}_2^k = \Hom(A^{\otimes k+1}, A)$.
For an element $C \in \mathfrak{g}_2^k$,
a differential $d$ and a graded Lie bracket $[-,-]$ are defined 
in such a way 
%so that they realize 
that 
%\eqref{dfassociative}
equation \eqref{dfassociative} is obtained
as a part of the Maurer-Cartan equation.
The differential is defined as
\begin{eqnarray}
(d C)(F_0 \otimes \cdots \otimes F_{k+1})
&=& F_0 C(F_1 \otimes \cdots \otimes F_{k+1})
%\nonumber \\ &&
- \sum_{r=0}^k C(F_0 \otimes \cdots \otimes (F_r F_{r+1}) 
\otimes \cdots \otimes F_{k+1}) 
\nonumber \\
&&
+ (-1)^k C(F_0 \otimes \cdots \otimes F_k)F_{k+1}.
\end{eqnarray}
The graded Lie bracket is defined as
\begin{eqnarray}
[C_1, C_2]
&=& C_1 \circ C_2 - (-1)^{k_1 k_2} C_2 \circ C_1,
\\
C_1 \circ C_2 (F_0 \otimes \cdots \otimes F_{k_1 + k_2})
%\nonumber \\ &&
&=& \sum_{r=0}^{k_1} (-1)^{rk_2} C_1(F_0 \otimes \cdots \otimes F_{r-1} 
\otimes C_2(F_r \otimes \cdots \otimes F_{r+k_2} ) 
\nonumber \\ &&
\qquad \otimes 
F_{r+ k_2+ 1} \otimes \cdots \otimes F_{k_1 + k_2}),
\nonumber
\end{eqnarray}
where $C_1 \in \mathfrak{g}_2^{k_1}$ and $C_2 \in \mathfrak{g}_2^{k_2}$. 
Note that $(\mathfrak{g}_2, d)$ 
is called the Hochschild complex of polydifferential operators, 
and is also denoted as 
$\mathfrak{g}_2^k = D^k_{poly}(M)$ and $\mathfrak{g}_2 = D_{poly}(M)$.
The bracket $[-,-]$ is called the Gerstenhaber bracket. 

For an element $\widetilde{\alpha} 
\in \mathfrak{g}_2^{1}$ of degree 1,
the Maurer-Cartan equation 
$d \widetilde{\alpha} 
+ \frac{1}{2}[\widetilde{\alpha}, \widetilde{\alpha}]=0$
%on $\widetilde{\Theta}$,this 
is equivalent to 
the associativity equation (\ref{dfassociative}).
Equivalence under redefinition, Condition (3),
is also expressed by the Maurer-Cartan equation 
in elements on $\mathfrak{g}_2^{0}$.
Therefore, a solution of the Maurer-Cartan equation in $\mathfrak{g}_2$
gives the star product at order $\hbar^1$.
The space of solutions of the Maurer-Cartan equation is 
denoted by $\calMC(\mathfrak{g}_2)= \mathfrak{g}_2/_{\sim}$.

\subsubsection{Morphisms of Two Differential Graded  Lie Algebras}

At classical level, i.e., at $\hbar^1$-th order, we define a map 
%on two dg Lie algebras is 
$U_1:\mathfrak{g}_1^1 \longrightarrow \mathfrak{g}_2^1$,
such that $U_1:\frac{1}{2} f^{ij}(x) \partial_i \wedge \partial_j
\mapsto
\left(F_0 \otimes F_1 \mapsto
\frac{1}{2} f^{ij}(x) \partial_i F_0 \partial_j F_1
\right)$.
Since this map preserves the Maurer-Cartan equations, 
this induces the map
$U_1:\calMC(\mathfrak{g}_1^1) \longrightarrow \calMC(\mathfrak{g}_2^1)$.

A deformation quantization is expressed as follows.
Fix the map $U_1$.
The problem is to find a morphism on $\hbar$ deformations of two
dg Lie algebras,
$U: \calMC(\mathfrak{g}_1^1[[\hbar]]) 
\longrightarrow \calMC(\mathfrak{g}_2[[\hbar]])$.
%Here, two Maurer-Cartan forms are deformed as
%$\Theta = \frac{i \hbar}{2} \Theta_1 
%+ \left(\frac{i \hbar}{2}\right)^2 \Theta_2 + \cdots$
%and
%$\calB_0 + \frac{i \hbar}{2} \calB_1 + 
%\left(\frac{i \hbar}{2}\right)^2 \calB_2 + 
%\cdots$.
%Moreover, 
%The $\hbar$-deformed Maurer-Cartan equations
%must satisfy
%the associativity and redefinition conditions (2) and (3)
%of definition \ref{defquant}.

In general, the Maurer-Cartan equation 
on $\calMC(\mathfrak{g}_2[[\hbar]])$
is not preserved by a linear deformation of $U_1$,
since $U_1$ does not preserve graded Lie brackets.
%It is difficult 
To find $U$ consistent with the MC equations,
%The idea is to 
we extend the two dg Lie algebras to 
 $L_\infty$-algebras.
Then, we construct 
%the morphism $U$. 
the map $U$ as an $L_{\infty}$-morphism between them.
%the construction of $U$ needs a nonlinear extension of $U_1$
%in order to obtain the morphism $U$ that is 

We extend $\mathfrak{g}^1_1$ to 
%a dg Lie algebra of 
the space of polyvector fields $T_{poly}(M)
= \mathfrak{g}_1 = \oplus_{k \in \mathbb{Z}, k \geq -1} 
\mathfrak{g}_1^k[-k]$,
where $\mathfrak{g}_1^k = \Gamma(\wedge^{k+1}TM)$.
An element of $\mathfrak{g}_1^k$ is a $k$-th multivector field
(an order $k$ antisymmetric tensor field), $\alpha_k =
\alpha^{j_0\cdots j_k}(x) \partial_{j_0} \wedge \cdots \wedge \partial_{j_k} 
= \alpha^{j_0\cdots j_k}(x) \xi_{j_0} \cdots \xi_{j_k}
\in \mathfrak{g}_1^k$.
The differential and the graded Lie bracket on $\mathfrak{g}^1_1$
are generalized to $\mathfrak{g}_1$ as follows.
The differential is kept trivial, $d=0$, 
and the graded Lie bracket is the Schouten-Nijenhuis bracket $[-,-]_S$
of multivector fields, i.e.,
the graded Poisson bracket $\sbv{-}{-}$ of degree 1 
of the QP-manifold $\calM = T^*[1]M$.

The map between two dg Lie algebras in the classical theory 
is defined as follows,
\begin{eqnarray}
U_1: && T_{poly}(M) \longrightarrow D_{poly}(M), 
\nonumber \\
&& \alpha_k
\mapsto
\left(F_0 \otimes \cdots \otimes F_{k} \mapsto
\frac{1}{(k+1)!}
\alpha^{j_0\cdots j_k}(x) \partial_{j_0} F_0 
\cdots \partial_{j_k} F_k
\right).~~~
\label{classicalmap}
\end{eqnarray}
Although this map is not isomorphic, 
$U_1$ induces an isomorphism between the
$d$-cohomologies of the two spaces, $T_{poly}(M)$ and $D_{poly}(M) $
\cite{Vey}.\footnote{
By definition, 
if their cohomologies are isomorphic, 
two spaces are called \textsl{quasi-isomorphic}.
The cohomology on $T_{poly}(M)$ 
is trivial because $d=0$.}

\subsubsection{$L_{\infty}$-Algebras and $L_{\infty}$-Morphisms}

A dg Lie algebra is embedded into the more general 
algebra, an $L_{\infty}$-algebra. In this section, we discuss
$L_{\infty}$-algebras and $L_{\infty}$-morphisms to describe
the statement of the formality theorem.

For a graded vector space $V = \oplus_{k \in \mathbb{Z}} V^k$,
we consider a graded commutative tensor algebra,
$T(V) = \oplus_{n=1}^{\infty} V^{\otimes n}$, which is a space of the sum of 
infinite tensor products.
On this space, a coassociative and cocommutative coproduct 
$\triangle$
is defined as
\begin{eqnarray}
\triangle (v_1, \cdots, v_n) 
&=& \sum_{\sigma \in \mathfrak{S}}
\sum_{k=1}^{n-1} \epsilon(\sigma) \frac{1}{k!(n-k)!}
(v_{\sigma(1)} \cdots v_{\sigma(k)}) \otimes
(v_{\sigma(k+1)} \cdots v_{\sigma(n)}),
\nonumber
\end{eqnarray}
where $v_k \in T(V)$.
Next, we assume the following multilinear maps of degree 1,
\begin{eqnarray}
\mathfrak{l}_k: && V^{\otimes k} \longrightarrow V,
\nonumber \\
&& (v_1 \otimes \cdots \otimes v_k) 
\mapsto
\mathfrak{l}_k(v_1 \cdots v_k),
\nonumber
\end{eqnarray}
and define 
a \textsl{codifferential} $Q = \sum_{k=1}^{\infty} Q_k$ as
\begin{eqnarray}
Q_k (v_1, \cdots, v_n) 
&=& \sum_{\sigma \in \mathfrak{S}}
\epsilon(\sigma) \frac{1}{k!(n-k)!}
\mathfrak{l}_k(v_{\sigma(1)} \cdots v_{\sigma(k)}) \otimes
v_{\sigma(k+1)} \otimes \cdots \otimes v_{\sigma(n)}.
\nonumber
\end{eqnarray}
\begin{definition}
A pair $(V, Q)$ 
is called an $L_{\infty}$-algebra 
(a strong
%\rc{ly} 
homotopy Lie algebra) if $Q^2=0$.
\cite{SchSta, LS}
\end{definition}
%If $Q$ is expanded by $\mathfrak{l}_k$, 
The first two operations in $\mathfrak{l}_k$ are a differential 
$\mathfrak{l}_1 = d$ and 
a superbracket $\mathfrak{l}_2(-,-) = \sbv{-}{-}$.
Moreover, a graded differential Lie algebra is embedded
by the identification,
$
%\mathfrak{g}^k= 
\mathfrak{g}^{k-1}[1] 
%\longrightarrow 
\sim V^{k-1}$,
and $\mathfrak{l}_k = 0$, for $k \geq 3$.\footnote{A set of functions of a QP-manifold is regarded as an $L_{\infty}$-algebra, where degree of a function on the QP-manifold is equal to degree as an element of the $L_{\infty}$-algebra.}

We now define an $L_{\infty}$-morphism between two $L_{\infty}$-algebras.
\begin{definition}
A map between two $L_{\infty}$-algebras, 
$U: (V_1, Q) \longrightarrow (V_2, Q)$, 
is called a cohomomorphism if the map preserves degree
and satisfies
$\triangle \circ U = (U \otimes U) \circ \triangle$.
\end{definition}
\begin{definition}
A cohomomorphism $U$ between two $L_{\infty}$-algebras
%$(\mathfrak{g}_1, Q)$ and $(\mathfrak{g}_2, Q)$ 
is called an $L_{\infty}$-morphism
if $U Q = Q U$.
\end{definition}
Let us denote
$e^v = 1 + v + \frac{1}{2!} v \otimes v + 
\frac{1}{3!} v \otimes v \otimes v + \cdots$
and $\mathfrak{l}_*(e^v)= \mathfrak{l}_1(v)
+ \frac{1}{2!} \mathfrak{l}_2(v \otimes v) 
+ \frac{1}{3!} \mathfrak{l}_3(v \otimes v \otimes v) + \cdots$.
\begin{definition}
The Maurer-Cartan equation on an $L_{\infty}$-algebra $(V, Q)$
is $\mathfrak{l}_*(e^v)=0$.
%$\calMC(\mathfrak{g})$ is the space of elements satisfying 
%the Maurer-Cartan equation 
\end{definition}
The Maurer-Cartan equation $\mathfrak{l}_*(e^v) =0$
is equivalent to 
$Q(e^v)= \mathfrak{l}_*(e^v) \otimes e^v =0$.
If an $L_{\infty}$-algebra is a dg Lie algebra,
then $Q(e^v)=0$ is equivalent to the ordinary Maurer-Cartan equation 
$d\alpha + \frac{1}{2}[\alpha, \alpha]=0$,
since $\mathfrak{l}_k=0$ for $k \geq 3$, where $v=\alpha$.

If we regard two dg Lie algebras $\mathfrak{g}_1$ and $\mathfrak{g}_2$ 
as $L_{\infty}$-algebras,
the nonlinear correspondence between the two Maurer-Cartan equations 
on the two dg Lie algebras becomes transparent.
Let $V_1 = \mathfrak{g}_1 = T_{poly}(M)[1]$ and
$V_2 = \mathfrak{g}_2 = D_{poly}(M)[1]$.
Then, the existence of 
a deformation quantization can be derived as the special case
with $\alpha_k=0$ except for $k=2$  if the following theorem is proved.
\begin{theorem}[formality theorem]\label{formalitytheorem}
\cite{Kontsevich:1995, Kontsevich:1997vb}
There exists an $L_{\infty}$-morphism from 
$(T_{poly}(M)[1], Q)$ to $(D_{poly}(M)[1], Q)$
such that $U_1$ is the map in equation (\ref{classicalmap}).
\end{theorem}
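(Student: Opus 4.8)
The plan is to produce the morphism $U$ by writing down all of its Taylor coefficients $U_n$ explicitly and then checking the defining relation $UQ=QU$ degree by degree. Following Kontsevich, I would set, for polyvector fields $\alpha_1,\dots,\alpha_n$,
\[
U_n(\alpha_1,\dots,\alpha_n)=\sum_{\Gamma} w_\Gamma\, B_\Gamma(\alpha_1,\dots,\alpha_n),
\]
where the sum runs over admissible graphs $\Gamma$ with $n$ vertices of the first type placed in the open upper half-plane and some vertices of the second type placed on $\mathbb{R}$. Here $B_\Gamma$ is the polydifferential operator obtained by contracting the tensor components of the $\alpha_k$ along the oriented edges of $\Gamma$ (edges landing on a real-line vertex act as differentiations of the arguments), and $w_\Gamma$ is a real number, the integral over the compactified configuration space $\overline{C}_{n,m}$ of a product of the closed angle one-forms attached to each edge. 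For $n=1$ the only admissible graph reproduces the map of Equation \ref{classicalmap}, the Hochschild--Kostant--Rosenberg map, which fixes the normalization demanded in the statement. Thematically, the weights $w_\Gamma$ are exactly the Feynman integrals of the gauge-fixed Poisson sigma model on the disc analyzed in the previous subsection, so the construction of $U$ is nothing but the perturbative expansion already set up there.

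The core of the proof is to convert the $L_\infty$-relation into a boundary statement about $\overline{C}_{n,m}$. First I would recall the Fulton--MacPherson--Axelrod--Singer compactification, whose codimension-one boundary strata are of two types: a cluster of first-type points collapsing in the interior, and a cluster (possibly containing some of the real-line points) collapsing onto the boundary $\mathbb{R}$. Because each integrand is a product of closed forms, Stokes' theorem gives
\[
0=\int_{\overline{C}_{n,m}}d(\cdots)=\sum_{\text{strata}}\int(\cdots).
\]
The key step is then to match these two kinds of strata with the two sides of the morphism equation: interior collisions reproduce the terms built from the Schouten--Nijenhuis bracket $[-,-]_S$ on $T_{poly}(M)$ (the $Q$ on the source), while boundary collisions reproduce the Gerstenhaber bracket and the Hochschild differential on $D_{poly}(M)$ (the $Q$ on the target). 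In the sigma-model language this matching is precisely the Ward--Takahashi identity following from the quantum master equation $\Delta\,e^{\frac{i}{\hbar}S_q}=0$, integrated over the moduli of vertex insertions, exactly as in the associativity argument given above.

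The hard part will be controlling the boundary contributions that do not correspond to any term of the $L_\infty$-relation, and showing they vanish. The dangerous strata are those in which three or more points on $\mathbb{R}$ collide, or in which an interior cluster hits the boundary with the wrong valence. Here I would invoke the dimension count showing that the restricted angle forms have degree too low to integrate to a nonzero number on the collapsed stratum, together with Kontsevich's vanishing lemmas that kill tadpole graphs, graphs with double edges, and the strata where three or more points meet on the line, by explicit symmetry of the angle function. Verifying these vanishing statements and pinning down all the signs in the identification of strata with algebraic operations is the delicate technical core; once the correspondence between boundary faces and the operations $[-,-]_S$, the Gerstenhaber bracket, and the Hochschild differential is established, the remainder is bookkeeping and the relation $UQ=QU$ follows.
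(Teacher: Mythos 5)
Your proposal is correct, but it follows Kontsevich's original configuration-space route rather than the path-integral route sketched in the paper. The paper constructs $U_n(\alpha_1,\dots,\alpha_n)$ as the perturbative expansion of the correlation functions $\int \calO_x(F_0,\dots,F_m)\, e^{\frac{i}{\hbar}S_0}\,\frac{i}{\hbar}S_{\alpha_1}\cdots\frac{i}{\hbar}S_{\alpha_n}$ of boundary observables in the generalized ($T_{poly}$-valued) Poisson sigma model on the disc, and obtains the $L_\infty$-morphism identity directly from the Ward--Takahashi identity $\Delta(\calO\, e^{\frac{i}{\hbar}S_q})=0$ integrated over the moduli of boundary insertions --- an argument the paper itself flags as formal, since the odd Laplacian and the measure on $\Map(\calX,\calM)$ require regularization. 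You instead define the Taylor coefficients by explicit graph sums with weights $w_\Gamma$ given by integrals over the compactified configuration spaces, fix the normalization by checking that $n=1$ reproduces the Hochschild--Kostant--Rosenberg map of Equation (\ref{classicalmap}), and derive the morphism equation from Stokes' theorem on the compactification, matching interior-collapse strata to the Schouten--Nijenhuis bracket and boundary-collapse strata to the Gerstenhaber bracket and Hochschild differential, with Kontsevich's vanishing lemmas disposing of the spurious faces. The two constructions produce the same Feynman-diagrammatic expansion (as you note, the $w_\Gamma$ are exactly the sigma-model Feynman integrals), but they buy different things: the paper's version makes the QFT origin transparent and embeds naturally in the AKSZ/BV framework, while yours is the mathematically rigorous argument, making explicit exactly which boundary contributions must vanish and why --- the part the paper's WT-identity sketch silently absorbs into the formal quantum master equation. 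The remaining work you defer (signs, orientations of strata, and the vanishing lemmas) is genuine but standard, and does not constitute a gap in the approach.
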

We refer to Ref.~\citenum{Kontsevich:1997vb} for the rigorous proof.
In this article, we observe that a two-dimensional AKSZ sigma model
contains all the structures required above to find the formality map.

\subsubsection{Correspondence to $n=1$ AKSZ Sigma Model}

The field theoretical realization of the Poisson structure is 
the Poisson sigma model, and
%The field theoretical realization of the 
that of the Maurer-Cartan equations of a dg Lie algebra is 
%nothing but 
the 
%classical and 
quantum BV master equations.
(The MC equation with $d=0$ corresponds to the classical master equation.)
The 
%field theoretical realization of 
deformation of a dg Lie algebra in $\hbar$ corresponds to
the perturbative quantization of a physical theory.
%introduces a single parameter, $\hbar$,
%and deforms the correlation functions as a formal power series in $\hbar$.
The subalgebra $\calMC(\mathfrak{g})$ corresponds to 
the space of correlation functions which 
satisfy the Ward-Takahashi identities induced from 
the quantum master equation.
%Maurer-Cartan equation derives the relations, 

In order to generalize the Poisson sigma model to the $L_{\infty}$ setting,
we have to consider
%the target space of 
the AKSZ sigma model where
the target space is generalized 
%from the space of bivector fields, $\mathfrak{g}_1^1$, 
to the space of multivector fields, $\mathfrak{g}_1$.
The BV action of the AKSZ sigma model based on multivector fields is 
%constructed 
%by generalizing the Poisson sigma model 
%as
\begin{eqnarray}
S &=& S_0 + \sum_{p=0}^{d-1} S_{\alpha_p} 
\nonumber \\
&=& \int_{T[1] \Sigma} 
d^2 \sigma d^2 \theta
\left(
\ba_i \bbd \bphi^i
+ 
\sum_{p=0}^{d-1} \frac{1}{(p+1)!} \alpha^{j_0\cdots j_p}(\bphi) 
\ba_{j_0} \cdots \ba_{j_p}
\right),
\nonumber
\end{eqnarray}
where $\alpha_p
= \frac{1}{(p+1)!} 
\alpha^{j_0\cdots j_p}(x)
\frac{\partial}{\partial x^{j_0}} \wedge \cdots \wedge 
\frac{\partial}{\partial x^{j_p}} 
\in \Gamma (\wedge^{p+1}TM)$ is a multivector field
satisfying the MC equation in $\calMC(\mathfrak{g}_1)$.
We denote the term of the order $p$ multivector field by
$$S_{\alpha_p} = 
\int_{T[1] \Sigma} 
d^2 \sigma d^2 \theta
\frac{1}{(p+1)!} \alpha^{j_0\cdots j_p}(\bphi) 
\ba_{j_0} \cdots \ba_{j_p},
$$
and $\alpha^{j_0 j_1}(\bphi) = f^{j_0 j_1}(\bphi)$ corresponds to 
the original Poisson bivector field.
This action $S$ no longer has degree 0.
The MC equation on $\calMC(\mathfrak{g}_1)$
is equivalent to 
the classical master equation $\sbv{S}{S}=0$.

We take the same gauge fixing fermion and the same boundary conditions
as in the case of the Poisson sigma model in Section \ref{BVQuantizations}.
Observables are correlation functions of $m+1$ vertex operators 
on the boundary.
From the analysis of the moduli of insertion points
%on the space 
of the observables,
the observables on the boundary have the following form,
\begin{eqnarray}
\calO_x (F_0, \ldots, F_m) &=&
\int_{B_m} d^{m-1} t
\left[
F_0(\bphi(t_0, \theta_0))\cdots F_m(\bphi(t_m, \theta_m))
\right]^{(m-1)}
\delta_x (\phi(\infty)),
\nonumber
\end{eqnarray}
where $t_i$ are the points on the boundary circle such that 
$1 = t_0 > t_i > \cdots > t_{m-1} > t_m =0$.
$B_m$ is the space of the parameters $t_i$
and $[\cdots]^{(m-1)}$ denotes the order $\tau^{m-1}$-th term
which is given by the integration over supercoordinates $\tau_i$.
The map $U$ is given by
\begin{eqnarray}
U(\alpha)(F_0 \otimes \cdots \otimes F_m)(x)
&=& 
\int \calO_x (F_0, \ldots, F_m) e^{\frac{i}{\hbar}S_q},
\nonumber
\end{eqnarray}
where the path integral includes the $t_i$ integration over
the moduli space $B_m$.
We can obtain the $L_{\infty}$-morphism 
%from $V_1$ to $V_2$,
$U(\alpha) = \sum_{n=1}^{\infty} \frac{1}{n!} U_n(\alpha_1, \cdots, \alpha_n)$,
where $U_n: \mathfrak{g}_1^{\otimes n} \longrightarrow \mathfrak{g}_2$. 
The concrete equation of $U$ is computed by the perturbative expansion of 
the path integral.
The MC equation of the $L_{\infty}$-morphism is derived 
by using the WT identity induced from the quantum master equation as
\begin{eqnarray}
\lefteqn{
\sum_{\ell=0}^n\sum_{k=1}^{m-1}
 \sum_{i=0}^{m-k}
\sum_{\sigma\in \mathfrak{S}_{l, n-l}}\epsilon(\sigma)(-1)^{k(i+1)}(-1)^m
U_{l}(\alpha_{\sigma(1)},\dots,\alpha_{\sigma(l)})
}
\nonumber \\ &&
(F_0\otimes\cdots\otimes F_{i-1}
\otimes\, U_{n-l}(\alpha_{\sigma(l+1)},
\dots,\alpha_{\sigma(n)})(F_{i}\otimes\cdots\otimes F_{i+k})
%\nonumber \\ &&
\otimes F_{i+k+1}\otimes\cdots\otimes F_{m})
\nonumber \\
&=&
\sum_{i<j}\epsilon_{ij} U_{n-1}([\alpha_i,\alpha_j],\alpha_1,\dots,
\widehat\alpha_i,\dots,\widehat\alpha_j,\dots,\alpha_n)
(F_0\otimes\cdots\otimes F_m),~~~~
\label{LinftymorphismfromAKSZ}
\end{eqnarray}
where
\begin{eqnarray*}
&& (i \hbar)^{n+m-1} U_n(\alpha_1, \ldots, \alpha_n)
(F_0 \otimes \cdots \otimes F_m) 
%\nonumber \\ &&
=
\int \calO_x (F_0, \ldots, F_m) 
e^{\frac{i}{\hbar}S_0}
\frac{i}{\hbar} S_{\alpha_1}
\cdots 
\frac{i}{\hbar} S_{\alpha_n}.
\end{eqnarray*}
The map $U$ satisfying equation \eqref{LinftymorphismfromAKSZ} is nothing but
the $L_{\infty}$-morphism used in the proof of Theorem \ref{formalitytheorem}.

%%%%%%%%%%%%%%%%%%%%%%%%%%%%%%%%%%%%%%%%%%%%%%%%%%%%%%%%%%%%%%%%%%%%%
%%%%%%%%%%%%%%%%%%%%%%%%%%%%%%   SEC  11    %%%%%%%%%%%%%%%%%%%%%%%%%%
%%%%%%%%%%%%%%%%%%%%%%%%%%%%%%%%%%%%%%%%%%%%%%%%%%%%%%%%%%%%%%%%%%%%%
\section{Comments and Future Outlook}
\noindent
The AKSZ construction is a clear method for the construction and 
analysis of topological field theories in any dimension.
Although important aspects have been discussed here,
%space limitations
% prevented us from 
we could not consider all topics related to AKSZ sigma models.
We briefly list 
%know of many TFTs and 
the subjects related to AKSZ sigma models that have not been discussed here.
%, as follows.

The Poisson sigma model on a general Lie algebroid (the Lie algebroid Poisson sigma model) has been analyzed 
\cite{Bonechi:2005sj, Zucchini:2008hn, Velez:2011ed}.
Several versions of TFTs with a generalized geometric 
structure have been constructed  
in two, three and higher dimensions
\cite{Zucchini:2004ta, Zucchini:2005rh, Pestun:2006rj, 
Ikeda:2006pd, Ikeda:2007rn, Cattaneo:2009zx}.
The Rozansky-Witten theory has been 
formulated by the AKSZ construction in three dimensions
\cite{Qiu:2009zv}.
Open $p$-branes with worldvolume boundaries have been analyzed
\cite{Park2000au, Hofman:2002rv}.
%:
A TFT with Dirac structure (a Dirac sigma model) has been formulated in Refs.
\citenum{Kotov:2004wz, Kotov:2010wr}.
A three-dimensional version of the A-model has been proposed \cite{Stojevic:2008qy}
and the relation between the doubled formalism and the AKSZ formalism has been analyzed \cite{Stojevic:2008hc}.
A topological sigma model with a Nambu-Poisson structure (the Nambu-Poisson sigma model) has been 
constructed \cite{Bouwknegt:2011vn, Schupp:2012nq}.
The Poisson (and symplectic) reduction has been discussed in terms of the AKSZ approach
\cite{Cattaneo:2007er, Cattaneo:2009zza, Zucchini:2007ie, 
Zucchini:2008cg, Zucchini:2010uu, BCZ}.
%
%The AKSZ-BFV theories on a manifold with boundaries 
%are formulated in
%\cite{Cattaneo:2012qu, Cattaneo:2012zs}.

Many other geometric structures have been realized in the AKSZ construction
\cite{Zucchini:2005cq, Qiu:2010xi, 
Barnich:2010sw, Kallen:2010ff, Zucchini:2011aa}.

General structures of this formulation and applications to various aspects
of quantum field theories have been analyzed 
\cite{Batalin:2001fc, Edgren:2002xg, Edgren:2003nk, 
Ikeda:2004gp, Kotov:2007nr, Barnich:2009jy, 
Bonechi:2011um, Grigoriev:2012xg}.
The AKSZ construction on a discrete spacetime has been considered 
\cite{Bonechi:2009kx, Alekseev:2010ud}.
Categorical and graded versions of bundles 
related to the AKSZ method, called derived geometry, have been formulated 
\cite{PTVV}.
There are categorical and Chern-Weil formulations of the AKSZ construction \cite{Fiorenza:2011jr}.
The Wilson loop in the Chern-Simons theory has been formulated \cite{Alekseev:2012wc}.
A current algebra theory based on the supergeometric AKSZ formulation
has been constructed \cite{Ikeda:2011ax, Ikeda:2013vga}.
The AKSZ sigma models have been applied to analyze 
T-duality and R-flux in string theory
\cite{Mylonas:2012pg, Bessho:2015tkk}.
The AKSZ formalism has been used in the construction of higher spin theories
\cite{Boulanger:2011dd, Boulanger:2012bj, Bonezzi:2015igv}.

Supergeometry such as QP-manifolds is used to analyze the geometry of
 double field theory
\cite{Deser:2014mxa, Deser:2016qkw, Heller:2016abk}.
There are also recent papers that analyze AKSZ theories 
\cite{Alkalaev:2013hta, Bonavolonta:2013mza, 
Johnson-Freyd:2013oea, Mnev:2012qd}.

%References will be useful for reader to refer to more topics.

The geometric structures of
AKSZ theories have not yet been satisfactory analyzed.
Many geometric structures have been realized by 
AKSZ sigma models, but
there exist some structures for which the topological sigma model formulations have yet to be found.
For example, the Nambu bracket itself, which appears
in membrane theory, has not been constructed as 
a target space structure of an AKSZ type sigma model,
although the Nambu-Poisson tensor has been realized 
by the AKSZ sigma model on a manifold with boundary
\cite{Bouwknegt:2011vn}.
Here, we did not fully 
discuss AKSZ theories on an open manifold, although we note that 
they are important and
 related to higher categories.

Many analyses of the quantization of AKSZ sigma models can be found in the literature
\cite{Hofman:2002jz, Bonechi:2007ar, Bonechi:2009kx, Qiu:2009rf,
Qiu:2010xi, Kallen:2010ff},
but the analysis of the general AKSZ theory has not been completely understood.
The gauge fixing procedure is complicated. It
requires the BV formalism of component fields,
since the gauge fixing is not formulated by superfields.
Moreover, the moduli space of the observables in the path integral
is not clear in more than two dimensions,
and it is difficult to generalize the formality theorem.

Since gauge structures are algebroids,
in general, their structures are highly nonlinear.
Analysis of their structures, 
including their quantizations, is not so easy.
For complete quantizations, we must solve the problem of
globalization of algebroids to groupoids \cite{Cattaneo:2000iw, LiBland}.
Mathematical structures of algebroids and groupoids
in general dimensions should be analyzed.
Other important problems are the analysis of
nonperturbative effects, such as instantons or monopoles.

AKSZ sigma models have not only reformulated topological invariants,
but also led to the proposal of new topological or differential topological 
invariants.
Thus, analysis of these models may solve problems, 
%of geometry
such as the classification of differential topological manifolds.
It will also be important to clarify the relationship between the AKSZ formulation and the mathematical formulation of TFTs. \cite{Atiyah:1989vu}

In TFTs, mathematical and physical arguments are closely connected.
AKSZ sigma models are rich in potential, and they 
%are likely to
lead to a deeper understanding of the relationship between mathematics and physics.

%%%%%%%%%%%%%%%%%%%%%%%%%%%%%%%%%%%%%%%%%%%%%%%%%%%%%%%%%%%%%%%%%%%%%
%%%%%%%%%%%%%%%%%%%%%%%%%%%%   ACKNOWLEGE    %%%%%%%%%%%%%%%%%%%%%%%%
%%%%%%%%%%%%%%%%%%%%%%%%%%%%%%%%%%%%%%%%%%%%%%%%%%%%%%%%%%%%%%%%%%%%%
\section*{Acknowledgments}
\noindent
This lecture is partially based on my lecture 
series at Tohoku and Keio Universities, and I
would like to thank them
for their hospitality and discussions.
The author would like to thank T.~Asakawa,
U.~Carow-Watamura,
T.~Bessho, K.~Koizumi, Y.~Maeda, M.~A.~Heller, S.~Sasa, M.~Sato,
K.~Uchino, X.~Xu and S.~Watamura for valuable comments and discussions.
He would like to thank Y. Maeda for encouraging me 
to write this lecture note, and thank 
U.~Carow-Watamura, M.~A.~Heller and S.~Watamura for careful reading of 
this manuscript.
This work was supported by the
Maskawa Institute, Kyoto Sangyo University
and supported by the research promotion program grant 
at Ritsumeikan University.

%%%%%%%%%%%%%%%%%%%%%%%%%%%%%%%%%%%%%%%%%%%%%%%%%%%%%%%%%%%%%%%%%%%%%
%%%%%%%%%%%%%%%%%%%%%%%%%%%%%%   APPENDIX    %%%%%%%%%%%%%%%%%%%%%%%%
%%%%%%%%%%%%%%%%%%%%%%%%%%%%%%%%%%%%%%%%%%%%%%%%%%%%%%%%%%%%%%%%%%%%%
\appendix

%%%%%%%%%%%%%%%%%%%%%%%%%%%%%%%%%%%%%%%%%%%%%%%%%%%%%%%%%%%%%%%%%%%%%
%%%%%%%%%%%%%%%%%%%%%%%%%%%%%%%  App. %%%%%%%%%%%%%%%%%%%%%%%%%%%%%%
%%%%%%%%%%%%%%%%%%%%%%%%%%%%%%%%%%%%%%%%%%%%%%%%%%%%%%%%%%%%%%%%%%%%%
\section{Appendix: Formulas in Graded Differential Calculus}
\noindent
We summarize the formulas of graded symplectic geometry.

%%%%%%%%%%%%%%%%%%%%%%%%%%%%%%%%%%%%%%%%%%%%%%%%%%%%%%%%%%%%%%%%%%%%%
%%%%%%%%%%%%%%%%%%%%%%%%%%%%%%%  SEC.  %%%%%%%%%%%%%%%%%%%%%%%%%%%%%%
%%%%%%%%%%%%%%%%%%%%%%%%%%%%%%%%%%%%%%%%%%%%%%%%%%%%%%%%%%%%%%%%%%%%%
\subsection{Basic definitions}
\noindent
Let $z$ be a local coordinate on a graded manifold $\calM$.
A differential on a function is defined as follows.
\begin{eqnarray}
d f(z) &=& d z^a \frac{\ld f}{\partial z^a}.
\end{eqnarray}
A vector field $X$ is expanded using local coordinates, as follows.
\begin{eqnarray}
X = X^a(z) \frac{\ld}{\partial z^a}.
\end{eqnarray}
The interior product is defined using differentiation by
the following graded vector field on $T[1]\calM$,
\begin{eqnarray}
\iota_X &=& (-1)^{|X|}
X^a(z) \frac{\ld}{\partial d z^a},
\end{eqnarray}
where we define
$\frac{\ld}{\partial d z^a} d z^b = \delta^b{}_a$.
For a graded differential form $\alpha$, we denote by $|\alpha|$ the total degree 
(form degree plus degree by grading) of $\alpha$.
Note that $|d| = 1$, $|d z^a| = |z^a| +1$
and
$|\iota_X|= |X| -1$.
%\begin{eqnarray}
%\iota_X \alpha 
%&=& (-1)^{|X|}
%X^a(z) \frac{\ld}{\partial d z^a} 
%dz^{b_1} \wedge dz^{b_2} \wedge \cdots \wedge dz^{b_m} 
%\alpha_{b_1b_2\cdots b_m}(z)
%\nonumber \\
%&=& (-1)^{|X|}
%X^a(z) \frac{\ld}{\partial d z^a} 
%\sum_{k=1}^m 
%(-1)^{(|z^a|+1)(|z^{b_1}|+1 + |z^{b_2}|+1 + \cdots |z^{b_{k-1}}|+1)}
%\nonumber \\
%&&
%\times dz^{b_1} \wedge dz^{b_2} \wedge \cdots \wedge 
%\widehat{dz}^{b_k} \wedge \cdots \wedge dz^{b_m} 
%\alpha_{b_1b_2\cdots b_m}(z)
%\end{eqnarray}
%
%
For vector fields, $X = X^a(z) \frac{\ld}{\partial z^a}$
and $Y = Y^a(z) \frac{\ld}{\partial z^a}$,
the graded Lie bracket is
\begin{eqnarray}
[X, Y] 
&=& X^a \frac{\ld Y^b}{\partial z^a} \frac{\ld}{\partial z^b}
- (-1)^{|X||Y|} Y^a \frac{\ld X^b}{\partial z^a} \frac{\ld}{\partial z^b}.
\end{eqnarray}
We obtain the following formula,
\begin{eqnarray}
Xf &=& (-1)^{|X|} \iota_X df
= (-1)^{(|f|+1)|X|} df(X),
%= (-1)^{z} (-1)^{(|f|+1)|X|} df(X),
%= (-1)^{n+1} (-1)^{(|f|+1)(|X| + 1 -n)} df(X),
\label{differentialoffunction}
\end{eqnarray}
where
\begin{eqnarray}
d z^a \left(\frac{\ld}{\partial z^b} \right) &=& \delta^a{}_b.
\end{eqnarray}
\begin{proof}
We prove \eeqref{differentialoffunction}.
Since $Xf = X^a(z) \frac{\ld f}{\partial z^a}$,
we have
\begin{eqnarray}
(-1)^{|X|} \iota_X df
&=& (-1)^{|X|} (-1)^{|X|} X^a(z) 
\frac{\ld}{\partial d z^a} \left(
d z^a \frac{\ld f}{\partial z^a} \right).
%(-1)^{n+1} (-1)^{(|f|+1)(|X| + 1 -n)} 
\end{eqnarray}
Therefore,
\begin{eqnarray}
df(X)
&=&
%(-1)^{n+1} (-1)^{(|f|+1)(|X| + 1 -n)} 
d z^a \frac{\ld f}{\partial z^a}
\left(
X^b(z) \frac{\ld}{\partial z^b}
\right)
\nonumber \\
&=&
(-1)^{(|f|-|z|)|X|} 
\left[
d z^a \left(X^b(z)
\frac{\ld}{\partial z^b}
\right) 
\right]
\frac{\ld f}{\partial z^a}
\nonumber \\
&=&
(-1)^{(|f|-|z|)|X|} 
(-1)^{(|X| - |z|)(|z|+1)} 
X^b(z) 
\left[
d z^a \left(
\frac{\ld}{\partial z^b}
\right)
\right]
\frac{\ld f}{\partial z^a}
\nonumber \\
&=&
(-1)^{(|f|+1)|X|} 
X^a(z) \frac{\ld f}{\partial z^a}.
\end{eqnarray}
%\hfill\qed
\end{proof}

%%%%%%%%%%%%%%%%%%%%%%%%%%%%%%%%%%%%%%%%%%%%%%%%%%%%%%%%%%%%%%%%%%%%%
\subsection{Cartan formulas}
\noindent
The Lie derivative is defined by
\begin{eqnarray}
L_X = \iota_X d - (-1)^{(|X|-1) \times 1} d \iota_X
= \iota_X d + (-1)^{|X|} d \iota_X.
\end{eqnarray}
%and $|\iota_X| = |X| -1$
Its degree is $|L_X| = |X|$.

Let $\alpha$ and $\beta$ be graded differential forms.
%$|\alpha|$ and $|\beta|$ are total degree 
%(degree of differential forms plus graded degree).
We can show the following graded Cartan formulas,
\begin{eqnarray}
&& \alpha \wedge \beta = (-)^{|\alpha||\beta|}\beta \wedge \alpha,
\\
&& d (\alpha \wedge \beta) =
d \alpha \wedge \beta + (-1)^{|\alpha|} \alpha \wedge d \beta,
\\
&& \iota_X (\alpha \wedge \beta) =
\iota_X \alpha \wedge \beta + (-1)^{|\alpha|(|X|+1)} 
\alpha \wedge \iota_X \beta,
\\
&& L_X (\alpha \wedge \beta) =
L_X  \alpha \wedge \beta + (-1)^{|\alpha||X|} 
\alpha \wedge L_X \beta,
\\
&& L_X d = (-1)^{|X|}  d L_X,
\\
&& \iota_X \iota_Y - (-1)^{(|X|- 1)(|Y|-1)} \iota_Y \iota_X = 0,
\\
&& L_X \iota_Y - (-1)^{|X|(|Y|- 1)} \iota_Y L_X = \iota_{[X, Y]},
\\
&& L_X L_Y - (-1)^{|X||Y|} L_Y L_X = L_{[X, Y]}.
\end{eqnarray}

%%%%%%%%%%%%%%%%%%%%%%%%%%%%%%%%%%%%%%%%%%%%%%%%%%%%%%%%%%%%%%%%%%%%%
%%%%%%%%%%%%%%%%%%%%%%%%%%%%%%%%%%%%%%%%%%%%%%%%%%%%%%%%%%%%%%%%%%%%%
%%%%%%%%%%%%%%%%%%%%%%%%%%%%%%%%%%%%%%%%%%%%%%%%%%%%%%%%%%%%%%%%%%%%%
\subsection{Differential forms}
\noindent
Let $\alpha = d z^{a_1} \wedge \cdots d z^{a_m} \alpha_{a_1 \cdots a_m}(z)$
be an $m$-form on $\mathcal{M}$.
The contraction of $\alpha(X, -, \cdots, -)$ with a vector field $X$ on $\mathcal{M}$ is
\begin{eqnarray}
\alpha(X, -, \cdots, -)
= (-1)^{|X|(|\alpha|+1)} \iota_X \alpha(-, \cdots, -).
%\iota_X \alpha(-, \cdots, -)
%= (-1)^{|X|(|\alpha|+1)} \alpha(X, -, \cdots, -),
\label{interiorderivativeondform}
\end{eqnarray}

\begin{proof}
\begin{eqnarray}
\alpha(X, -, \cdots, -)
&=&
d z^{a_1} \wedge \cdots 
d z^{a_m} \alpha_{a_1 \cdots a_m}(z)
\left (X^b 
\frac{\ld}{\partial z^b}\right)
\nonumber \\
&=&
(-1)^{|X|(|\alpha|-|z|-1)} 
d z^{a_1} 
\left(X^b
\frac{\ld}{\partial z^b}
\right) 
d z^{a_2} \wedge \cdots 
d z^{a_m} \alpha_{a_1 \cdots a_m}(z)
\nonumber \\
&=&
(-1)^{|X|(|\alpha|-|z|-1)} 
(-1)^{(|X| - |z|)(|z|+1)} 
\nonumber \\ && 
\times X^{a_1} 
d z^{a_2} \wedge \cdots 
d z^{a_m} \alpha_{a_1 \cdots a_m}(z)
\nonumber \\
&=&
(-1)^{|X||\alpha|} 
X^{a_1} 
d z^{a_2} \wedge \cdots 
d z^{a_m} \alpha_{a_1 \cdots a_m}(z)
\nonumber \\
&=&
(-1)^{|X||\alpha|} 
(-1)^{|X|}
\iota_X \alpha.
\end{eqnarray}
%\hfill\qed
\end{proof}

By induction 
using \eeqref{interiorderivativeondform}, 
we obtain the following general formula,
\begin{align}
& \alpha(X_m, X_{m-1}, \cdots, X_1)
= - (-1)^{\sum_{i=1}^m |X_i|(|\alpha|+i)} \iota_{X_m} 
\cdots \iota_{X_1} \alpha,
\\
%\end{eqnarray}
%
%
%\begin{eqnarray}
%\alpha(X, Y)
%&=& - (-1)^{(|X|+|Y|)|\alpha| + |Y|} \iota_X \iota_Y \alpha,
%\\
%\hskip-10pt
&
\alpha(X_m, \cdots, X_j, \cdots, X_i, \cdots X_1)
%\nonumber \\ && 
= - (-1)^{|X_i||X_j|} 
\alpha(X_m, \cdots, X_i, \cdots, X_j, \cdots X_1).
\end{align}
In particular, if $\alpha$ is a $2$-form, we obtain
\begin{eqnarray}
%\alpha(X, Y)
%&=& - (-1)^{(|X|+|Y|)|\alpha| + |Y|} \iota_X \iota_Y \alpha,
%\\
\alpha(X, Y)
&=& - (-1)^{|X||Y|} \alpha(Y, X).
\end{eqnarray}

%%%%%%%%%%%%%%%%%%%%%%%%%%%%%%%%%%%%%%%%%%%%%%%%%%%%%%%%%%%%%%%%%%%%%
\subsubsection{Exterior derivatives}
\noindent
Recall the exterior derivative of a function was given by \eeqref{differentialoffunction},i.e.
%and is repeated here,
\begin{eqnarray}
df(X)
= (-1)^{|X|(|f|+1)} Xf.
%= (-1)^{z} (-1)^{(|f|+1)|X|} df(X),
%= (-1)^{n+1} (-1)^{(|f|+1)(|X| + 1 -n)} df(X),
\end{eqnarray}
Let $\alpha$ be a $1$-form on $\mathcal{M}$.
Then, from the Cartan formulas, we obtain
\begin{align}
d \alpha(X_1, X_2)
&=
(-1)^{|X_1||\alpha|} X_1 \alpha(X_2)
- (-1)^{|X_2||\alpha|} (-1)^{|X_1||X_2|} X_2 \alpha(X_1)
%\nonumber \\ &&
- \alpha([X_1, X_2]).
\end{align}
For a $2$-form $\alpha$, the formula gives
\begin{align}
d \alpha(X_1, X_2, X_3)
&=
(-1)^{|X_1|(|\alpha| +1)} X_1 \alpha(X_2, X_3)
%\nonumber \\ &&
- (-1)^{|X_2|(|\alpha| +1)} (-1)^{|X_1||X_2|} X_2 \alpha(X_1, X_3)
\nonumber \\ &
\quad
+ (-1)^{|X_3|(|\alpha| +1)} (-1)^{(|X_1|+|X_2|)|X_3|} X_3 \alpha(X_1, X_2)
%\nonumber \\ &&
- \alpha([X_1, X_2], X_3)
\nonumber \\ &
\quad
+ (-1)^{|X_2||X_3|} \alpha([X_1, X_3], X_2)
%\nonumber \\ &&
- (-1)^{|X_1|(|X_2| + |X_3|)} \alpha([X_2, X_3], X_1).
\end{align}
Let $\alpha = d z^{a_1} \wedge \cdots d z^{a_m} \alpha_{a_1 \cdots a_m}(z)$
be an $m$-form on $\mathcal{M}$. 
Then, we can prove the following formula by induction,
\begin{eqnarray}
d \alpha(X_1, X_2, \cdots, X_m)
&=& 
\sum_{i=1}^m (-1)^{i-1} (-1)^{|X_i|(|\alpha|+m)} 
(-1)^{\sum_{k=1}^{i-1} |X_i||X_k|}
%\nonumber \\ && \times 
X_i \alpha(X_1, \cdots, \hat{X_i}, \cdots, X_m)
\nonumber \\
&& 
+
%(-1)^m 
\sum_{i < j}
(-1)^{i+j} 
(-1)^{\sum_{k=1}^{i-1} |X_i||X_k|
+ \sum_{l=1, l \neq j}^{j-1} |X_j||X_l|}
\nonumber \\ && \times 
\alpha([X_i, X_j], 
\cdots, \hat{X_i}, \cdots, \hat{X_j}, \cdots, X_m).
\end{eqnarray}

%%%%%%%%%%%%%%%%%%%%%%%%%%%%%%%%%%%%%%%%%%%%%%%%%%%%%%%%%%%%%%%%%%%%%
\subsection{Graded symplectic form and Poisson bracket}
\noindent
Let $\omega$ be a symplectic form of degree $n$.
Since $\omega$ is a $2$-form, its total degree is $|\omega| = n+2$.
Let $z = (q^a, p_a)$ be
Darboux coordinates such that 
$|q| + |p| =n$. Then, we obtain
\begin{eqnarray}
\omega &=& (-1)^{|q|(|p|+1)} d q^a \wedge d p_a 
= (-1)^{n|q|} d q^a \wedge d p_a 
%(=(-1)^{n|q|} \delta q^a \wedge \delta p_a)
\nonumber \\
&=& (-1)^{n|q|}(-1)^{(|q|+1)(|p|+1)} d p_a \wedge d q^a
= (-1)^{|p|+1} d p_a \wedge d q^a.
\end{eqnarray}
The Liouville $1$-form $\omega = - d \vartheta$ is then given by
\begin{eqnarray}
\vartheta 
&=& 
(-1)^{|p|} p_a d q^a
= - (-1)^{n +1 - |q|} p_a d q^a
= (-1)^{|q||p|} d q^a p_a
\\
&=& - (-1)^{|q|(|p|+1)} q^a d p_a
= - d p_a q^a.
\end{eqnarray}

The Hamiltonian vector field $X_f$ of a function $f$ is defined by
\begin{eqnarray}
\iota_{X_f} \omega &=& - d f.
\end{eqnarray}
Its total degree is $|X_f| = |f|-n$.
In order to obtain the Darboux coordinate expression of $X_f$,
we take a local coordinate expression $X = X_a \frac{\ld }{\partial p_a}
+
Y^a \frac{\ld }{\partial q^a}$.
Then we obtain
\begin{eqnarray}
\iota_{X_f} \omega &=& 
 \left(
(-1)^{|X| + p} X_a \frac{\ld }{\partial d p_a}
+ 
(-1)^{|X| + q} 
Y^a \frac{\ld }{\partial d q^a}
\right)
\cdot \left( (-1)^{n|q|} d q^a \wedge d p_a \right)
\nonumber \\
&=& - dq^a \frac{\ld f}{\partial q^a}
- dp_a \frac{\ld f}{\partial p_a}.
\end{eqnarray}
By solving this equation, we finally obtain
\begin{eqnarray}
X_f &=& 
\frac{f \rd}{\partial q^a}\frac{\ld }{\partial p_a}
- (-1)^{|q||p|}
\frac{f \rd}{\partial p_a}\frac{\ld }{\partial q^a}.
\end{eqnarray}
Here, $\frac{f \rd}{\partial q^a} = 
(-1)^{(|f|-q)q} \frac{\ld f}{\partial q^a}$
is the right derivative.

The graded Poisson bracket is defined by
\begin{eqnarray}
\sbv{f}{g} &=& X_f g
= (-1)^{|f|+n} \iota_{X_f} dg
= (-1)^{|f|+n + 1} \iota_{X_f} \iota_{X_g} \omega.
\end{eqnarray}
It satisfies
\begin{eqnarray*}
\sbv{f}{g}&=&-(-1)^{(|f| - n)(|g| - n)} \sbv{g}{f},\\
\sbv{f}{g  h}&=&\sbv{f}{g} h
+ (-1)^{(|f| - n)|g|} g \sbv{f}{h},\\
\{f,\{g,h\}\}&=&\{\{f,g\},h\}+(-1)^{(|f|-n)(|g|-n)}\{g,\{f,h\}\}.
\end{eqnarray*}
For the Darboux coordinates, we get the relations 
\begin{eqnarray}
\sbv{q^a}{p_b} &=& \delta^a{}_b,
\qquad
\sbv{p_b}{q^a} = - (-1)^{|q||p|}\delta^a{}_b.
\end{eqnarray}
For the functions $f = f(q, p)$ and $g = g(q, p)$,
the graded Poisson bracket is given by
\begin{eqnarray}
\sbv{f}{g} &=& 
\frac{f \rd}{\partial q^a}\frac{\ld g}{\partial p_a}
- (-1)^{|q||p|}
\frac{f \rd}{\partial p_a}\frac{\ld g}{\partial q^a}.
\end{eqnarray}

%\begin{eqnarray}
%\sbv{\iota_X \vartheta}{q^a} &=& \iota_X dq^a,
%\\
%\sbv{\iota_X \vartheta}{p_a} &=& \iota_X dp_a.
%\end{eqnarray}
%Therefore $\iota_X \sbv{\vartheta}{f} = \iota_X df$.

$X$ is called a \textit{symplectic vector field} if $L_X \omega = 0$, 
i.e., $d \iota_X \omega =0$.
%\begin{theorem}
Let $X, Y$ be symplectic vector fields. 
Then, $[X,Y]$ is the Hamiltonian vector field for 
$ - (-1)^{|X|}\iota_X \iota_Y \omega$.
%\end{theorem}
\begin{proof}
\begin{eqnarray}
\iota_{[X, Y]} \omega &=& 
(L_X \iota_Y - (-1)^{|X|(|Y|-1)} \iota_Y L_X ) \omega
= (-1)^{|X|} d \iota_X \iota_Y \omega
\nonumber \\
&=& - d [- (-1)^{|X|} \iota_X \iota_Y \omega].
\end{eqnarray}
%\hfill\qed
\end{proof}
If $X = X_f$, $Y = X_g$ are Hamiltonian vector fields, then the following equation holds,
\begin{eqnarray}
\iota_{[X_f, X_g]} \omega &=& 
(-1)^{|f| + n} d \iota_{X_f} \iota_{X_g} \omega.
\end{eqnarray}
Therefore, we get
\begin{eqnarray}
X_{\sbv{f}{g}}&=&  - [X_f, X_g].
\end{eqnarray}
Since $\iota_{X_f} \iota_{X_g} \omega
= - (-1)^{|f|n + |g|(n+1)} \omega(X_g, X_f)$, we easily obtain
\begin{eqnarray}
\sbv{f}{g} &=& (-1)^{|f| +n + 1} \iota_{X_f} \iota_{X_g} \omega
\nonumber \\
&=&
(-1)^{(|f|+ |g|)(n+1)} \omega(X_g, X_f)
\nonumber \\ 
&=&
(-1)^{|f||g| + n+1} \omega(X_f, X_g).
\end{eqnarray}

We consider the AKSZ construction on $\Map(\calX, \calM)$.
Let $D$ be a differential on $\calX$. It can be locally expressed as
$D = \theta^{\mu}\frac{\partial}{\partial \sigma^{\mu}}$.
We denote by $\hat{D}$ the vector field on $\Map(\calX, \calM)$
of degree $1$ which is induced by $D$.
%$\hat{D} = \theta^{\mu}\frac{\partial}{\partail \sigma^{\mu}}$
The following equation holds,
\begin{eqnarray}
\sbv{\iota_{\hat{D}} \mu_* \ev^* \vartheta}{\mu_* \ev^* f} 
&=& - \iota_{\hat{D}} \mu_* \ev^* d f
\nonumber \\ 
&=& \int d^{n+1}\sigma d^{n+1}\theta \bbd f(\sigma, \theta),
\end{eqnarray}
for $f \in C^{\infty}(\calM)$.
\begin{proof}
$S_0 = \iota_{\hat{D}} \mu_* \ev^* \vartheta$ is a Hamiltonian 
for the vector field $\hat{D}$, i.e.,
$X_{S_0} = \hat{D}$. Therefore, we have
\begin{eqnarray}
\sbv{\iota_{\hat{D}} \mu_* \ev^* \vartheta}{\mu_* \ev^* f} 
&=& 
\sbv{S_0}{\mu_* \ev^* f} 
\nonumber \\ &=&
(-1)^{|S_0|} \iota_{\hat{D}} \iota_{X_{\mu_* \ev^* f}} \bomega
\nonumber \\ &=&
- \iota_{\hat{D}} \mu_* \ev^* d f.
\end{eqnarray}

\end{proof}

\newcommand{\bibit}{\sl}
%%%%%%%%%%%%%%%%%%%%%%%%%%%%%%%%%%%%%%%%%%%%%%%%%%%%%%%%%%%%%%%%%%%%%
%%%%%%%%%%%%%%%%%%%%%%%%%%%%%%%  Refs. %%%%%%%%%%%%%%%%%%%%%%%%%%%%%%
%%%%%%%%%%%%%%%%%%%%%%%%%%%%%%%%%%%%%%%%%%%%%%%%%%%%%%%%%%%%%%%%%%%%%
%\newpage
%NEW MACRO FOR BIBLIOGRAPHY

%\section*{References}
%\noindent

\end{document}